\algnewcommand{\LineComment}[1]{\State \(\triangleright\) #1}
\newtheorem{definition}{Definition}[section]
\newtheorem{example}{Example}[section]
\newtheorem{proposition}{Proposition}[section]
\newtheorem{theorem}{Theorem}[section]
\newtheorem{corollary}{Corollary}[section]
\newtheorem{example-1}{KB Example}[section]
\newtheorem{example-2}{ST Example}[section]
\newcommand{\Algphase}[1]{%
  \Statex \hrulefill
  \Statex \textbf{\underline{#1}}
}
\def\calf{\mathcal{F}}
\def\avar{\textsf{AVar}}
\newcommand{\constantSet}{\mathcal{C}}
\newcommand{\predicateSet}{\mathcal{P}}
\newcommand{\variableSet}{\mathcal{V}}
\newcommand{\groundLiteralSet}{\mathcal{G}}
\newcommand{\interpretationSet}{\mathcal{I}}
\newcommand{\interpretation}{I}
\newcommand{\semiLattice}{\mathcal{M}}
\newcommand{\delay}{\Delta t}
\newcommand{\satisfaction}{\models}
\newcommand{\satisfactionAtTime}[1]{\models_{#1}}
\newcommand{\program}{\Pi}
\newcommand{\fixpointOperator}{\Gamma}
\newcommand{\groundedLiteral}{g}
\newcommand{\nonGroundedLiteral}{\ell}
\def\avar{\textsf{AVar}}
\newcommand{\shortlogic}{LAT}
\newcommand{\logic}{LAT logic}
\def\avar{\textsf{AVar}}
\begin{document}

\title[\shortlogic~ Logic for Non-Markovian Reasoning]{Lattice Annotated Temporal (\shortlogic) Logic for Non-Markovian Reasoning}

\author{Kaustuv Mukherji}
\email{kmukherj@syr.edu}
\orcid{0000-0001-8044-1110}

\author{Jaikrishna Manojkumar Patil}
\email{jpatil01@syr.edu}
\orcid{0009-0000-3745-9147}

\author{Dyuman Aditya}
\email{daditya@syr.edu}
\orcid{0000-0002-4889-3499}

\author{\\Paulo Shakarian}
\email{pashakar@syr.edu}
\orcid{0000-0002-3159-4660}

\affiliation{%
  \institution{Syracuse University}
  \city{Syracuse}
  \state{New York}
  \country{USA}
  }

\author{Devendra Parkar}
\email{dparkar1@asu.edu}
\orcid{0009-0009-0133-8875}

\author{Lahari Pokala}
\email{lpokala@asu.edu}
\orcid{0009-0007-4199-3255}

\affiliation{%
  \institution{Arizona State University}
  \city{Tempe}
  \state{Arizona}
  \country{USA}
  }

\author{Clark Dorman}
\email{clark.dorman@ssci.com}
\affiliation{%
  \institution{Scientific Systems Company, Inc.}
  \city{Woburn}
  \state{Massachusetts}
  \country{USA}
  }

\author{Gerardo I. Simari}
\email{gis@cs.uns.edu.ar}
\orcid{0000-0003-3185-4992}
\affiliation{%
  \institution{Department of Computer Science and Engineering, Universidad Nacional del Sur (UNS) \&
  Institute for Computer Science and Engineering (ICIC UNS-CONICET)}
  \city{Bahia Blanca}
  \country{Argentina}
  }

\renewcommand{\shortauthors}{K. Mukherji et al.}

\begin{abstract}
  We introduce Lattice Annotated Temporal (\shortlogic) Logic, an extension of Generalized Annotated Logic Programs (GAPs) that incorporates temporal reasoning and supports open-world semantics through the use of a lower lattice structure. This logic combines an efficient deduction process with temporal logic programming to support non-Markovian relationships and open-world reasoning capabilities.  The open-world aspect, a by-product of the use of the lower-lattice annotation structure, allows for efficient grounding through a Skolemization process, even in domains with infinite or highly diverse constants. 
  We provide a suite of theoretical results that bound the computational complexity of the grounding process, in addition to showing that many of the results on GAPs (using an upper lattice) still hold with the lower lattice and temporal extensions (though different proof techniques are required). Our open-source implementation, PyReason, features modular design, machine-level optimizations, and direct integration with reinforcement learning environments. Empirical evaluations across multi-agent simulations and knowledge graph tasks demonstrate up to three orders of magnitude speedup and up to five orders of magnitude memory reduction while maintaining or improving task performance. Additionally, we evaluate \logic’s value in reinforcement learning environments as a non-Markovian simulator, achieving up to three orders of magnitude faster simulation with improved agent performance, including a 26\% increase in win rate due to capturing richer temporal dependencies. These results highlight \logic’s potential as a unified, extensible framework for open world temporal reasoning in dynamic and uncertain environments. Our implementation is available at: \textbf{\url{pyreason.syracuse.edu}}.
\end{abstract}


\begin{CCSXML}
<ccs2012>
   <concept>
       <concept_id>10003752.10003790.10003795</concept_id>
       <concept_desc>Theory of computation~Constraint and logic programming</concept_desc>
       <concept_significance>500</concept_significance>
       </concept>
   <concept>
       <concept_id>10003752.10003790.10003794</concept_id>
       <concept_desc>Theory of computation~Automated reasoning</concept_desc>
       <concept_significance>300</concept_significance>
       </concept>
   <concept>
       <concept_id>10003752.10003790.10003793</concept_id>
       <concept_desc>Theory of computation~Modal and temporal logics</concept_desc>
       <concept_significance>500</concept_significance>
       </concept>
   <concept>
       <concept_id>10003752.10010070.10010071.10010261</concept_id>
       <concept_desc>Theory of computation~Reinforcement learning</concept_desc>
       <concept_significance>300</concept_significance>
       </concept>
 </ccs2012>
\end{CCSXML}

\ccsdesc[500]{Theory of computation~Constraint and logic programming}
\ccsdesc[300]{Theory of computation~Automated reasoning}
\ccsdesc[500]{Theory of computation~Modal and temporal logics}
\ccsdesc[300]{Theory of computation~Reinforcement learning}

\keywords{Logic programming, 
Generalized annotated program, 
Temporal logic, 
First-order logic, 
Open world reasoning,
Reinforcement learning,
Non-markovian dynamics.}


\maketitle

\section{Introduction \label{sec:intro}}

Recent work in Artificial Intelligence (AI) has looked at the use of logic programming to bring explainability and robustness to modern AI systems~\cite{bueff2023deep, dai2019bridging, cropper2020turning}. 
Temporal logic programming~\cite{dekhtyar99,APTL,martiny2016pdt,doder2024probabilistic} provides the ability to directly model non-Markovian temporal relationships (i.e., the agent's behavior can be dependent on multiple previous time steps) due to its ability to reason about if-then statements that span multiple time steps (see Figure~\ref{fig:intro-geo-program} for an example logic program). This capability would make temporal logic programming a suitable candidate to replace standard Markov Decision Processes (MDPs) prevalently found in various reinforcement learning systems. However, temporal logic programming, such as APT logic~\cite{APTL}, is intractable not only in terms of time but also space due to grounding, thereby limiting its applicability to such use-cases.  
In this paper, we overcome the problem with \textbf{L}attice \textbf{A}nnotated \textbf{T}emporal Logic or \logic. 
This logic is created by adding temporal extensions to a fragment of Generalized Annotated Programs (GAPs)~\cite{ks92}, thereby permitting direct modeling of non-Markovian temporal relationships, which allows exact but tractable inference in a variety of cases.  We retain the capabilities to represent uncertainty by virtue of the annotations, but leverage a lower-lattice structure (as opposed to an upper-lattice structure in GAPs), which not only enables open-world reasoning but also affords an efficient Skolemization process, enabling scalable grounding.  We provide a theoretical foundation for \logic~and a suite of experimental results, including a demonstration of effectiveness as a replacement for MDPs in reinforcement learning (RL) applications.

To make some of our ideas more concrete, we introduce our first running example,  modeled after the use cases explored in our experiments. 
In Figure~\ref{fig:intro-geo-diagram}, we consider a simple example where two agents, one on foot patrol (shown with an icon of a person) and another in a patrol car (car icon) move in a geospatial area (marked by the square). 
The speed of the car is double that of the agent on foot. Figure~\ref{fig:intro-geo-program} shows an excerpt from the logic program that governs the agents' movements;  
note that in the syntax of annotated logic, truth values follow the atoms after a colon.  
While we will describe lower-lattice based truth values in the technical preliminaries, we note that $[1,1]$ denotes truth, $[0,0]$ denotes falsehood, and $[0,1]$ denotes total uncertainty---any subset of the unit interval can be a truth value, allowing for expressions of various levels of first and second-order uncertainty. 
In Figure~\ref{fig:intro-geo-program}, the first two rules show how agents may move in different directions with different speeds, creating new constants in space when rules are fired. The last two rules help update the truth values to false when an agent moves away from the old location. Here, variable $A$ is used for agents, while $L_1, L_2$ can be grounded with constants in space. The $\delay$ is used here to capture the differing speeds between agents, showing how the temporal component can be used to capture the dynamics of a non-Markovian environment. Note that the $\delay$ between antecedent and consequent can be heterogeneous within the logic program.  For this example, at $t=0$, the patrol car chooses to move left, and the foot patrol decides to go right. Following these action choices, the aforementioned rules are grounded and in effect two new constants are grounded in space, as shown in Figure~\ref{fig:intro-geo-diagram} with red pins, after one and two timesteps.


\begin{figure}[tb]
    \begin{center}
        \fbox{
            \parbox{0.65\linewidth}{
            \begin{align}
                \Pi_{geo}=\{~&at(A,L_2):[1,1] \xleftarrow[\delay = 1]~~ at(A,L_1):[1,1] \wedge moveLeft(A):[1,1] \wedge speed(A,fast):[1,1] \wedge left(L_1,L_2):[1,1]~,  \notag \\
                & \text{If fast-moving agent $A$ is at $L_1$ and moves left, it will be at $L_2$ (left of $L_1$) after one time unit.}  \notag \\
                &at(A,L_2):[1,1] \xleftarrow[\delay = 2]~~ at(A,L_1):[1,1] \wedge moveRight(A):[1,1] \wedge speed(A,slow):[1,1] \wedge right(L_1,L_2):[1,1]~,  \notag \\
                & \text{If slow-moving agent $A$ is at $L_1$ and moves right, it will be at $L_2$ (right of $L_1$) after two time units.}  \notag \\
                &at(A,L_1):[0,0] \xleftarrow[\delay = 1]~~ at(A,L_1):[1,1] \wedge moveLeft(A):[1,1]~, \notag \\
                & \text{If agent $A$ is at $L_1$ and moves left, it will no longer be at $L_1$ after one time unit.}  \notag \\
                &at(A,L_1):[0,0] \xleftarrow[\delay = 1]~~ at(A,L_1):[1,1] \wedge moveRight(A):[1,1]~\}\notag \\
                & \text{If agent $A$ is at $L_1$ and moves right, it will no longer be at $L_1$ after one time unit.}  \notag
            \end{align}
            }
          }
    \end{center}
    \caption{Excerpt of logic program $\Pi_{geo}$ for the geospatial example shown in Figure~\ref{fig:intro-geo-diagram}. English translations for each rule are also provided. \label{fig:intro-geo-program}}
    \Description{Excerpt of logic program $\Pi_{geo}$ for the geospatial example shown in Figure~\ref{fig:intro-geo-diagram}. English translations for each rule are also provided.}
\end{figure}

\begin{figure}[tb]
    \begin{center}
        \begin{subfigure}[t]{0.302\columnwidth}
            \vskip 0pt
            \includegraphics[width=\linewidth]{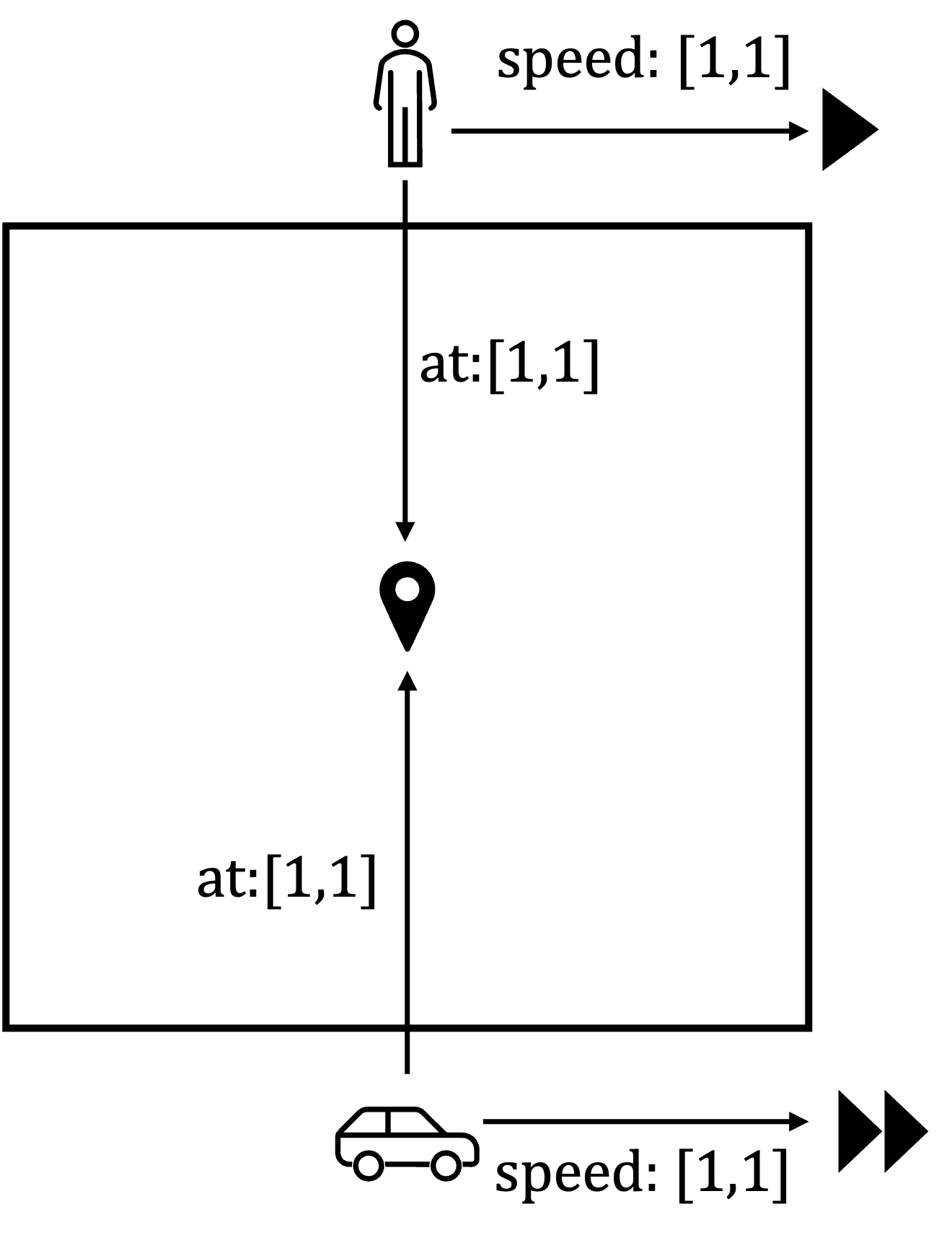}
            \caption*{t=0}
        \end{subfigure}
        \begin{subfigure}[t]{0.311\columnwidth}
            \vskip 0pt
            \includegraphics[width=\linewidth]{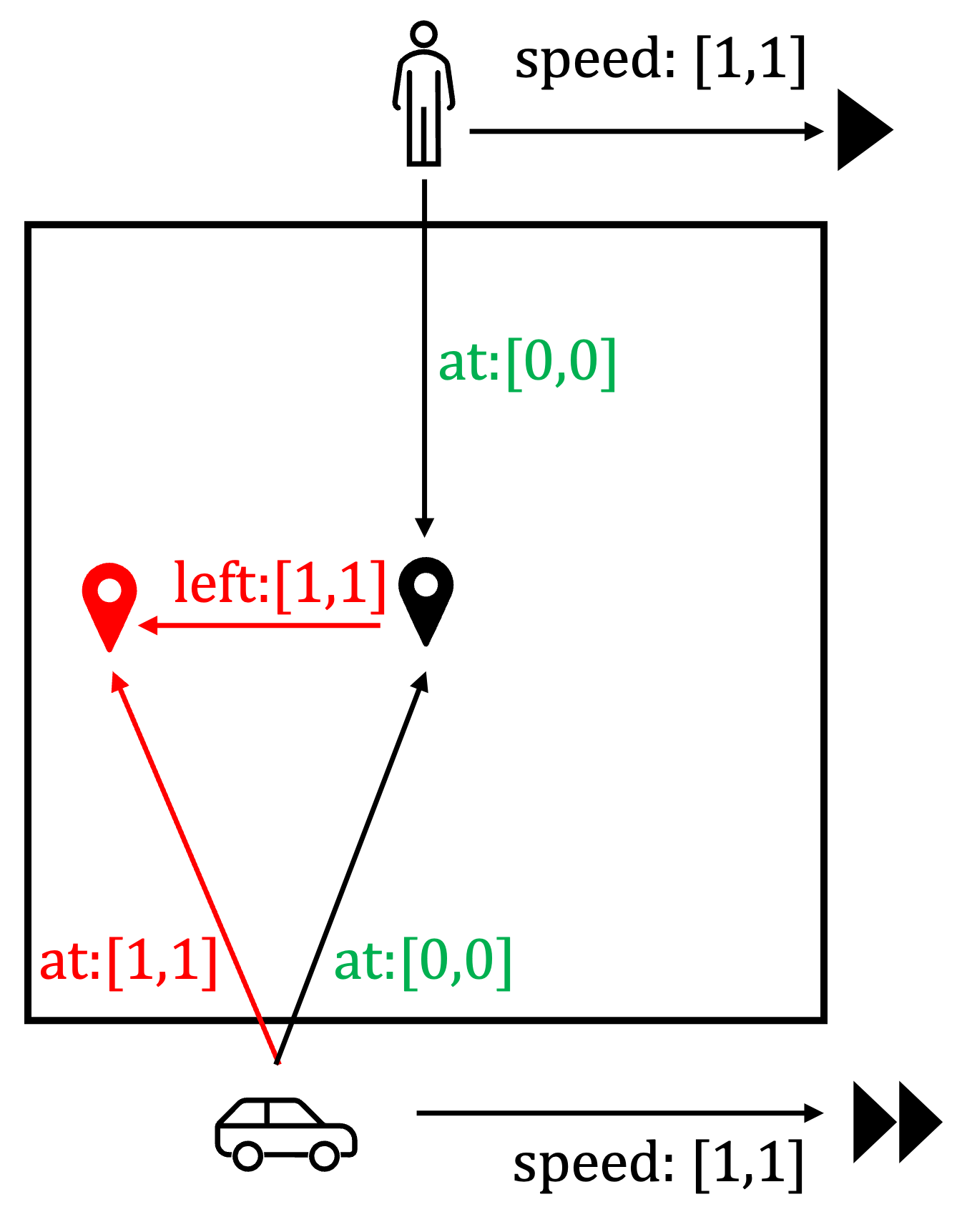}
            \caption*{t=1}
        \end{subfigure}
        \begin{subfigure}[t]{0.317\columnwidth}
            \vskip 0pt
            \includegraphics[width=\linewidth]{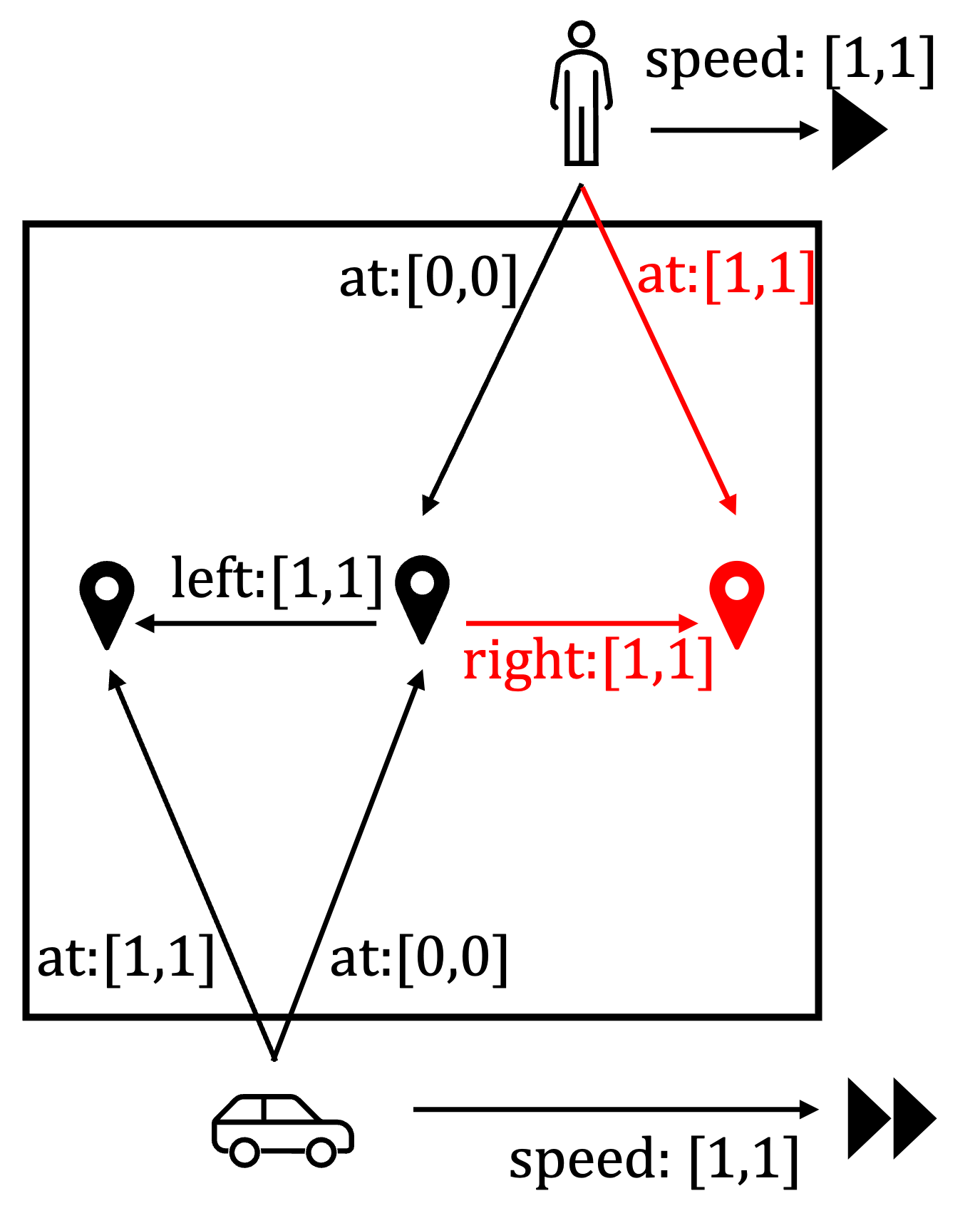}
            \caption*{t=2}
        \end{subfigure}
    \end{center}
    \caption{Geospatial example: Creation of new atoms during inference for two time steps (left to right). Newly created atoms at each time point are shown in red. Existing atoms whose annotations change are marked in green.}
    \Description{Geospatial example: Creation of new atoms during inference for two time steps (left to right). Newly created atoms at each time point are shown in red. Existing atoms whose annotations change are marked in green.}
    \label{fig:intro-geo-diagram}
\end{figure}

To build the capabilities presented throughout the paper in these examples, we provide the following contributions:
\begin{enumerate}
    \item We introduce \logic, an extension of Generalized Annotated Logic Programs (GAPs) that includes temporal extensions and leverages a lower lattice instead of an upper lattice -– a departure from the traditional GAPs on which it is based. We formalize the syntax and semantics and show how using a lower lattice for annotated logic facilitates open-world reasoning. We reprove results for satisfaction, consistency, and entailment for the extended logic with lower lattice and temporal aspects. Then we introduce the fixpoint operator that allows for deductive reasoning while mapping timepoint literal pairs to timepoint literal pairs, which readily supports non-Markovian reasoning. We show that the use of the lower lattice not only allows for open-world reasoning but also a natural form of Skolemization of constants and atoms that has applications to geospatial and knowledge graph reasoning tasks. We show that theoretical results on correctness for GAPs proven in~\cite{shakarian2022extensions} hold for our temporal extensions. In section~\ref{sec:skolemization}, we analytically bound the creation of new ground atoms during reasoning, which suggests a significant reduction in grounding resulting from this form of Skolemization, and can provide significant speed-up in certain applications. Later, in Sections~\ref{sec:geospatial} and~\ref{sec:kg-comp}, we experimentally show that this result is a loose upper bound for practical domains where constants and atoms are diverse and sparse.

    \item We describe an implementation of the logic~fixpoint operator with highly optimized data structures that can leverage the Skolemization properties. The implementation is built on modern Python and is machine code optimized to handle graph-based data structures compatible with all common graph-based databases. Grounding and inference processes are parallelized across threads to leverage computing resources optimally, thereby reducing running time. It is also modular, where new constants and rules can be added to the existing program at any point. Our temporal extensions also allow rules whose effects are observed at different intervals from when they are fired, and this directly supports building logic programs for non-Markovian dynamics. Among other applications, all of these capabilities make this implementation especially suitable for addressing common challenges seen in RL use cases. Our implementation, called PyReason, is available as an open-source project at \url{https://pyreason.syracuse.edu}. Various software and hardware acceleration techniques used are detailed in Section~\ref{sec:software-hardware}.

    \item Our experimental evaluation encompasses two distinct applications: a multi-agent geospatial simulation and knowledge graph completion. The empirical results corroborate our theoretical analysis, demonstrating reductions of up to three orders of magnitude in the size of groundings for multi-step reasoning. Notably, while our theoretical analysis provides a conservative upper bound, practical observations show significantly fewer ground atoms with higher numbers of fixpoint applications. This results in enhanced scalability in both computational speed and memory efficiency, further improving the approach’s applicability in real-world scenarios.

    \item We provide an extensive suite of experiments showing how Skolemization arising from the lower lattice of annotations provides multiple orders of magnitude of speedup when leveraged by efficient data structures present in our implementation. In Section~\ref{sec:geospatial}, we show the efficacy of our Skolemization-based approach where we observe several orders of speedup - growing significantly with the number of ground atoms. Similarly, memory reduction is observed to increase with a higher number of ground atoms, giving up to five orders of magnitude of savings for the highest setting in our experiment. Section~\ref{sec:kg-comp} shows the scaling capability of our approach on four popular knowledge bases: WN18RR and FB15k-237~\cite{wn18rr_fb15k237}, YAGO03-10~\cite{yago}, and UMLS~\cite{umls}, with varying numbers of rules. Our approach is shown to always provide a speedup and memory reduction, which slowly settles towards the base resources as large number of constants, rules, and reasoning steps slowly moves towards a fully connected network. We also do some initial exploration of the potential of multi-step reasoning by showing how two-step reasoning can provide better results on information retrieval tasks across subsets of all four datasets.

    \item Finally, in Section~\ref{sec:rl-expt} we present a suite of experimental results where we leverage both the speedups described above along with the temporal characteristics of the logic to allow for efficient non-Markovian simulation of an environment for use in training a reinforcement learning agent.  We show up to three orders of magnitude speedup compared with two simulation environments while maintaining agent performance. When non-Markovian capabilities are used, we show a notable 26\% improvement in agent win rate, compared to when limited by the Markov assumption.
\end{enumerate}

The rest of the paper is organized as follows. In Section~\ref{sec:related}, we review some related, well-established logics, discuss the importance of capturing non-Markovian dynamics for modern applications, and other relevant work. 
In Section~\ref{sec:prelims}, we describe the syntax and semantics of \logic, as well as provide a running example, which is an excerpt from domains used in our experiments. In Section~\ref{sec:formal}, we provide theoretical results on our temporal extensions to GAPs and resulting performance gains. We detail our open-source implementation of \logic~in Section~\ref{sec:implementation}. Sections~\ref{sec:geospatial}~and~\ref{sec:kg-comp} contain two sets of experiments that verify the theoretical performance gains derived in Section~\ref{sec:skolemization}, and show the scaling capability of our implementation on two diverse applications. In Section~\ref{sec:rl-expt}, we show how our implementation is amenable to be used effectively as a simulator for reinforcement learning applications. Finally, in Section~\ref{sec:conclusion}, we summarize our findings and share our thoughts on future work.

\section{Related Work}
\label{sec:related}

Logic programming emerged decades ago as a foundational paradigm in artificial intelligence, enabling the formal representation of knowledge through declarative statements and supporting automated reasoning via deductive inference~\cite{kowalski88}. This approach facilitated the systematic derivation of consequences from encoded knowledge, significantly advancing the capabilities of early AI systems. Common implementations of logic programming like Prolog~\cite{colmerauer1982prolog} and Datalog~\cite{datalog} are designed for non-temporal applications and do not support uncertainty or annotations -- While it is always possible to devise special syntax (predicates, constants, etc.) to encode some form of temporal reasoning, modeling heterogeneous non-Markovian relationships in such frameworks is not part of their design.
Another popular branch of logic inspired by Prolog is Answer Set Programming (ASP)~\cite{marek1999stable, niemelä99}, which employs stable model semantics~\cite{gl8811988stable} and SAT-inspired algorithms, supports non-monotonic reasoning, and can handle incomplete information, making it popular for knowledge representation, robotics, and bioinformatics~\cite{erdem2016applications}. However, classical ASP or its equivalent reasoning formalisms like Equilibrium logic~\cite{eqlogic} do not support temporal reasoning. Temporal Equilibrium Logic (TEL)~\cite{teqlogic} extends them by using modal temporal operators. However, inference in TEL is intractable~\cite{teqcomplexity}, making it unsuitable for large problems, thus precluding it from the large-scale temporal reasoning experiments carried out in this work.
Other notable temporal logics include Linear Temporal Logic~\cite{ltl77} and Computation Tree Logic~\cite {clarke1986automatic}, but they are not designed for deductive reasoning; rather, they are primarily used for formal verification applications. We note that the semantic structures often used for the verification of such logics—e.g., Kripke structures~\cite{cresswell2012new} and Markov Decision Processes for probabilistic variants~\cite{hansson1994logic, CLEAVELAND2005316}—inherently incorporate Markov assumptions, which we do not make in this formalism.

Similarly, in modern machine learning systems, despite numerous advances in reasoning about graphs and environments, the environment is typically modeled as a Markov Decision Process (MDP), where the next state and reward depend solely on the current state and action. However, in real-world applications, this is seldom the case~\cite{gupta2021non} and, as a result, it is impossible to make an optimal decision based only on the current state~\cite{schmidhuber1990reinforcement}. Recent work on Reinforcement Learning (RL)~\cite{chandak2024reinforcement} illustrates the errors introduced when applying standard RL algorithms like Q-learning to non-Markovian environments. RL algorithms are well-known for their substantial data and training time requirements. These demands become even more pronounced when dealing with environments characterized by non-Markovian dynamics. Most works have attempted to tackle this issue by making the reward non-markovian, but keeping the underlying dynamics of the system markovian~\cite{gaon2020reinforcement, agarwal2023reinforcement}. Meanwhile, Gupta et al.~\cite{gupta2021non} approximates non-Markovian dynamics as a fractional dynamical system to reduce data demands of model-free RL. While these have shown promise, and despite the consensus for the need to incorporate non-Markovian dynamics in ML systems, significant progress has remained elusive due to prohibitive data and time requirements. An important direction for future work would be to use algorithms designed to learn non-Markovian policies on non-Markovian structures.

Classical logic programming approaches following MDPs cannot represent non-Markovian time relationships. The idea of allowing non-Markovian temporal dependencies -- specifically, rules featuring heterogeneous time lags -- was introduced in Annotated Probabilistic Temporal (APT) Logic~\cite{APTL}, which extended concepts from Temporal Probabilistic Logic Programs (TPLP)~\cite{dekhtyar1999temporal} and was subsequently developed further in Probabilistic Doxastic Temporal (PDT) Logic~\cite{martiny2016pdt} and by Doder et al.~\cite{doder2024probabilistic}. APT logic demonstrated that such rule structures enable expressive capabilities not achievable by Markov Decision Processes (MDPs). However, the deductive inference presented in these works is intractable. In this paper, we retain the expressive strengths of these non-Markovian constructs within our logical framework, while providing tractable semantics that facilitate efficient reasoning. We also investigate the use of non-Markovian policy learning with annotated logic, a direction that, to our knowledge, has not been previously pursued in the literature, likely due to the intractable nature of APT logic inference.

Finally, in this paper we leverage a Skolemization procedure for efficiency that is possible due to our use of lower-lattice semantics. Skolemization~\cite{Ginsberg} is a technique traditionally used for automatic theorem proving, and has been employed in predicate calculus for the substitution of existentially quantified variables with Skolem function applications for several decades~\cite{morgan1974symbolic,loveland1978automated}. This process has subsequently been applied in the conversion of dynamic semantics into constants and functions~\cite{schubert1999dynamic}, and the generation of normal forms for logical formulae~\cite{akama2011meaning}. More recently, it has been used for translation in Answer Set Programming~\cite{diller-etal-2019-making}. However, we are not aware of any work that studies Skolemization with respect to annotated logic, including with respect to lower-lattice semantics.
A major advantage of deductive or exact reasoning is that it ensures consistency and is often considered explainable~\cite{marques2024logic}. However, black-box models, approximate or inexact reasoners, have become prevalent in real-world applications due to their flexibility and efficiency across domains involving diverse data structures and types~\cite{he2016deep,vaswani2017attention}. The proposed need-based grounding technique in \logic, based on Skolemization, significantly reduces the footprint of a program, allowing for a speed-up, which enables more extensive applications.

\section{Technical Preliminaries}
\label{sec:prelims}

\logic~extends Generalized Annotated Logic programs (GAPs)~\cite{ks92} by incorporating lower-lattice and temporal components.  The lower lattice semantics were introduced in our prior work~\cite{shakarian2022extensions}, which did not include temporal extensions, implementation, or many of the theoretical results in this paper. This framework employs GAPs with a lower lattice to model open-world scenarios, allowing atoms to be associated with a range of values beyond just ``true'' or ``false''. The temporal extensions facilitate the representation of dynamic environments, while the underlying semantic structures and fixpoint approach enhance explainability in complex systems.

In subsequent subsections, we introduce the syntax and semantics of this logic. We also provide a running example that builds on the geospatial example introduced in Section~\ref{sec:intro}.


\subsection{Syntax \label{sec:syntax}}

We consider a first-order logical language with finite sets $\constantSet$, $\predicateSet$, and $\variableSet$ of constant, predicate, and variable symbols, respectively.
Following convention, we use uppercase letters for variables and lowercase for constants.

If a predicate symbol is directly applied to a list comprised of elements from the set of variables and constants ($\variableSet \cup \constantSet$), it forms an atom.

\begin{definition}[Atom]
    If $E_1,..., E_n\in \variableSet \cup \constantSet$, and $p \in \predicateSet$, then $p(E_1, \ldots, E_n)$ is an atom.
\end{definition}


\begin{example}[Atom \label{ex:atom}]
    Consider the geospatial example shown in Figure~\ref{fig:intro-geo-diagram}. From the instance at $t=0$, we can have the following atoms:
    \begin{enumerate}
        \item Unary atoms: $\texttt{agent}(A)$, $\texttt{location}(L)$
        \item Binary atoms: $\texttt{at}(A, L)$, $\texttt{speed}(A, S)$
    \end{enumerate}
    Here, $A$, $L$, and $S$ are variables, and \texttt{agent}, \texttt{location}, \texttt{at}, and \texttt{speed} are predicate symbols.
\end{example}

An atom, made up of only variables, is called a non-ground atom. If it contains both variables and constants, it is a partially ground atom. A fully instantiated atom, without any variables, is called a ground atom.

\begin{definition}[Ground Atom]
    If $c_1,..., c_n \in \constantSet$ and $p \in \predicateSet$, then $p(c_1,..., c_n)$ is a ground atom.
\end{definition}

\begin{example}[Ground Atom \label{ex:ground-atom}]
As mentioned previously in Section~\ref{sec:intro}, the geospatial example domain has the following constants for typed variables
$A, L,$ and $S$:

\begin{center}
\begin{tabular}{lll}
  \textbf{\underline{$A$}} & \textbf{\underline{$L$}} & \textbf{\underline{$S$}} \\
  $footPatrol$ & $locMid$ & $slow$ \\
  $patrolCar$ & $locLeft$ & $fast$ \\
            & $locRight$ & \\
\end{tabular}
\end{center}

Grounding the atoms in Example~\ref{ex:atom} using these constants, we get:
\begin{enumerate}
    \item Unary ground atoms: \texttt{agent$(footPatrol),$ agent$(patrolCar),$ location$(locMid),$ location$(locLeft),$ location$(locRight)$}.
    \item Binary ground atoms: \texttt{at$(footPatrol, locMid),$ at$(patrolCar, locMid),$ speed$(footPatrol, slow),$ \\speed$(patrolCar, fast),$ \ldots}
\end{enumerate}
\end{example}


Following~\cite{ks92}, we define a lattice structure $\semiLattice$ where elements consist of subsets of the real unit interval, where $[0,1]$ (representing total uncertainty) is the lowest element of the lattice while the upper elements are all intervals $[\ell,u]$ where $\ell=u$; such elements include $[1,1]$ (total truth) and $[0,0]$ (total falsehood). Figure~\ref{fig:lowerLattice} illustrates an example of such a structure. One specific function we define is ``$\neg$'' for negation, which is used in the semantics of \cite{ks92}.  For a given $[l,u]$, $\neg([l,u])=[1-u, 1-l]$.  Note that we also use the symbol  $\neg$ as a connector in our first-order language (following the formalism of \cite{ks92}). Now, we define annotations for atoms. We assume the existence of a set $\avar$ of variable symbols ranging over $\semiLattice$ and a set $\calf$ of function symbols, each of which has an associated arity.

\begin{definition}[Annotation]
\phantom{...}
\begin{enumerate}
    \item Any member of $\semiLattice \cup \avar$ is an annotation.
    \item If $f$ is an $n$-ary function symbol over $\semiLattice$ and $m_1,\ldots,m_n$ are annotations, then $f(m_1,\ldots,m_n)$ is an annotation.
\end{enumerate}
\end{definition}

In annotated logic, atoms are associated with elements of the lattice structure, which enables open-world reasoning. We thus define, as in~\cite{mancalog13,shakarian2022extensions}, \textit{annotated atoms} as $a:\mu$, where $a$ is an atom and $\mu$ is an element of the lattice. An \textit{annotated literal} is either an annotated atom or its negation. Functions and variables are also permitted in the annotations (see above references for further details).

\begin{figure}[tb]
    \begin{center}
        \includegraphics[width=0.75\linewidth, trim = 20 8 20 8, clip]{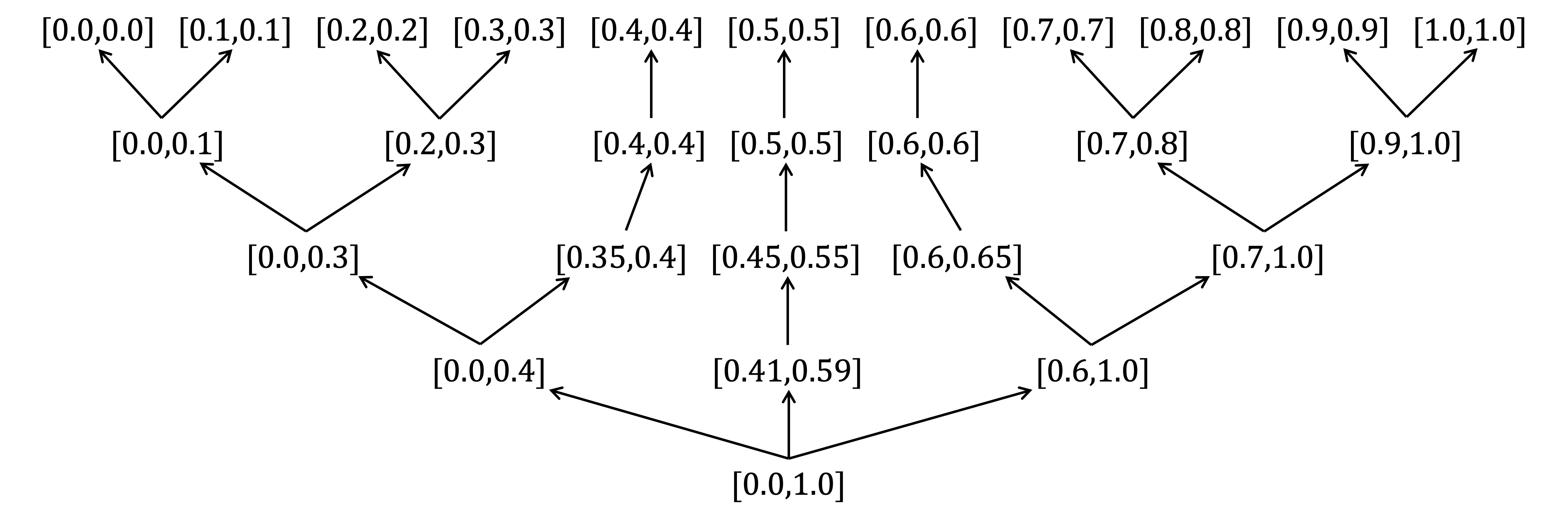}
    \end{center}
    \caption{Example of a lower semi-lattice structure where the elements are intervals in $[0,1]$.}
    \Description{Example of a lower semi-lattice structure where the elements are intervals in $[0,1]$.}
    \label{fig:lowerLattice}
\end{figure}

\begin{example}[Annotation\label{ex:annotation}]

Continuing with our example and, as shown in Figure~\ref{fig:intro-geo-diagram}, an example of an annotated ground atom is \texttt{speed$(footPatrol, slow):[1,1]$}. Its equivalent negation is \texttt{$\neg$speed$(footPatrol, slow):[0,0]$}.
\end{example}


\noindent
In order to extend the logic to make statements about time, and we follow the convention of~\cite{shakarian2012annotated} and define \textit{temporal annotated facts (TAFs)}. 

\begin{definition}[Temporal Annotated Facts (TAFs)]
For a literal $a$, an annotation $\mu$, and a time point $t$, $a:(\mu,t)$ is a temporal annotated fact (TAF).
\end{definition}

We choose to use the notation $a:(\mu,t)$ for TAFs as it makes it quite distinct from annotated atoms without a temporal component (typically denoted by $a:\mu$). We note that, in a previous work~\cite{bavikadisea}, $a:\mu_t$ was used to denote TAFs; however, we have chosen to move away from that notation as subscripts are also used sometimes to differentiate between different literals, constants, components of a vector, etc.

\begin{example}[Temporal Annotated Facts (TAFs) \label{ex:tafs}]
    In our running example, we have three time points, $t = 0, 1, 2$. As the two agents move, their locations are represented by TAFs:
    \begin{center}
    \begin{tabular}{ll}
    $t=0$ & \texttt{at$(footPatrol, locMid):([1,1],0)$} \\
    & \texttt{at$(patrolCar, locMid):([1,1],0)$} \\
    $t=1$ & \texttt{at$(footPatrol, locMid):([0,0],1)$} \\
    & \texttt{at$(patrolCar, locMid):([0,0],1)$} \\
    & \texttt{at$(patrolCar, locLeft):([1,1],1)$} \\
    $t=2$ & \texttt{at$(footPatrol, locMid):([0,0],2)$} \\
    & \texttt{at$(patrolCar, locMid):([0,0],2)$} \\
    & \texttt{at$(patrolCar, locLeft):([1,1],2)$} \\
    & \texttt{at$(footPatrol, locRight):([1,1],2)$} \\
    \end{tabular}
    \end{center}
Note how the ``truth'' about the location of ``$footPatrol$'' is uncertain at $t=1$. However, open-world reasoning supports complete uncertainty, and hence if we were to ground any arbitrary point between the starting and final geo-locations of ``$footPatrol$'' at $t=1$, say, $locMidRight$,  then we can represent it with the TAF:
    \begin{center}
        \texttt{at$(footPatrol, locMidRight):([0,1],1)$}.
    \end{center}
    
\end{example}

\noindent
Annotated literals serve as the building blocks of a GAP rule. We propose the following definition of a modified version of the GAP rules defined in~\cite{shakarian2022extensions}:  

\begin{definition}[GAP Rule]
If $\nonGroundedLiteral_0:\mu_0, \nonGroundedLiteral_1:\mu_1,\ldots,\nonGroundedLiteral_m:\mu_m$ are annotated literals s.t.\ for all $i,j \in 1,...,m$, $\nonGroundedLiteral_i\not\equiv \nonGroundedLiteral_j$, then:
\begin{align}
r\equiv \nonGroundedLiteral_0:\mu_0 &\xleftarrow[\delay]{} \nonGroundedLiteral_1:\mu_1\,\wedge\,\ldots\wedge\, \nonGroundedLiteral_m:\mu_m, \text{ with } \delay \geq 0
\end{align}
is called a \emph{GAP rule}.
We will use the notations $\textit{head}(r)$, $\textit{delay}(r)$, and $\textit{body}(r)$ to denote $\nonGroundedLiteral_0$, $\delay$, 
and $\{\nonGroundedLiteral_1,\ldots,\nonGroundedLiteral_m\}$, resp. 
When $m=0$ ($\textit{body}(r)=\emptyset$), the rule is called a \emph{fact}. 
A GAP rule is \emph{ground} iff there are no occurrences of variables in it. 
\end{definition}
Intuitively, $\delay$ is the temporal gap between when the rule is fired and when its effects hold. 
If $\textit{body}(r)$ is satisfied at time $t$, then the annotation of $\nonGroundedLiteral_0$ changes to $\mu_0$ at time $t+\delay$.

\begin{example}[GAP Rule\label{ex:gap-rule}]
Examples of GAP rules, and their English language equivalent, for the geospatial example are shown in Figure~\ref{fig:intro-geo-program}.
\end{example}

\label{sec:prelim-immediate}
We call rules with $\delay = 0$ immediate rules, which are applied as soon as the body is satisfied. Immediate rules relax the need for interdependent ground rules to be separated by not only applications of fixpoint operators but also actual time points. We use $\delay = 0$ to approximate infinitesimal time intervals. We elaborate on how the implementation achieves this in Section~\ref{sec:immediate}.

A temporal logic program $\Pi$ is a finite set of GAP rules that can be used to capture expert knowledge of an environment's dynamics or learned from data. Incorporating temporal constructs into literals and GAP rules enables explicit representation and reasoning over temporal dependencies, facilitating non-Markovian reasoning by capturing historical states and temporal relations beyond the current state. 

\subsection{Semantics \label{sec:semantics}}

Annotated logic programs are associated with interpretations that map literal-time point pairs to annotations.
Given a program, the intuition is that this structure (which, as shown below, can be produced as output of deductive inference) can directly describe changes in the environment or knowledge. 
Interpretations are symbolic, and hence support explainability based on the underlying logical language.
We now provide a formal definition of interpretations and the associated satisfaction relationship.

\begin{definition}[Interpretation]
\label{def:interp}
Let $\Pi$ be a program, $\groundLiteralSet$ the set of all ground literals, and 
$T = t_1, ..., t_{\textit{max}}$ a sequence of time points. 
An interpretation $\interpretation$ is a mapping  $\groundLiteralSet \times T \to \semiLattice$ such that for all literals~$l$, we have 
$\interpretation(l, t) = \neg(\interpretation(\neg l, t))$.
\end{definition}

\begin{example}[Interpretation]
\label{ex:interpretation}
    Consider the TAF \texttt{at$(patrolCar, locMid):([1,1],0)$} from Example~\ref{ex:tafs}. The interpretation can be represented as: $\interpretation(\texttt{at}(patrolCar, locMid), 0) = [1,1]$; 
    its negation is: $(\interpretation(\neg \texttt{at}(patrolCar, locMid), 0)) = [0,0]$.
\end{example}

The set~$\interpretationSet$ of all interpretations can be partially ordered via the ordering:
$\interpretation_1\preceq \interpretation_2$ iff for all ground literals $\groundedLiteral \in \groundLiteralSet$ and time $t$, 
$\interpretation_1(\groundedLiteral, t)\sqsubseteq \interpretation_2(\groundedLiteral, t)$. 
This set forms a complete lattice under the $\preceq$ ordering.
An interpretation is said to satisfy an annotated ground literal at time $t$ if its annotation is contained in the sub-lattice of its assigned value.

\begin{definition}[Satisfaction]
An interpretation $\interpretation$ \emph{satisfies} an annotated ground literal $\groundedLiteral:\mu$ at time $t$, denoted 
$\interpretation \satisfactionAtTime{t} \groundedLiteral:\mu$, iff $\mu \sqsubseteq \interpretation( \groundedLiteral, t)$. 
\end{definition}

We can then extend the definition of satisfaction from an annotated literal to a complete GAP rule as:

\begin{definition}[Satisfaction of GAP rule]
$\interpretation$ satisfies the ground rule 
\begin{align}
    r\equiv \groundedLiteral_0: \mu_0 \leftarrow_{\delay} \groundedLiteral_1:\mu_1\wedge\,\ldots\,\wedge\, \groundedLiteral_m:\mu_m
\end{align}
denoted $\interpretation \models r$, iff for $t \leq t_{max} - \delay$, where for all $\groundedLiteral_i: \mu_i \in \textit{body}(r)$, 
if $\interpretation \satisfactionAtTime{t} \groundedLiteral_i:\mu_i$ then $\interpretation \satisfactionAtTime{t + \delay} \textit{head}(r)$. 

We say that $\interpretation$ satisfies a non-ground literal or rule iff $\interpretation$ satisfies all of its ground instances.
\end{definition}

A program in our logical language is made up of both TAFs and rules.

\begin{definition}[Generalized Annotated Program (GAP)]
GAP $\program$ is a set of Temporal annotated facts ($\program_{TAFs}$) and GAP rules ($\program_{Rules}$).  Given an Interpretation $\interpretation$ and a program $\program$, $\interpretation \satisfaction \program$ iff 
\begin{eqnarray*}
    \forall x \in \program, \interpretation \satisfaction x
\end{eqnarray*}
\end{definition}


The next definition captures the central concept of consistency; intuitively, 
if a Generalized Annotated Program's rules and annotations yield stable, non-contradictory, logically sound outcomes during its evaluation, it is considered to be consistent.

\begin{definition}[Consistency]
    A GAP $\program$ is \emph{consistent} if there exists some $\interpretation$ that satisfies all the rules in $\program$.
\end{definition}

A GAP is said to entail a TAF if it logically implies or derives the TAF from its rules and annotations under its semantic framework.

\begin{definition}[Entailment]
    We say GAP $\program$ \emph{entails} TAF $a:(\mu, t)$, denoted $\program \models_\mathit{ent} a:(\mu, t)$, iff for every interpretation $\interpretation$ s.t.\ $\interpretation \satisfaction \program$ we have that $\interpretation \satisfaction_t a:\mu$.
\end{definition}

A model of the GAP that assigns annotations to TAFs in a way that logically satisfies the program's rules and is minimal with respect to the lattice ordering of annotations is called the minimal model of the program. Minimal models thus represent the most ``conservative'' solutions (having the tightest annotation bounds) consistent with the program.

\begin{definition}[Minimal model]
    Given program $\program$, the \textit{minimal model} of $\program$ is an interpretation $\interpretation$ s.t.\ $\interpretation \satisfaction \program$ and for all interpretation $\interpretation'$ s.t.\ $\interpretation' \satisfaction \program$,
we have that $\interpretation' \preceq \interpretation$.
\end{definition}

As shown by \cite{ks92}, we can associate a fixpoint operator with any GAP $\program$ that maps interpretations to interpretations. In~\cite{mancalog13,shakarian2022extensions}, the authors present a fixpoint operator for identifying the logical outcome of a logic program. Intuitively, this operator performs a simulation while recording changes. Under the assumption of consistency, this operator produces an exact result in polynomial time (see Theorems~3.2 and~3.4 in \cite{APTL}), and our implementation provides practical speedups and consistency checking while maintaining these guarantees. We next define this operator formally:

\begin{definition}[Fixpoint Operator]
Let $\program$ be a program and $\interpretation$ an interpretation. 
The fixpoint operator $\fixpointOperator$ is a mapping defined as follows:
\[
\fixpointOperator(I)(\groundedLiteral_0, t) = \mathbf{sup}(\textit{annoSet}_{\program,\interpretation}(\groundedLiteral_0, t)),
\]
where $\textit{annoSet}_{\program,\interpretation}(\groundedLiteral_0, t) = \{\interpretation(\groundedLiteral_0, t)\}\cup
\{\mu_0$ such that for all ground rules $r \in \program$, where 
$\textit{head}(r)=\groundedLiteral_0:\mu_0$, for all $\groundedLiteral_i: \mu_i \in \textit{body}(r)$, $\textit{delay}(r) \leq t$, and $\interpretation \satisfactionAtTime{{t-\textit{delay}(r)}} \groundedLiteral_i:\mu_i\}$. 
\end{definition}

Note that the operator maps \textit{all} timepoint-literal pairs to timepoint-literal pairs, essentially revising the entire sequence of timepoints at once.
This contrasts with approaches such as Markov Decision Processes, which model transitions to a new state at each timepoint, and 
allows for direct modeling of non-Markovian dynamics.

\begin{definition}[Iterative Applications of $\fixpointOperator$]
Given natural number $i > 0$, interpretation $\interpretation$, and program $\program$, we define multiple applications of fixpoint operator
$\Gamma$ as follows:
$\fixpointOperator^i(I) = \fixpointOperator(I)$ if $i=1$, and $\fixpointOperator^i(I) = \fixpointOperator(\fixpointOperator^{i-1}(I))$ otherwise.
\end{definition}

\begin{figure}[t]
  \centering
  \fbox{%
    \begin{varwidth}{\linewidth}
      \[
      \begin{aligned}
        \Pi_{simple} = \{~& b(X):[1,1] \xleftarrow[\delay=1]~~ a(X):[1,1], \\
                        & c(X):[1,1] \xleftarrow[\delay=0]~~ b(X):[1,1] \}
      \end{aligned}
      \]
    \end{varwidth}
  }
  \caption{A simple logic program $\Pi_{simple}$ used to illustrate the application of fixpoint operator in Example~\ref{ex:fpo}.}
  \Description{A simple logic program $\Pi_{simple}$ used to illustrate the application of fixpoint operator in Example~\ref{ex:fpo}.}
  \label{fig:fixpoint-program}
\end{figure}

\begin{example}[Fixpoint Operator]
\label{ex:fpo}

Consider the simple program $\Pi_{simple}$ as shown in Figure~\ref{fig:fixpoint-program}.
If the fixpoint operator is applied twice, notice how the interpretations change in Table~\ref{tab:fpo-example}. We assume a grounding of $X=x$ for this example. Further applications of the fixpoint applications do not lead to any further change in the set of interpretations. In such a scenario, we say that the fixpoint operation has converged.

\begin{table}[t]
\caption{Evolution of interpretations as $\fixpointOperator$ is applied to $\Pi_{simple}$ in Figure~\ref{fig:fixpoint-program}}
\label{tab:fpo-example}
\begin{tabular}{clll}
\toprule
$t$ & $I$ & $\fixpointOperator(I)$ & $\fixpointOperator^2(I)$ \\
\midrule
1 & \texttt{a}(x):[1,1] & \texttt{a}(x):[1,1] & \texttt{a}(x):[1,1] \\
2 & & \texttt{b}(x):[1,1] & \texttt{b}(x):[1,1], \texttt{c}(x):[1,1] \\
3 & \texttt{a}(x):[1,1] & \texttt{a}(x):[1,1] & \texttt{a}(x):[1,1]\\
4 & & \texttt{b}(x):[1,1] & \texttt{b}(x):[1,1], \texttt{c}(x):[1,1] \\
\bottomrule
\end{tabular}
\end{table}

\end{example}

Finally, \logic\ additionally supports fuzzy logic~\cite{hohle1978probabilistic, alsina1983some, vojt01} and other non-classical approaches by enabling arbitrary functions that can be used over real values or intervals of reals. 
This provides a key advantage to reasoning about constructs learned with neuro-symbolic approaches such as those explored in works such as~\cite{lnn2020,deepMinIlp2018,nttIlp21,lnnInduction22,shakarian2022extensions}.

\section{Theoretical Analysis}
\label{sec:formal}

This section provides a formal analysis of our framework, beginning with a detailed examination of the theoretical correctness of the fixpoint operator when applied to Generalized Annotated Programs (GAPs) extended with temporal constructs. Following this, we investigate the theoretical aspects of performance, emphasizing how the incorporation of a lower semi-lattice structure facilitates dynamic, on-demand creation of symbols via Skolemization, thereby enabling efficient reasoning in domains with potentially infinite constants. Together, these results establish both the soundness of our formalism and its practical scalability, laying a foundation for subsequent experimental evaluation.

\subsection{Theoretical Results on Correctness\protect\footnote{A version of the results in Section~\ref{sec:formal-correct}  presented in an earlier conference paper from the authors in~\cite{shakarian2022extensions}; however, here we expand on them to include GAPs with temporal structures.}}

\label{sec:formal-correct}

We first show the fundamental properties of the fixpoint operator when a GAP is consistent.

\begin{theorem}
\label{thm:lgap_fp}
If GAP $\Pi$ is consistent, then:
\begin{enumerate}
\item\label{lowerLatticeThmPt1} $\fixpointOperator$ is monotonic,
\item\label{lowerLatticeThmPt2} $\fixpointOperator$ has a least fixpoint $\mathit{lfp}(\fixpointOperator)$, and
\item\label{lowerLatticeThmPt3} $\Pi$ entails TAF $a:\mu$ iff $\mu\leq \mathit{lfp}(\fixpointOperator)(a)$.
\end{enumerate}
\end{theorem}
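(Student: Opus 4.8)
The plan is to follow the classical Knaster--Tarski route used for GAPs in~\cite{ks92}, but to handle the essential difficulty introduced by the lower lattice: in $\semiLattice$ the supremum of two elements need not exist (e.g.\ $[0,0]$ and $[1,1]$ have no common upper bound), so the operator $\fixpointOperator$, defined via $\mathbf{sup}$, is not automatically well defined on all of $\interpretationSet$. The first and most important step is therefore a structural lemma: $\semiLattice$ is \emph{bounded complete}, i.e.\ any subset that has an upper bound has a least upper bound. Geometrically, if every interval in a set contains a common sub-interval, their set-intersection is a nonempty interval and is exactly the $\sqsubseteq$-supremum; this lifts pointwise to interpretations. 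Consistency enters precisely here: fix a model $I^{\ast}\satisfaction\program$, which exists by definition of consistency. I will show $I^{\ast}$ bounds every supremum that $\fixpointOperator$ must form, so that $\fixpointOperator$ is a well-defined, monotone operator on the complete lattice $[\bot,I^{\ast}]$ of interpretations below $I^{\ast}$ (where $\bot$ maps every literal--time pair to $[0,1]$). I will also record that the negation map $\neg$ is an order automorphism of $\semiLattice$, so pointwise sups respect the interpretation constraint $\interpretation(l,t)=\neg(\interpretation(\neg l,t))$.

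For part~\ref{lowerLatticeThmPt1} (monotonicity), take $\interpretation_1\preceq\interpretation_2$ with both below $I^{\ast}$. The key monotone step is that satisfaction of a body literal is preserved upward: if $\mu_i\sqsubseteq \interpretation_1(\groundedLiteral_i,t')$ then $\mu_i\sqsubseteq\interpretation_2(\groundedLiteral_i,t')$. Hence every ground rule that fires under $\interpretation_1$ at time $t-\textit{delay}(r)$ also fires under $\interpretation_2$, so $\textit{annoSet}_{\program,\interpretation_1}(\groundedLiteral_0,t)$ is termwise dominated by $\textit{annoSet}_{\program,\interpretation_2}(\groundedLiteral_0,t)$: the rule-head contributions grow and the base term satisfies $\interpretation_1(\groundedLiteral_0,t)\sqsubseteq\interpretation_2(\groundedLiteral_0,t)$. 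Since both annoSets are bounded by $I^{\ast}$ their sups exist, and a bounded set that is termwise dominated by another has the smaller sup, giving $\fixpointOperator(\interpretation_1)(\groundedLiteral_0,t)\sqsubseteq\fixpointOperator(\interpretation_2)(\groundedLiteral_0,t)$.

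For part~\ref{lowerLatticeThmPt2}, I first verify that $\fixpointOperator$ maps $[\bot,I^{\ast}]$ into itself: for $\interpretation\preceq I^{\ast}$, every firing rule head annotation satisfies $\mu_0\sqsubseteq I^{\ast}(\groundedLiteral_0,t)$ because $I^{\ast}$ satisfies that rule, and $\interpretation(\groundedLiteral_0,t)\sqsubseteq I^{\ast}(\groundedLiteral_0,t)$; thus the sup is $\sqsubseteq I^{\ast}(\groundedLiteral_0,t)$, which simultaneously shows the sup exists (bounded completeness) and that $\fixpointOperator(\interpretation)\preceq I^{\ast}$. Since $[\bot,I^{\ast}]$ is then a complete lattice and $\fixpointOperator$ is monotone on it, Knaster--Tarski yields $\mathit{lfp}(\fixpointOperator)$; equivalently it is the supremum of the increasing, $I^{\ast}$-bounded chain $\{\fixpointOperator^{\alpha}(\bot)\}$ (increasing because $\fixpointOperator$ is inflationary, as $\interpretation(\groundedLiteral_0,t)$ always lies in its own annoSet).

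For part~\ref{lowerLatticeThmPt3} I will first establish the bridge lemma that \emph{an interpretation is a model of $\program$ iff it is a fixpoint of $\fixpointOperator$}: if $\interpretation\satisfaction\program$ then every firing rule already forces $\mu_0\sqsubseteq\interpretation(\groundedLiteral_0,t)$, so the sup equals $\interpretation(\groundedLiteral_0,t)$ and $\fixpointOperator(\interpretation)=\interpretation$; conversely a fixpoint satisfies every rule by reading off $\mu_0\sqsubseteq\fixpointOperator(\interpretation)(\groundedLiteral_0,t)=\interpretation(\groundedLiteral_0,t)$. A transfinite induction from $\bot$ (using monotonicity and $\bot\preceq\interpretation'$) then shows $\mathit{lfp}(\fixpointOperator)\preceq\interpretation'$ for every model $\interpretation'$, so $\mathit{lfp}(\fixpointOperator)$ is the least model. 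Both directions of the entailment claim follow: if $\mu\sqsubseteq\mathit{lfp}(\fixpointOperator)(a)$ then $\mu\sqsubseteq\mathit{lfp}(\fixpointOperator)(a)\sqsubseteq\interpretation'(a)$ for every model, so $\program\models_\mathit{ent}a:\mu$; and if $\program\models_\mathit{ent}a:\mu$ then since $\mathit{lfp}(\fixpointOperator)$ is itself a model it satisfies $a:\mu$, i.e.\ $\mu\sqsubseteq\mathit{lfp}(\fixpointOperator)(a)$. I expect the main obstacle to be the first paragraph: unlike the upper-lattice GAPs of~\cite{ks92}, where the lattice is complete and every sup exists for free, here I must use consistency to confine the entire construction to the bounded-complete interval $[\bot,I^{\ast}]$ and check that $\neg$ and the TAF time-indexing do not break bounded completeness --- this is where the different proof technique demanded by the lower lattice genuinely shows up.
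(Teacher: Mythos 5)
Your proof is correct, and it takes a substantially different --- and considerably more rigorous --- route than the one in the paper. The paper's argument for parts (1) and (2) simply asserts that monotonicity is ``trivial'' and that a least fixpoint exists ``by definition of $\fixpointOperator$'', and it handles part (3) with two informal proof-by-contradiction sketches; it never engages with the point you correctly identify as the crux, namely that in a lower semi-lattice the supremum defining $\fixpointOperator$ need not exist, so well-definedness is not free. Your development --- bounded completeness of $\semiLattice$, confinement of the whole construction to the interval $[\bot, I^{\ast}]$ below a model $I^{\ast}$ supplied by consistency, Knaster--Tarski on that complete lattice, and the model-iff-fixpoint bridge lemma yielding the least-model characterization from which part (3) falls out --- supplies exactly the machinery the paper elides. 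It also explains where the consistency hypothesis is actually consumed (guaranteeing that every $\textit{annoSet}$ is bounded above and hence has a supremum), whereas the paper's claim that monotonicity holds ``even in the case where $\Pi$ is inconsistent'' sits awkwardly with its own Theorem~\ref{them:incon}, which shows that under inconsistency $\textit{annoSet}$ can contain two annotations with no common upper bound, so that $\fixpointOperator$ fails to produce a valid interpretation at all. What the paper's version buys is brevity; what yours buys is an actual proof, plus the least-model theorem as a reusable intermediate result. One residual detail for a final write-up: beyond checking that suprema of interpretations respect the constraint $\interpretation(l,t)=\neg(\interpretation(\neg l,t))$, you should also verify that the $\textit{annoSet}$ construction itself preserves it (which requires the program to treat a literal and its negation symmetrically, or an explicit closure step) --- a gap your proof shares with the paper's own definition rather than one you introduce.
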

\begin{proof}
(\ref{lowerLatticeThmPt1} and \ref{lowerLatticeThmPt2}) By creating an interpretation that maps atoms to annotations for time t $\in \{t_1,\ldots,t_{max}\}$, the monotonicity of $\fixpointOperator$ is trivial even in the case where $\Pi$ is inconsistent and it also has a least fixpoint by definition of the $\fixpointOperator$ operator.

(\ref{lowerLatticeThmPt3}) Suppose BWOC that $\Pi$ entails $a:\mu$ and $\mu> \mathit{lfp}(\fixpointOperator)(a)$.  However, this would imply there is a series of logical constructs that allow us to derive $a:\mu$ at some time t, and this would trivially be reflected in the iterative applications of the $\fixpointOperator$ operator.  Going the other way, BWOC if $\mu\leq \mathit{lfp}(\fixpointOperator)(a)$ but $\Pi$ does not entail $a:\mu$ would imply that there is no application of the constructs in $\Pi$ that lead to the deductive conclusion of $a:\mu$ at any time t; however this is again contradicted by the fact that $\fixpointOperator$ directly leverages the elements of $\Pi$.
\end{proof}

We can also show that for GAPs, we can bound the number of applications of $\fixpointOperator$ until convergence. This means that the computation process is guaranteed to terminate after a finite number of steps, establishes that the semantics are well-defined, and allows for effective inconsistency checking. This is crucial for practical use because it ensures that inference based on these programs completes in a predictable and finite time.

\begin{theorem}
\label{thm:appsOfT}
If GAP $\Pi$ is consistent, then $\mathit{lfp}(\fixpointOperator) \equiv \fixpointOperator^x$ where $x = height(\semiLattice)*|\mathcal{A}|*t_{max}$.
\end{theorem}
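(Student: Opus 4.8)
The plan is to combine the monotonicity of $\fixpointOperator$ (already established in Theorem~\ref{thm:lgap_fp}, part~\ref{lowerLatticeThmPt1}) with a counting/potential argument over the finitely many coordinates of an interpretation. First I would fix the starting interpretation to be the least one consistent with $\Pi$ (every ground literal–time pair sent to the bottom element $[0,1]$ of $\semiLattice$, subject to the $\neg$-consistency constraint of Definition~\ref{def:interp}), and recall that monotonicity forces the iterates to form an ascending chain $\interpretation \preceq \fixpointOperator(\interpretation) \preceq \fixpointOperator^2(\interpretation) \preceq \cdots$ in the complete lattice $\interpretationSet$ ordered by $\preceq$. Once I show this chain becomes stationary at some step $x$, the value $\fixpointOperator^x(\interpretation)$ is a fixpoint; and since by induction $\fixpointOperator^n(\interpretation) \preceq \mathit{lfp}(\fixpointOperator)$ for all $n$ while $\fixpointOperator^x(\interpretation)$ is itself a fixpoint, a sandwiching argument identifies it with $\mathit{lfp}(\fixpointOperator)$. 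So the whole theorem reduces to bounding the number of steps to stabilization.

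Next I would set up the potential argument. The domain of an interpretation is $\groundLiteralSet \times T$, with $|\mathcal{A}|$ relevant ground atoms and $t_{max}$ time points, giving $|\mathcal{A}|\cdot t_{max}$ coordinates once we account for the fact that $\interpretation(\neg \groundedLiteral, t)$ is pinned to $\neg(\interpretation(\groundedLiteral, t))$, so atoms rather than all literals are the independent degrees of freedom. Along the ascending chain, the sequence of values $\fixpointOperator^n(\interpretation)(\groundedLiteral, t)$ at a single coordinate is non-decreasing in the $\sqsubseteq$ order of $\semiLattice$; because every strictly ascending chain in $\semiLattice$ has length at most $height(\semiLattice)$, each coordinate can strictly increase at most $height(\semiLattice)$ times over the entire run. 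Summing over all coordinates, the total number of strict increases across the whole computation is at most $height(\semiLattice)\cdot|\mathcal{A}|\cdot t_{max}=x$.

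The crucial step is to argue that every non-stationary application of $\fixpointOperator$ consumes at least one unit of this budget: if $\fixpointOperator^{k}(\interpretation)\neq \fixpointOperator^{k+1}(\interpretation)$, then by monotonicity $\fixpointOperator^{k}(\interpretation)\prec \fixpointOperator^{k+1}(\interpretation)$, so at least one coordinate increases strictly. This follows directly from the $\mathbf{sup}$ in the definition of $\fixpointOperator$, which can only raise the annotation of a coordinate whose rule bodies have newly become satisfied at the relevant $t-\textit{delay}(r)$. Hence there are at most $x$ non-stationary steps, the chain is stationary by iteration $x$, and therefore $\fixpointOperator^x(\interpretation)=\mathit{lfp}(\fixpointOperator)$, as claimed.

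The main obstacle I anticipate is justifying that $height(\semiLattice)$ is a finite, well-defined quantity here: the full lattice of subintervals of $[0,1]$ admits infinite strictly nested chains, so one must argue that only a finite sublattice is ever relevant — namely the one generated by the finitely many annotation constants appearing in $\Pi$, closed under the program's functions and under $\mathbf{sup}$ — and interpret $height(\semiLattice)$ as the height of this effective sublattice. This is exactly the point where consistency of $\Pi$ is used to keep the reachable annotations within a finite, well-behaved structure, and it is the step I would write out most carefully; the remaining bookkeeping (atoms versus literals, and the sandwiching identification of the stationary value with $\mathit{lfp}(\fixpointOperator)$) is routine by comparison.
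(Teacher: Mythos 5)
Your proposal is correct and follows essentially the same route as the paper's own proof: a monotone ascending chain of iterates in which every non-stationary application of $\fixpointOperator$ strictly increases at least one of the $|\mathcal{A}|\cdot t_{max}$ coordinates, each of which can rise at most $height(\semiLattice)$ times, giving the bound $x$. Your added care about why $height(\semiLattice)$ is finite (restricting to the sublattice generated by the program's annotations) is a worthwhile clarification the paper glosses over, but it does not change the underlying argument.
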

\begin{proof}
We know, by the definition of $ \fixpointOperator$, for any $i \leq x$, that for all $a \in \mathcal{A}$, $\fixpointOperator^{i(a)}\sqsubseteq  \fixpointOperator^{x(a)}$.  Hence, we just need to consider the case where $i > x$ and $\mathit{lfp}(\fixpointOperator) \equiv \fixpointOperator^i$ and $\mathit{lfp}(\fixpointOperator) \not\equiv \fixpointOperator^x$.  However, at each iteration the annotation of at least one literal must change.  The bound on the number of changes in annotation is $height(\semiLattice)$ (as the annotations must stay the same or increase monotonically, as $\Pi$ is consistent by the statement).  Hence, we have a contradiction.
\end{proof}

We can also leverage the $\fixpointOperator$ operator to identify inconsistencies. Since it tracks how logical statements are derived step-by-step, it can flag when new inferences contradict previously derived facts, revealing inconsistencies in the program's knowledge base or rules. Furthermore, it can provide an explainable trace of where and why inconsistencies occur, enabling users to pinpoint the exact rules or data causing the conflict.

\begin{theorem}
\label{them:incon}
GAP $\Pi$ is inconsistent if and only if for value $i$, and ground atom $a$, there exist $\mu, \mu' \in annoSet_{\Pi,\fixpointOperator ^ i}(a, t)$ where $\mu \not\sqsubseteq \mu'$ and $\mu' \not\sqsubseteq \mu$.
\end{theorem}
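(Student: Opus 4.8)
The plan is to prove the two directions of the biconditional separately, relying on one structural feature of the lower lattice $\semiLattice$: because $\semiLattice$ is a lower (meet) semi-lattice whose maximal elements are the point intervals $[\ell,\ell]$ (see Figure~\ref{fig:lowerLattice}), any two elements that possess a common upper bound in $\semiLattice$ must be comparable; equivalently, an incomparable pair $\mu,\mu'$ (with $\mu\not\sqsubseteq\mu'$ and $\mu'\not\sqsubseteq\mu$) has no upper bound in $\semiLattice$ at all. I would isolate this as a short lemma first, since it is exactly the property that turns ``incomparable annotations inside an $\textit{annoSet}$'' into a genuine obstruction to the existence of a satisfying interpretation.

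For the direction ``incomparable pair $\Rightarrow$ inconsistent,'' I would argue by contradiction. Assume $\Pi$ is consistent, so some interpretation $M$ satisfies $\Pi$; let $I_0$ be the interpretation carrying the TAFs of $\Pi$ (bottom elsewhere). Since $M\satisfaction\Pi$, $M$ satisfies every fact, giving $I_0\preceq M$, and $M$ is a pre-fixpoint, $\fixpointOperator(M)\preceq M$; combining this with monotonicity of $\fixpointOperator$ (Theorem~\ref{thm:lgap_fp}) and induction yields $\fixpointOperator^i(I_0)\preceq M$ for every $i$. Now take the pair $\mu,\mu'\in\textit{annoSet}_{\Pi,\fixpointOperator^i}(a,t)$. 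Each of them is either the value $\fixpointOperator^i(I_0)(a,t)$ or the head annotation of a ground rule whose body is satisfied by $\fixpointOperator^i(I_0)$ at the appropriately delayed time. Because satisfaction is upward closed and $\fixpointOperator^i(I_0)\preceq M$, those same bodies are satisfied by $M$, so $M$ must satisfy both corresponding annotated heads at $(a,t)$; hence $\mu\sqsubseteq M(a,t)$ and $\mu'\sqsubseteq M(a,t)$. Thus $M(a,t)$ is a common upper bound of the incomparable pair $\mu,\mu'$, contradicting the lattice lemma. Therefore $\Pi$ is inconsistent.

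For the converse I would prove the contrapositive: if for every $i,a,t$ all elements of $\textit{annoSet}_{\Pi,\fixpointOperator^i}(a,t)$ are pairwise comparable, then $\Pi$ is consistent. Under this hypothesis each $\textit{annoSet}$ is a finite chain, so its $\mathbf{sup}$ exists and equals its maximum, an element of $\semiLattice$; hence $\fixpointOperator$ is well defined and maps into $\semiLattice$ at every step. Since the $\textit{annoSet}$ always contains the current value, the iteration from $I_0$ is increasing, and by monotonicity together with the same finite-height counting argument used in Theorem~\ref{thm:appsOfT} it stabilizes at a fixpoint $\mathit{lfp}(\fixpointOperator)$ in finitely many steps. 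I would then check that this fixpoint is a model: for any ground rule with head $a:\mu_0$ whose body is satisfied at the delayed time, $\mu_0\in\textit{annoSet}_{\Pi,\mathit{lfp}(\fixpointOperator)}(a,t)$, so $\mu_0\sqsubseteq\mathbf{sup}(\textit{annoSet})=\mathit{lfp}(\fixpointOperator)(a,t)$, i.e.\ the rule is satisfied. Since every rule is satisfied, $\mathit{lfp}(\fixpointOperator)\satisfaction\Pi$ and $\Pi$ is consistent.

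I expect the main obstacle to be the lattice lemma and its correct use, rather than the fixpoint bookkeeping. One must verify that ``no common upper bound for an incomparable pair'' genuinely holds for the intended $\semiLattice$ (it is precisely what distinguishes this tree-like lower lattice from the full interval lattice, in which overlapping but incomparable intervals can share an upper bound), as this is the feature that makes incomparable $\textit{annoSet}$ entries diagnostic of inconsistency. The secondary care point is justifying $\fixpointOperator^i(I_0)\preceq M$ for an arbitrary model $M$ and confirming that the rules contributing $\mu$ and $\mu'$ also fire under $M$; this is where upward-closedness of satisfaction and monotonicity of $\fixpointOperator$ do the work.
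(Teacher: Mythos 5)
Your proposal is correct and takes essentially the same route as the paper's proof: the forward direction hinges on the fact that an incomparable pair in $\textit{annoSet}$ has no common upper bound in the tree-like lower semi-lattice (the paper's ``there is no element above both $\mu,\mu'$''), and the converse shows that the absence of such pairs forces the iteration to converge to a fixpoint that satisfies $\Pi$. You simply make explicit what the paper leaves implicit---isolating the no-common-upper-bound lemma and justifying $\fixpointOperator^i(I_0)\preceq M$ for an arbitrary model $M$ via monotonicity and upward-closure of satisfaction.
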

\begin{proof}

Claim 1: If there exist $i,a$ such that the statement holds, then $\Pi$ is inconsistent.  
Suppose, BWOC, that such an $i,a$ pair exist and $\Pi$ is consistent.  We know, by the definition of $\fixpointOperator$, that $\fixpointOperator(\fixpointOperator ^ i)$ must be an interpretation.  However, as there is no element above both $\mu,\mu'$, $\fixpointOperator$ that $\fixpointOperator(\fixpointOperator ^ i)$ cannot be a valid interpretation.

Claim 2: If $\Pi$ is inconsistent, then there exist $i,a$ such that the statement holds.  Suppose, BWOC, $\Pi$ is inconsistent and there does not exist such an $i,a$ pair.  Then, this implies that for all $a \in \mathcal{A}$ there exists some $i'$ where $\fixpointOperator(\fixpointOperator ^{i'}) = \fixpointOperator ^{i'}$ which means for any $i'' >i'$ at time t, we have $\fixpointOperator ^ {i'} = \fixpointOperator ^ {i''}$.  Therefore, by the definition of satisfaction, $\fixpointOperator ^ {i'}$ must satisfy $\Pi$, which is a contradiction.
\end{proof}

Note that this theorem does not depend on Theorem~\ref{thm:lgap_fp} (which has consistency as a requirement). Application of $\fixpointOperator$ can find an atom where the lower bound exceeds the upper bound (causing an inconsistency) if and only if $\Pi$ is inconsistent, and this will always happen within a finite, polynomial number of applications of $\fixpointOperator$. 
On the other hand, $\fixpointOperator$ is guaranteed to converge within a finite, polynomial number of applications if $\Pi$ is consistent. As long as $\fixpointOperator$ has not converged, we may not make any conclusion about the consistency of $\Pi$.

These results rigorously establish the correctness, convergence, and inconsistency detection properties of the fixpoint operator for GAPs with temporal extensions, ensuring sound and tractable reasoning. Building on this solid foundation, the next subsection examines how the underlying lattice structure enables the dynamic creation of atoms and constants, offering significant performance improvements. This analysis sets the stage for practical improvements in scalability and efficiency when reasoning over complex or infinite domains.

\subsection{Theoretical Results on Performance}
\label{sec:skolemization}

The use of the lower lattice structure enables the creation of atoms and constants in an ad-hoc manner.  In this section, we provide new formal arguments as to how such such ad-hoc symbol creation can provide significant performance improvements.  These results enable \logic~to reason about temporal relationships in settings with potentially infinite constants without sacrificing performance.

We begin by providing an example for a non-temporal use case designed to provide an intuition on the use of a lower lattice for ad-hoc symbol creation.
Figure~\ref{fig:intro-kg-program} shows a logic program for reasoning about changes in a knowledge graph, and Figure~\ref{fig:intro-kg-diagram} illustrates the knowledge graph before and after inference. 
This example is an excerpt from the YAGO03-10~\cite{yago} dataset, which we have used in our experiments in Section~\ref{sec:experiments}. 
The program has a single rule stating that ``\textit{X is a citizen of country Y if X is born in Z and Z is a city in Y}''. 
Figure~\ref{fig:intro-kg-diagram} shows an example knowledge graph with three constants: $ben, miami,$ and $usa$, which form two binary ground atoms: \texttt{bornIn$(ben, miami)$} and \texttt{cityIn$(miami, usa)$}, which are true. 
When the rule is grounded with $X = ben$, $Y = miami$, and $Z = usa$, the body is satisfied and the rule fires, creating a new binary ground atom $\texttt{citizenOf}(ben, usa)$ as shown in the figure. 

\begin{figure}[t]
  \centering
  \fbox{%
    \begin{varwidth}{\linewidth}
      \[
      \begin{aligned}
        \Pi_{kg} = \{& citizenOf(X,Y):[1,1] \leftarrow bornIn(X,Z):[1,1] \wedge cityIn(Z,Y):[1,1] \} \\
                    & \text{If X is born in city Z and Z is in country Y, then X is a citizen of Y.}
      \end{aligned}
      \]
    \end{varwidth}
  }
  \caption{Example of a logic program $\Pi_{kg}$ for the knowledge graph shown in Figure~\ref{fig:intro-kg-diagram}. English translation for each rule is provided.}
  \Description{Example of a logic program $\Pi_{kg}$ for the knowledge graph shown in Figure~\ref{fig:intro-kg-diagram}. English translation for each rule is provided.}
  \label{fig:intro-kg-program}
\end{figure}

\begin{figure}[t]
    \begin{center}
        \begin{subfigure}[t]{0.4\columnwidth}
            \vskip 0pt
            \includegraphics[width=\linewidth]{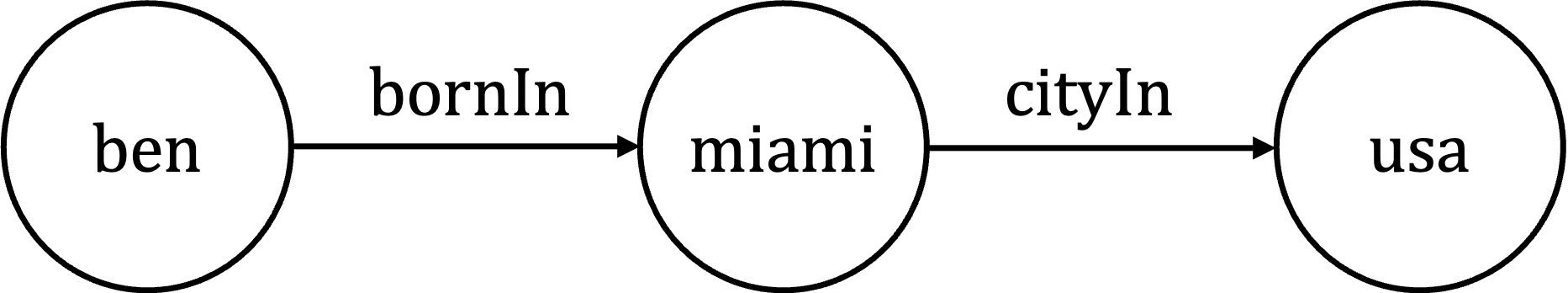}
        \end{subfigure}
        \hspace{15pt}
        \begin{subfigure}[t]{0.4\columnwidth}
            \vskip 0pt
            \includegraphics[width=\linewidth]{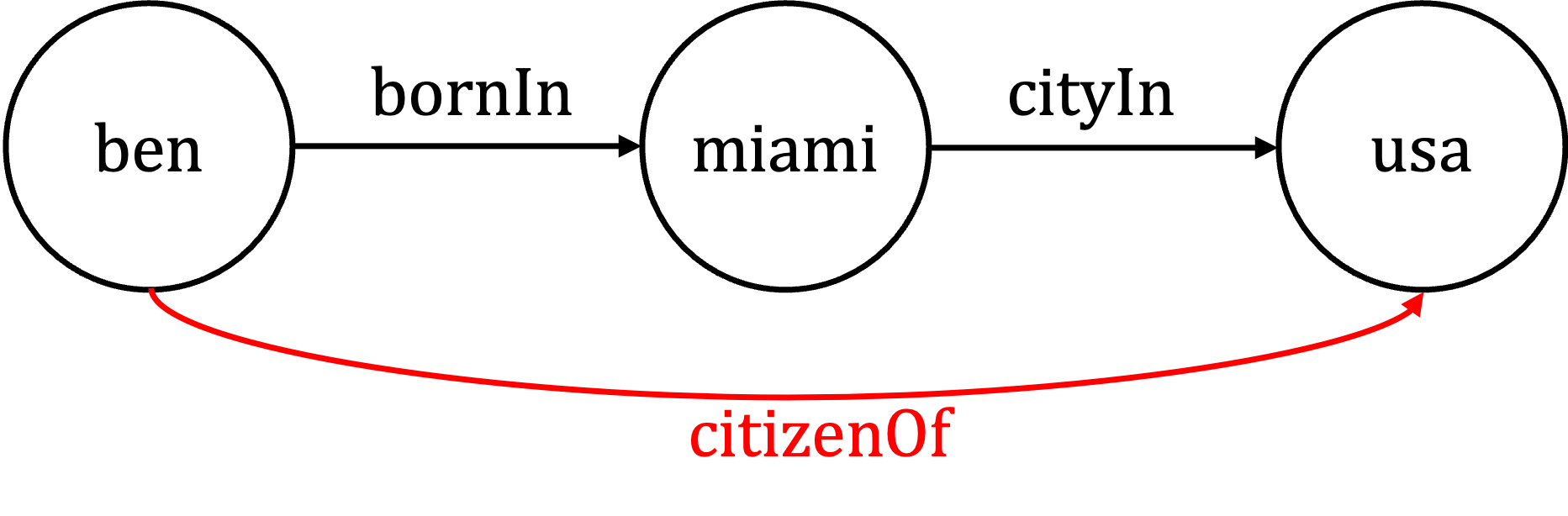}
        \end{subfigure}
    \end{center}
    \caption{Knowledge graph before (left) and after (right) inference. The newly created binary atom is shown in red.}
    \Description{Knowledge graph before (left) and after (right) inference. The newly created binary atom is shown in red.}
    \label{fig:intro-kg-diagram}
\end{figure}

Similarly, consider the geospatial example introduced in Section~\ref{sec:intro} with $\Pi_{geo}$, illustrated in Figure~\ref{fig:intro-geo-diagram}. When the first rule, which says, ``\textit{If fast-moving agent $A$ is at $L_1$ and moves left, it will be at $L_2$ (left of $L_1$) after one time unit}'' is fired, with the grounding $A = patrolCar$ and $L_1 = locMid$, a new constant $locLeft$ is created to the left of $locMid$ in space after one time point.
In the same manner, firing of the second rule creates another constant $locRight$ after two time points. These create new unary ground atoms: \texttt{location$(locLeft),$ location$(locRight)$}, and binary ground atoms: \texttt{at$(patrolCar, locLeft),$} \texttt{ at$(footPatrol, locRight),$} \texttt{ left$(locMid, locLeft),$ right$(locMid, locRight)$}.

This illustrates the creation of a new constant in space dynamically and on demand, and consequently corresponding unary and binary atoms are grounded. 
Note the creation of a binary ground atom, and that \textit{all} binary ground atoms in the language are implicitly assumed to exist even if not explicitly represented as facts.  This is because of our use of a lower-lattice structure (i.e., all atoms are originally assigned a truth value at the bottom of the lattice). 
As uncertainty regarding the truth of atoms is reduced with the progression of inference (which moves up the lattice), {\em we only need to represent in memory those atoms that are not assigned a truth value associated with the bottom element of the lattice.} 

In our open world formalism, anything that is not known to be true or false is uncertain---in practice, this means that everything that is uncertain (which could make up a vast portion of a practical KB) does not need to be allocated memory. 
In previous versions of our implementation, as well as other established software, the majority of symbols need to be grounded as the first step before carrying out inference. 
This limitation restricts the software to merely updating truth values or annotations to existing ground atoms based on available rules and facts at runtime. The increased memory consumption and computational overhead due to these factors are compounded due to the inherent sparsity of real-world knowledge bases and datasets. To mitigate these limitations, we introduce a Skolemization feature that allows our implementation to ground symbols only when certain rules are fired. This innovation eliminates the need for a complete grounding of symbols as a first step, allowing dynamic addition of new constants and ground atoms during inference. Consequently, we can now operate on incomplete knowledge bases while substantially reducing memory usage and running time. Empirical evidence supporting these improvements is presented in Section~\ref{sec:experiments}. 

We now examine the theoretical impact of the Skolemization feature; its empirical impact will be studied later.
In the following, we will refer to constant {\em types} as a way to split the set of constants $\constantSet$ for modeling purposes to best represent the different properties (such as attributes in a graph) of the objects being modeled.
Similarly, we can divide the set of variables $\variableSet$ and predicates $\predicateSet$ into multiple subsets based on the domain being modeled.
Finally, we assume a standard vector representation for each of these elements, and thus refer to their {\em components} in our analyses.
The first result establishes an upper bound on the number of possible ground atoms:

\begin{proposition}
Consider an environment with $t$ types of constants $c_1, c_2, \ldots, c_t$. 
If each constant of type $c_i$ has $m$ components $c_{i,1}, c_{i,2}, \ldots, c_{i,m}$ and the $j^{th}$ component $c_{i,j}$ can take $n_{i,j}$ values, then the maximum possible number of constants is:
    \begin{equation}
        \sum_{i=1}^{t} \prod_{j=1}^{m} n_{i,j}
    \end{equation}
If the number of attributes for constant of type $c_i$ is $a_i$, then the maximum possible number of ground atoms is:
    \begin{equation}
        \sum_{i=1}^{t} a_i \prod_{j=1}^{m} n_{i,j}
        \label{eq:ga-non-skol}
    \end{equation}
\end{proposition}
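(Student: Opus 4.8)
The plan is to prove both bounds by elementary counting, using the rule of product within each constant type and the rule of sum across types. There is no deep machinery involved; the whole argument rests on the fact that a constant is determined by its components and that the $t$ types partition the set $\constantSet$.

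First I would count constants of a single fixed type $c_i$. A constant of type $c_i$ is fully determined by the tuple $(c_{i,1}, \ldots, c_{i,m})$ of values assigned to its $m$ components, and by hypothesis the $j$-th component ranges independently over $n_{i,j}$ possible values. By the multiplication principle, the number of distinct such tuples---and hence the number of distinct constants of type $c_i$---is at most $\prod_{j=1}^{m} n_{i,j}$. The qualifier \emph{maximum} is essential here: this product is attained only when every component combination yields a genuinely distinct constant and all combinations are realized, so in a concrete environment (where some combinations may be absent or coincide) the product serves as an upper bound. Aggregating across types is then immediate: since each constant in $\constantSet$ belongs to exactly one of the $t$ types, the total number of constants is the sum of the per-type counts, giving $\sum_{i=1}^{t} \prod_{j=1}^{m} n_{i,j}$ and establishing the first equation.

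For the ground-atom bound I would reuse the same decomposition. Each attribute of a type-$c_i$ constant contributes exactly one ground atom (the attribute instantiated on that constant), and there are $a_i$ such attributes per constant of type $c_i$. Multiplying the number of constants of type $c_i$ by $a_i$ yields at most $a_i \prod_{j=1}^{m} n_{i,j}$ ground atoms attributable to type $c_i$; summing over the partition of types then gives $\sum_{i=1}^{t} a_i \prod_{j=1}^{m} n_{i,j}$, which is Equation~(\ref{eq:ga-non-skol}).

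Since the argument is purely combinatorial, I do not expect a genuine technical obstacle. The only real care lies in pinning down the modeling assumptions that license the product and sum rules---namely that the constant types are disjoint, that the component values are chosen independently, and that each attribute instantiates a single ground atom---and in being explicit throughout that the stated quantities are \emph{maxima} (upper bounds achieved only under full, non-degenerate instantiation) rather than exact counts for an arbitrary environment. Getting this framing right is what connects the clean formula to the later point that Skolemization avoids materializing the vast majority of these possible atoms.
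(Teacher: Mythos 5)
Your proof is correct and follows exactly the elementary counting argument (product rule over components within a type, sum rule over the disjoint types, then a factor of $a_i$ per attribute) that the paper treats as self-evident: the proposition is stated there without an explicit proof. Your added care in flagging the modeling assumptions (disjoint types, independent components, one ground atom per attribute-constant pair) that make these quantities \emph{maxima} is a reasonable elaboration of what the paper leaves implicit.
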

This result gives us two corollaries establishing bounds on the space required to store these atoms.
\begin{corollary}
Suppose that a constant is grounded in a vector with $m$ components, and each component can be represented using $b$ bits.
Then, the maximum possible number of constants is $2^{bm}$. 
Additionally, the amount of memory required to store these constants is $2^{bm}*(bm)$ bits.
\end{corollary}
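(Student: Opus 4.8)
The plan is to obtain this corollary directly as a specialization of the preceding Proposition, by instantiating the per-component value counts $n_{i,j}$ in terms of the bit width $b$. The Proposition already gives, for a single constant type $c_i$ whose vector has $m$ components with the $j$th component ranging over $n_{i,j}$ values, that the number of distinct such constants is $\prod_{j=1}^{m} n_{i,j}$. The corollary is precisely this expression under a uniform, bit-bounded encoding, so the work is to justify the substitution and then read off the memory cost.

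First I would observe that a component represented using $b$ bits can assume exactly $2^b$ distinct values, and that this is tight: every bit string of length $b$ corresponds to a distinct value, and conversely no encoding in $b$ bits can distinguish more than $2^b$ values. Hence the maximal expressive capacity of each component is captured by setting $n_{i,j} = 2^b$ for every $j \in \{1,\ldots,m\}$. Since the corollary concerns constants drawn from a single fixed vector layout (one type with $m$ components), I would apply the constant-count formula $\prod_{j=1}^{m} n_{i,j}$ from the Proposition to that single type, obtaining
\[
\prod_{j=1}^{m} 2^b \;=\; (2^b)^m \;=\; 2^{bm},
\]
which establishes the first claim that the maximum possible number of constants is $2^{bm}$.

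For the memory bound, I would note that under this encoding each individual constant occupies exactly $m$ components of $b$ bits each, hence $bm$ bits per constant. Multiplying the per-constant storage cost $bm$ by the total number of constants $2^{bm}$ then yields $2^{bm} \cdot (bm)$ bits, as claimed.

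The argument is essentially routine, so the main obstacle is conceptual rather than technical: reconciling the single-type framing of the corollary with the multi-type sum $\sum_{i=1}^{t}$ appearing in the Proposition, and being explicit that $2^{bm}$ is the bound \emph{per constant type} (the full multi-type bound would scale this by the number of types). The only other point requiring care is confirming the bit-to-value correspondence is exact---that $b$ bits encode precisely $2^b$ values---which guarantees the substitution $n_{i,j} = 2^b$ gives a genuine maximum and not merely an over- or under-estimate.
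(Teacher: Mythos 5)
Your proposal is correct and matches what the paper intends: the corollary is stated without an explicit proof as an immediate specialization of the preceding Proposition, obtained exactly as you do by setting $n_{i,j}=2^b$ for a single constant type so that $\prod_{j=1}^{m}2^b=2^{bm}$, and then multiplying by the $bm$ bits per constant for the memory bound. Your added care about the single-type versus multi-type framing and the tightness of the $b$-bits-to-$2^b$-values correspondence is sound but does not change the argument.
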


\begin{corollary}
Let $n_{hv}$ denote the upper bound on predicate arity, and $|P|$ denote the number of distinct predicates, where $P$ is the set of all predicates.
Then, the maximum number of ground atoms formed from $2^{bm}$ constants is $|P| * ((2^b)^m)^{n_{hv}}$. 
Additionally, if each ground atom takes $b_a$ bits of memory, the amount of memory required to store all ground atoms is 
$|P| * ((2^b)^m)^{n_{hv}} * b_a$ bits.
\end{corollary}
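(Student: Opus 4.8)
The plan is to obtain the ground-atom count by a direct product-rule argument and then scale by the per-atom storage cost; there is no deep difficulty here, so the work is in stating the counting carefully and being explicit about where the worst case is taken.

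First I would fix the supply of constants. By the preceding corollary, a constant is a vector of $m$ components each representable in $b$ bits, so there are exactly $2^{bm} = (2^b)^m$ distinct constants available. Next I would count the ground atoms contributed by a single predicate $p$. A ground atom built from $p$ is the application of $p$ to a tuple of constants whose length equals the arity of $p$; for a predicate of arity exactly $k$, each of the $k$ argument slots may be filled independently with any of the $(2^b)^m$ constants, giving $\big((2^b)^m\big)^{k}$ distinct ground atoms by the product rule. Since every predicate has arity at most $n_{hv}$, and the map $k \mapsto \big((2^b)^m\big)^{k}$ is increasing (its base exceeds $1$ for $b,m \ge 1$), each predicate contributes at most $\big((2^b)^m\big)^{n_{hv}}$ ground atoms.

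Summing this uniform per-predicate bound over the $|P|$ distinct predicates yields at most $|P| \cdot \big((2^b)^m\big)^{n_{hv}}$ ground atoms in total, which is the first claim. For the memory bound, each ground atom is stored independently at a fixed cost of $b_a$ bits, so the total storage is the atom count multiplied by $b_a$, namely $|P| \cdot \big((2^b)^m\big)^{n_{hv}} \cdot b_a$ bits.

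The only subtlety, rather than a genuine obstacle, is that this is a worst-case count: predicates of arity strictly below $n_{hv}$ generate fewer ground atoms than $\big((2^b)^m\big)^{n_{hv}}$, so replacing each predicate's true arity by the uniform upper bound $n_{hv}$ can only inflate the total. I would therefore state explicitly that the expression is a valid upper bound rather than an exact equality, which is exactly the form in which the corollary is asserted.
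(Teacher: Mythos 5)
Your proof is correct and uses the same straightforward product-rule counting that the paper implicitly relies on; the corollary is stated without an explicit proof in the paper, and your argument (at most $\bigl((2^b)^m\bigr)^{k}$ ground atoms per predicate of arity $k$, bounded by taking $k = n_{hv}$, summed over $|P|$ predicates, then scaled by $b_a$ for storage) is exactly the intended derivation. Your explicit note that the expression is a worst-case upper bound rather than an equality when some predicates have arity below $n_{hv}$ is a reasonable clarification and does not depart from the paper's reasoning.
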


\noindent
Using these results, we now arrive at our main result regarding performance improvements; to improve readability, we
summarize the notation used in Table~\ref{tab:notation-perf-theorem}.

\begin{table}[t]
\caption{Notation used in Theorem~\ref{theorem:ga-theorem}.}
\label{tab:notation-perf-theorem}
\begin{tabular}{ll}
\toprule
$g_i$ & Set of ground atoms after $i$ applications of the fixpoint operator $\Gamma$. \\
$g_0$ & Initial set of ground atoms. \\
$P$ & Set of all possible predicates in the domain. \\
$g_i(p)$ & Subset of $g_i$ having predicate $p$, where $p \in P$. \\
$\textit{pred}(\textit{head}(r))$ & Predicate in the atom in the head of rule~$r$; $\textit{pred}(\textit{head}(r)) \in P$. \\
$\textit{pred}(\textit{body}(r), j)$ & Predicate in the $j^{\textit{th}}$ clause in body of rule~$r$; $\textit{pred}(\textit{body}(r), j) \in P$. \\
\bottomrule
\end{tabular}
\end{table}

\begin{theorem}
\label{theorem:ga-theorem}
Let $\program_{\textit{Rules}}$ be a GAP. 
The number of new ground atoms produced after the $i^{\textit{th}}$ application of the fixpoint operator cannot exceed:
    \begin{equation}
        \sum_{r \in \Pi_{\textit{rules}}} ~ \prod_j \big|g_{i-1}(\textit{pred}(\textit{body}(r), j))\big|
        \label{eq:ga-theorem-skol}
    \end{equation}
\end{theorem}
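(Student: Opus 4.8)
The plan is to reduce the count of newly materialized ground atoms to a count of distinct rule-head instances, and then bound the latter rule by rule. First I would observe that, by the definition of $\fixpointOperator$, the only mechanism that raises a ground atom off the bottom element $[0,1]$ of the lattice---and hence causes it to be represented in memory as a member of $g_i$ rather than remaining implicit---is the $\{\mu_0 : \ldots\}$ branch of $\textit{annoSet}_{\program,\interpretation}$, i.e.\ the satisfaction of some ground rule body whose head is that atom. The TAFs fix the atoms of $g_0$, and every subsequent application of the operator can introduce a genuinely new atom only through a rule head. Thus the number of new atoms produced by the $i^{\textit{th}}$ application is at most the number of distinct ground heads generated by rules firing on the state $g_{i-1}$.

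Second, I would fix a single rule $r \in \Pi_{\textit{rules}}$ whose body clauses carry predicates $\textit{pred}(\textit{body}(r),1),\ldots,\textit{pred}(\textit{body}(r),k)$ and count its firings. For the $j^{\textit{th}}$ body clause to be satisfied at the appropriate time point $t-\textit{delay}(r)$, its ground instance must be matched against an already-materialized atom, i.e.\ an element of $g_{i-1}$ sharing that predicate, of which there are exactly $|g_{i-1}(\textit{pred}(\textit{body}(r),j))|$. A satisfying grounding of the whole body is a choice of one such atom per clause, subject to agreement on shared variables, so the number of satisfying body groundings is at most $\prod_j |g_{i-1}(\textit{pred}(\textit{body}(r),j))|$. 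Since the variables of $\textit{head}(r)$ (together with any Skolem components introduced when the head spawns a fresh constant) are determined by the body grounding, each satisfying body grounding yields at most one ground head atom. Summing this per-rule bound over all $r \in \Pi_{\textit{rules}}$ gives precisely the right-hand side of~\eqref{eq:ga-theorem-skol}.

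Third come the delicate points. The product over body clauses is an \emph{over}-count exactly because shared variables impose consistency constraints that can only reduce the number of joint groundings; this is what makes the statement an upper bound (and, as the paper later notes empirically, a loose one) rather than an equality. I would also make explicit the standard range-restriction (safety) assumption: every variable appearing in $\textit{head}(r)$, and every component Skolemized into a new constant, must be fixed once the body is ground---otherwise one body grounding could spawn unboundedly many heads and the bound would break. Finally I would dispose of the trivial case of a body clause annotated with $[0,1]$, which is satisfied vacuously and so does not genuinely restrict matching to $g_{i-1}$; as in the running examples, I would treat body annotations as non-bottom, so that satisfaction truly forces membership in $g_{i-1}$.

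The main obstacle is the pair of claims in the third paragraph: establishing that, under Skolemization, each satisfying body grounding maps to at most one new head atom (rather than a family of them), and that the clause-wise product legitimately dominates the true joint grounding count in the presence of shared variables. Once these are pinned down, the summation over rules is routine.
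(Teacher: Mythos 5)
Your argument is correct and follows essentially the same route as the paper's proof: new atoms arise only as rule heads, the number of firings of each rule is bounded by the product of the per-clause match counts in $g_{i-1}$, and the per-rule bounds are summed; the paper merely packages the slack as explicit fractions ($\textit{newF}_{p,i}$, $\textit{uniqueF}_{p,i}$, $\textit{validF}_{r,i}$) that are then set to $1$, which is the same over-counting step you justify directly. The only substantive difference is that you make explicit two side conditions the paper leaves implicit---range-restriction of head variables (so each body grounding yields at most one head) and non-bottom body annotations (so satisfaction genuinely forces membership in $g_{i-1}$)---which is a sound precaution rather than a divergence in method.
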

A proof sketch is provided here. The full proof can be found in Appendix~\ref{app:proofs}.

\begin{proof}[Proof Sketch]
Let $P_{\program} \subseteq P$ be the set of predicates containing only predicates present in the head of at least one rule in $\program_{\textit{Rules}}$.
\begin{equation}
    |g_i| = \sum_{p \in P_{\Pi}} |g_i(p)| + \sum_{p \notin P_{\Pi}} |g_i(p)| \nonumber \\   
\end{equation}
As application of fixpoint operators can only create or modify atoms having a predicate in the head of a rule,
\begin{equation}
    |g_i| = \sum_{p \in P_{\Pi}} |g_i(p)| + \sum_{p \notin P_{\Pi}} |g_0(p)| \label{eq:gi_main}
\end{equation}
Let $\fixpointOperator_{r}(g)$ denote the set of ground atoms produced when a single fixpoint operator is applied to a single rule $r$ with the set of ground atoms $g$. 
\begin{flalign}
    |g_i(p)| = & |g_{i-1}(p) \cup \bigcup_{r \in \Pi_{\textit{rules}}~\wedge~ \textit{pred}(\textit{head}(r))= p} \fixpointOperator_{r}(g_{i-1})| \notag \\
    = & |g_{i-1}(p)| + \textit{newF}_{p,i} \times \textit{uniqueF}_{p,i} \times \sum_{r \in \Pi_{\textit{rules}}~\wedge~ \textit{pred}(\textit{head}(r))= p} |\fixpointOperator_{r}(g_{i-1})| \label{eq:gip_main}
\end{flalign}
Here, $\textit{newF}_{p,i} \in [0,1]$ denotes the fraction of ground atoms produced, with predicate $p$ and at the $i^{\textit{th}}$ $\Gamma$ application, which did not exist after the $(i-1)^{\textit{th}}$ application. 
Similarly, $\textit{uniqueF}_{p,i} \in [0,1]$ is the fraction of ground atoms produced across rules, with predicate $p$ in the head, which are unique.
We now try to estimate the last term in Equation~\ref{eq:gip_main}
\begin{equation}
    |\fixpointOperator_{r}(g_{i-1})| = \textit{validF}_{r,i} \times \prod_j |g_{i-1}(\textit{pred}(\textit{body}(r), j))| \label{eq:gamma_r_main}
\end{equation}

Here, $\textit{validF}_{r,i} \in [0,1]$ denotes the fraction of valid groundings that leads to firing of non-ground rule $r$, within the cross-product of possible groundings for each body clause.

\noindent
From Equations~\ref{eq:gip_main} and~\ref{eq:gamma_r_main}, and considering the maximum value ($=1$) for all three fractions we get:
\begin{equation}
    \Delta |g_i(p)| = |g_i(p)| - |g_{i-1}(p)| \leq \sum_{\substack{r \in \Pi_{\textit{rules}} \\ \textit{pred}(\textit{head}(r))= p}} ~~~ \prod_j |g_{i-1}(\textit{pred}(\textit{body}(r), j))| \label{eq:delta_gip_limit}
\end{equation}

\noindent
Substituting Equation~\ref{eq:delta_gip_limit} into Equation~\ref{eq:gi_main} we obtain:
\begin{equation}
    |g_i| \leq \sum_{p \in P_{\Pi}} |g_{i-1}(p)| + \sum_{p \in P_{\Pi}} \sum_{\substack{r \in \Pi_{\textit{rules}} \\ \textit{pred}(\textit{head}(r))= p}} \prod_j |g_{i-1}(\textit{pred}(\textit{body}(r), j))| + \sum_{p \notin P_{\Pi}} |g_0(p)|
\end{equation}
We get $|g_{i-1}|$ by adding the first and last term on the right-hand side. The two summations can also be combined to give us our final expression:
\begin{equation}
    \Delta |g_i| = |g_i| - |g_{i-1}| \leq \sum_{r \in \Pi_{\textit{rules}}} ~ \prod_j |g_{i-1}(\textit{pred}(\textit{body}(r), j))| \label{eq:delta_gi_bound}
\end{equation}
\end{proof}

These results show the potential impact of reducing unnecessary groundings.
In Section~\ref{sec:experiments}, we experimentally evaluate the impact of Skolemization and how it compares to this theoretical analysis.

\section{Implementation}
\label{sec:implementation}
In this section, we elaborate on our implementation of \logic; specifically, we discuss our PyReason software and our efficient grounding process, and then provide details on using logic as a simulator, and how it can be interfaced readily with any reinforcement learning agent. 
Though earlier versions of the implementation were introduced in~\cite{pyreason2023, mukherji2024scalable}, since then there have been multiple improvements. PyReason offers a comprehensive and flexible framework for reasoning based on generalized annotated logic; it supports various extensions, including temporal, graphical, and uncertainty-related features, which allow to capture a wide range of logics, such as fuzzy, real-valued, interval, and temporal logics. 
The open source codebase, tutorials, application case studies, and various other materials can be found at \url{https://pyreason.syracuse.edu}.


\subsection{Knowledge Graphs as Data Structures Supporting Efficient Implementations}

Built on modern Python, PyReason is specifically designed to handle graph-based data structures efficiently, and to support scalable yet accurate reasoning. We allow graphical input via the convenient \textit{Graphml} format, making it compatible with data exported from popular graph databases like Neo4j and GraphML. The Python library Networkx is used to load and interact with the graph data. 

A knowledge graph can be modeled using first-order predicate logic, where entities correspond to constants, relations correspond to predicates, and facts (usually represented as \textit{(entity1, relation, entity2)} triples in popular datasets) correspond to ground predicates (predicates with only constants as arguments). Constants correspond to nodes in the graph, while edges correspond to pairs of constants. Our approach, consistent with related literature~\cite{deepMinIlp2018,thomasNsr21}, restricts predicate arity to~1 or~2.

Algorithm~\ref{alg:flow} illustrates the structure of our implementation. Our software stores interpretations in a nested dictionary. Each constant forms a node, and edges correspond to pairs of constants that appear together in at least one grounded atom. For computational efficiency and ease of use, our software allows specification of a range of time-points $T = {t_1, t_2, \ldots}$ instead of a single time-point $t$, for which an interpretation $I$ remains valid. To reduce memory requirements, only one set of interpretations (current) is stored at any point in time. Once the inference engine is initialized, the program enters the ``Main loop'', which includes all the necessary subroutines to cycle through each point in time. The fixpoint operator ($\fixpointOperator$) is applied iteratively until convergence at each time point, and after every application, subroutines for logical consistency checking, followed by inconsistency resolution (if required), are run. Our optimized rule grounding process, key to efficient application of $\fixpointOperator$ is detailed in Section~\ref{sec:impl-grounding}. In addition, the $persistent\_flag$ shown in Algorithm~\ref{alg:flow} provides users with the option of preserving annotations from inferences made at the previous time step (when the flag is set) or resetting them to the bottom element of the lattice, $[0,1]$, (when the flag is reset).

\begin{algorithm}
\caption{Open-World Annotated Temporal Logic implemented in PyReason}\label{alg:flow}
\begin{algorithmic}[1]
\Statex \textbf{\underline{Data Structures}}
\State Nested Dictionary $\bm{I}=[Node/Edge,[Predicate, [Lower, Upper, Static]]]$ to store current interpretations only.
\State List $\bm{L}=[(Node/Edge, Predicate, Lower, Upper, Static, at\_t)]$ to store facts and inferences, before it is used to update the dictionary.

\State List $\textbf{IPL}=[(Predicate_1, Predicate_2)]$ containing pairs of predicates that cannot hold simultaneously (the bounds must be pairwise complementary). In the propositional case, if one of the predicates is $true$, the other must be $false$. We call this  ``inconsistent predicate list (IPL)''.
\State List $\bm{E}=[(Node/Edge, Predicate)]$ containing a list of predicates that becomes inconsistent in the course of program execution.

\Algphase{Initialization}
\State Initialize $\bm{I}$ as follows:
\Statex \quad For each nodes/edges, use $type\_checking$ to initialize valid predicates only.
\Statex \quad All bounds are initialized to [0,1].
\State $\bm{L} \gets [~]$ 
\Statex Facts (including initial interpretations) are then copied into $\bm{L}$
\State $t \gets 0$
\State $\bm{E} \gets [~]$
\State Input: Number of diffusion time-steps $T$, Set of rules $\bm{R}$

\Algphase{Main loop}
\While{$t \leq T$}

\For{$i$ in $I$, if (persistent\_flag == false)}
    \State reset bounds to [0,1] \Comment{Annotations returned to bottom of the lattice.}
\EndFor

\State $update\_req \gets 0$
\For{$l$ in $\bm{L}$, where ($l(at\_t) == t$)}
    \If{check\_consistency($l \in \bm{L}$,$l \in  \bm{I}$)}
        \State update\_req += update\_interp($l \in \bm{L}$,$l \in  \bm{I}$)
    \Else
        \State resolve\_inconsistency($l \in  \bm{I}$)
        \If{$(l, l') \in \textbf{IPL}$, $\forall l'$}
            \State resolve\_inconsistency($l' \in  \bm{I}$)
        \EndIf
    \EndIf
\EndFor

\If{$update\_req$}
    \State Apply fixpoint operator($\fixpointOperator$) once. \Comment{Grounding process elaborated in Algorithm~\ref{alg:grounding}}
    \For{each resulting interpretation}
        \If{$Static$ is $false$ in $I$}
            \State Add to $\bm{L}$
        \EndIf
    \EndFor
    \State Go to line $14$. \Comment{Check if another $\fixpointOperator$ application is needed.}
\Else
    \State $t \gets t+1$.
\EndIf

\EndWhile
\end{algorithmic}
\end{algorithm}

The core of PyReason is its rule-based reasoning, which enables handling uncertainty, open-world novelty, non-ground rules, quantification, and other diverse requirements seamlessly. The system remains agnostic to the selection of t-norm\footnote{A t-norm is a kind of binary operation that is typically used in fuzzy logic; they can be interpreted as a generalization of conjunction in classical logic.}, providing flexibility in using different logical connectives. 
Our description of the world as a knowledge graph (KG) adds support to applications where a policy must be learned via reasoning over context-related KGs such as~\cite{XR4DRAMA23}. Additionally, recent progress in developing Knowledge Graphs for probabilistic reasoning, as demonstrated by studies such as~\cite{vidKG23, ontologyKG23, DocSemMap22}, highlights the potential role of our framework in a wide range of practical applications.

\subsubsection{Special Case: Static Ground Atoms}
\label{sec:static}

In analyzing real-world use cases, we identified domain- or application-specific facts that remain constant over time. For instance, in the geospatial example introduced in Section~\ref{sec:intro}, an agent’s top speed is an inherent property unaffected by time or inference. To model such cases in our logic, we introduce an optional $Static$ flag. If a ground atom is declared with $Static=True$, its annotation remains fixed at the specified value across all time points. Algorithm~\ref{alg:flow} illustrates how the Static flag is incorporated into the data structures and checked before annotation updates. Ground atoms can be set to \emph{static} through rules as well, preventing other rules from being able to update the same ground atom later.

\subsection{Rule Grounding and Use of Skolemization\label{sec:impl-grounding}}

Many implementations of exact reasoning in different computational logic-based approaches struggle with the inherent complexity of the variable grounding process. The rule grounding process in \logic~leverages a novel form of Skolemization enabled by the use of a lower lattice structure for annotations, which significantly reduces the creation of new ground atoms. This approach not only pursues tractable reasoning by bounding the size of groundings, but also facilitates scalable inference in complex temporal and non-Markovian environments. First, we optimize the grounding process through efficient constant search using a predicate hash map, and by using CPU parallelization methods detailed in Section~\ref{sec:software-hardware}. Then, by efficiently managing the introduction of new constants during grounding, Skolemization enhances both speed and memory usage, making it a key factor in the practical applicability of \logic~for large-scale logic programs.

Algorithm~\ref{alg:grounding} details the variable grounding process for a single non-ground rule in our implementation. After initializing lists for possible groundings, the variable dependency graph captures the inter-dependency between variables that co-occur in different clauses. This proves to be especially useful later in the process to rapidly trim down the combinations of possible groundings. This, in turn, significantly reduces the grounding search space by reducing the branching factor. First, we loop through all the body clauses to generate all possible candidate combinations of constants that simultaneously satisfy the complete set of annotations in the ground rule. After each clause is reviewed, the dependency graph is used to prune the list of candidates. Once this process is complete, if we have one or more satisfiable groundings, we apply the annotation function to compute the bounds for the ground atom in the head of the fired rule. We then check if the ground atom in the head of every ground rule fired exists in the graph and, if they do not, then the new constants are created at runtime. We then go back to the ``Main loop'' in Algorithm~\ref{alg:flow}, which subsequently updates the interpretations.

\begin{algorithm}
\caption{Grounding a Non‐Ground Rule\label{alg:grounding}}
\begin{algorithmic}[1]
\Require 
  A rule \(r\in \Pi\) with
  \(\mathrm{head}(r)\), \(\mathrm{body}(r)=\{c_1,\dots,c_m\}\),  
  thresholds \(\Theta=(\Theta_1,\dots,\Theta_m)\),  
  annotation function \(f_{\mathrm{ann}}\),  
  interpretation \(\mathcal{I}\),  
  node set \(V\), edge set \(E\),  
  and predicate‐to‐constant maps \(\mathrm{PredMap}\).  
\Ensure 
  Two lists of grounded instances:  
  \(\mathit{app\_nodes}\) if \(\mathrm{head}(r)\) is unary,  
  \(\mathit{app\_edges}\) if it is binary.
\Statex
\State Extract variables \(\mathcal{V}\) and (if binary) head‐edge pattern \((X_h,Y_h)\)
\State Initialize
  \(\mathit{groundings}[X]\gets\varnothing\) for each \(X\in\mathcal{V}\),
  \(\mathit{groundings\_e}[(X,Y)]\gets\varnothing\) for each \((X,Y)\) in binary clauses
\State Build variable dependency graph \(D\) from all binary clauses
\For{\(i=1\) to \(m\)}
  \State Let \(c_i\) be
    \(\bigl(X:p:[\ell,u]\bigr)\) if unary, or
    \(\bigl(X,Y:p:[\ell,u]\bigr)\) if binary
  \If{\(c_i\) is unary}
    \State 
      \(S \gets \begin{cases}
         V\cap\mathrm{PredMap}[p],&X\text{ unseen}\\
         \mathit{groundings}[X],&\text{otherwise}
       \end{cases}\)
    \State 
      \(Q \gets \{\,v\in S\mid \mathcal{I}(v,p)\sqsubseteq[\ell,u]\,\}\)
    \State \(\mathit{groundings}[X]\gets Q\)
    \State Prune any \(\mathit{groundings\_e}\) entries involving \(X\)
    \State \(\textbf{if }|Q|\) fails \(\Theta_i\) \textbf{then return} empty lists
  \ElsIf{\(c_i\) is binary}
    \State Determine candidate edges
    \[
      S \;\gets\;
      \begin{cases}
        E\cap\mathrm{PredMap}[p], & X,Y\text{ both unseen}\\
        \{(v,w)\mid v\in\mathit{groundings}[X],\,w\in\operatorname{Nbr}(v)\}, & Y\text{ unseen}\\
        \{(v,w)\mid w\in\mathit{groundings}[Y],\,v\in\operatorname{Rnbr}(w)\}, & X\text{ unseen}\\
        \mathit{groundings\_e}[(X,Y)], &\text{otherwise}
      \end{cases}
    \]
    \State 
      \(Q\gets\{(v,w)\in S\mid \mathcal{I}((v,w),p)\sqsubseteq[\ell,u]\}\)
    \State \(\mathit{groundings\_e}[(X,Y)]\gets Q\)
    \State Update
      \(\mathit{groundings}[X]\gets\{v\mid\exists\,w:(v,w)\in Q\},\)
      \(\mathit{groundings}[Y]\gets\{w\mid\exists\,v:(v,w)\in Q\}\)
    \State Propagate refinements along \(D\)
    \State \(\textbf{if }|Q|\) fails \(\Theta_i\) \textbf{then return} empty lists
  \EndIf
  \State Propagate any changed \(\mathit{groundings}\) via dependency‐graph pruning
\EndFor
\If{all clauses satisfied}
  \State Initialize \(\mathit{app\_nodes},\;\mathit{app\_edges}\)
  \ForAll{each tuple of groundings for head‐variables}
    \State Locally re‐refine and re‐check all \(\Theta_i\)
    \If{satisfied}
      \State Assemble annotation inputs from each clause’s matches
      \State Compute \([\ell',u']\gets f_{\mathrm{ann}}(\dots)\)
      \State Add any new constants or edges to \((V,E)\)
      \If{\(\mathrm{head}(r)\) unary}  
         append to \(\mathit{app\_nodes}\)
      \Else{}
         append to \(\mathit{app\_edges}\)
      \EndIf
    \EndIf
  \EndFor
  \State \Return \(\mathit{app\_nodes},\;\mathit{app\_edges}\)
\Else
  \State \Return empty lists
\EndIf
\end{algorithmic}
\end{algorithm}

\subsubsection{Special Case: Immediate Rules}
\label{sec:immediate}

In section~\ref{sec:prelim-immediate}, we introduced immediate rules. They are rules with no delay, which are applied at the same time point at which the body of the rule is satisfied. Immediate rules make the program search for new applicable rules whose clauses might now be satisfied because of the immediate rule.
Practically, rules with $\delay=0$, are rules with infinitesimally small delay. This allows cascading of several rules, which are all fired within the same time point. A simple example was shown in Example~\ref{ex:fpo}. A practical use case can be found in our experiment in Section~\ref{sec:rl-setup}, when the shooting action is brought into the picture because multiple events are occurring within a single time point, but they're all interconnected. We note that this is possible without any extensions to annotated logic as the temporal extensions we use (based on~\cite{mancalog13, APTL, pyreason2023}) have no requirement that two time units be uniformly separated in actual time.

\subsection{Logic as a Simulator for Reinforcement Learning}

Deep Reinforcement Learning (RL) algorithms typically require a simulator to learn an agent policy. However, traditional simulators have several drawbacks, such as speed and data efficiency, as well as lack of explainability, modularity, and extensibility without retraining. 
We introduce {\tt PyReason-gym}, an OpenAI Gymnasium wrapper that allows easy interfacing with a grid world that uses PyReason as the simulation and dynamics engine. In our experiment in Section~\ref{sec:rl-expt}, we show the applicability of our approach in a practical RL setting and compare it to two well-established simulation environments. By integrating PyReason as the underlying engine, the grid world dynamics and agent interactions are governed by logical rules, enabling precise and interpretable state transitions. Crucially, the temporal extensions and lower-lattice annotation of \logic~means {\tt PyReason-gym} can naturally interact with a simulator with non-Markovian dynamics, where agent behavior depends on multiple previous time steps rather than just the current state. {\tt PyReason-gym} also uses a core functionality of our implementation to support the generation of detailed, time-stamped event traces, facilitating explainability in agent behavior and environment interactions. We discuss how this is particularly useful and show an example of an explainable trace in Section~\ref{sec:explainability}. With efficient memory usage and configurable settings, {\tt PyReason-gym} offers a practical, scalable solution for reinforcement learning applications, which can work with most out-of-the-box RL algorithm packages widely available. We also provide {\tt PyReason-gym} as a fully open-source codebase, enabling anyone to easily reproduce and adapt it for their own application domains. 
We refer the interested reader to the PyReason website\footnote{\url{https://pyreason.syracuse.edu/}} for more details.

\subsection{Explainability\label{sec:explainability}}

In order to support interpretability and explainability, interpretations used in past computations can be obtained using \textit{rule traces}, which retain the change history for each interpretation and the corresponding grounded logical rules that caused each change. Such rule traces pave the way towards these goals, as every inference can be traced back to the cascade of rules that led to it. This enables users and developers to understand the exact reasoning path, identify and diagnose unexpected behaviors, and validate the correctness of complex temporal dependencies. Additionally, because \logic~supports non-Markovian dynamics, rule traces are essential for capturing how past states and delayed effects contribute to current inferences, making temporal reasoning transparent and interpretable.

As an illustration of this functionality, Table~\ref{tab:rule-trace-ex} shows the rule trace for Example~\ref{ex:fpo} where the fixpoint operator was applied twice to the program $\Pi_{simple}$. Note that to denote TAFs, the ``Rule fired'' column is intentionally left blank.

\begin{table}[tb]
\caption{Rule trace produced by the PyReason software when Example~\ref{ex:fpo} was executed.}
\label{tab:rule-trace-ex}
\begin{tabular}{ccllccl}
\toprule
t & $\Gamma$ & Constant Symbols & Predicate & Old Annotation & New Annotation & Rule fired \\
\midrule
1 & 0 & x & a & {[}0.0,1.0{]} & {[}1.0,1.0{]} & -- \\

\midrule
2 & 1 & x & b & {[}0.0,1.0{]} & {[}1.0,1.0{]} & $rule_1$ \\
2 & 2 & x & c & {[}0.0,1.0{]} & {[}1.0,1.0{]} & $rule_2$ \\

\midrule
3 & 0 & x & a & {[}0.0,1.0{]} & {[}1.0,1.0{]} & -- \\

\midrule
4 & 1 & x & b & {[}0.0,1.0{]} & {[}1.0,1.0{]} & $rule_1$ \\
4 & 2 & x & c & {[}0.0,1.0{]} & {[}1.0,1.0{]} & $rule_2$ \\
\bottomrule
\end{tabular}
\end{table}

\subsection{Software- and Hardware-based Performance Improvements}
\label{sec:software-hardware}

One of the key strengths of PyReason is its speed and machine-level optimized fixpoint-based deduction approach. This ensures efficient and scalable reasoning capabilities, even when dealing with large graphs with over 30 million edges. As stated before, the variable grounding process is one of the primary bottlenecks of exact reasoning implementations. Grounding non‐ground rules is inherently expensive: to instantiate a rule with $k$ variables over a domain of size $N$, the engine must consider up to $N^k$ candidate substitutions at each iteration, and even simple two‐variable bodies can require checking on the order of $N^2$ ground pairs. In a knowledge base with $N=10^5$ constants, a rule like  
\[
h(Y) \;\longleftarrow\;p(X)\wedge q(X,Y)
\]  
would naïvely generate $10^{10}$ combinations per fixpoint step—clearly prohibitive in pure Python without additional heuristics. To accelerate the grounding process and to streamline the whole inference engine, we make several software optimizations as well as leverage multi-core CPU processors for hardware optimizations.    

\subsubsection{Software Optimizations\label{sec:software}}

We list some of the design choices that leverage the properties of \logic~to optimize our approach.
\begin{enumerate}
  \item \textit{Uncertain Predicate Filtering:}  
    For each predicate $p$, maintain: 
    \[
      \mathit{PredFiltered}_p \;=\;\{\,v\in V \mid \interpretation(v,p)\neq[0,1]\},
    \]
    i.e.\ only those constants whose current annotation for $p$ is not completely uncertain.  When grounding a clause $p(X):[l,u]$, we intersect $V$ with $\mathit{PredFiltered}_p$, often reducing the domain by orders of magnitude.

    \item \textit{Predicate Maps:}  
    In addition to \(\mathit{PredFiltered}\), we maintain for each predicate \(p\) a map:
    \[
      \mathit{PredMap}_p = \{\,v\in V\mid \exists\,t:\;p(v):\mu_t\in\mathcal{I}\},
    \]
    and similarly for edges in \(E\).  When grounding \(p(X)\) or \(p(X,Y)\), we initialize the candidate set from \(\mathit{PredMap}_p\) rather than the entire \(V\) or \(E\).  This enforces an open‐world pruning based on known TAFs and avoids scanning the full graph.

\item \textit{Clause Order Optimization:}  
    Within each rule, we reorder the body so that all unary clauses appear before binary clauses.  Since \(\lvert V\rvert \ll \lvert E\rvert\) in typical sparse graphs, grounding the unary literals first shrinks the candidate \(groundings[X]\) sets, dramatically reducing the cost of the subsequent binary clause groundings \(q(X,Y)\).

  \item \textit{Early Threshold Checks:}  
    Immediately after grounding the $i^{th}$ body literal \(c_i\) and obtaining its candidate set \(Q_i\), we verify its counting or percentage threshold \(\Theta_i\).  Concretely, for a clause:  
    \[
      \bigl|\{\,Y \mid q(X,Y):[l,u]\}\bigr|\ge k,
    \]
    we compute \(\lvert Q_i\rvert\) once—if \(\lvert Q_i\rvert<k\), the entire rule cannot fire for this \(X\), so we abort grounding the remaining clauses and move on.  This saves the cost of further joins and refinements when a single clause already fails.
    
  \item \textit{Dependency‐Graph Pruning:}  
    We build a small dependency graph \(D\) over the rule’s logic variables; each binary clause \(q(X,Y)\) adds connections \(X\!-\!Y\).  
    Whenever a clause refines \(groundings[X]\), we propagate that change along \(D\) to remove any now‐invalid values from neighboring \(groundings[Y]\) sets (and vice versa). This reduces the search spaces for future clauses and is able to stop the grounding process early if previous clauses no longer have their thresholds satisfied.
\end{enumerate}

\subsubsection{Hardware Accelerations\label{sec:hardware}}

Grounding and fixpoint iteration are governed by a few core compute kernels—rule‑grounding loops, lattice‑join updates, and annotation propagation—that repeatedly traverse large portions of the graph representation. In naïve Python these loops suffer both interpreter overhead and poor memory access patterns, rapidly becoming the bottleneck as the number of rules or graph size grows. To overcome this, we compile and cache all of these critical loops with Numba’s Python \textit{@njit(parallel=True)} decorator, yielding LLVM‐generated \cite{lattnerllvm} native code.

Furthermore, each fixpoint iteration naturally decomposes into independent tasks—one per rule—since grounding and annotation for rule \(r_i\) only reads the ``old'' interpretation and writes its own proposed updates.  We exploit this by replacing standard \texttt{for} loops over rules with Numba’s \textit{prange} (parallel range function), which:
\begin{enumerate}
  \item \textit{Distributes rules across cores:}  Each CPU thread compiles and executes a disjoint chunk of the rule list, grounding and computing annotation updates in parallel.
  \item \textit{Minimizes synchronization:}  Threads accumulate their local updates in thread‐local buffers. At the end of the iteration a single, lightweight reduction step merges each thread into the global interpretation with only~$O(R)$ overhead (where \(R\) is the number of rules).
  \item \textit{Adaptively balances load:}  Numba’s runtime uses dynamic scheduling to hand off new rules to idle threads, smoothing out any variability in grounding cost between simple and complex rules.
\end{enumerate}

The combination of targeted software optimizations and efficient hardware-level parallelization significantly accelerates PyReason’s grounding and fixpoint computation processes. This integrated approach enables scalable, high-performance reasoning even on massive graphs that would otherwise be computationally prohibitive.

\section{Experimental Evaluation}
\label{sec:experiments}

In this section, we present the results of our empirical evaluation, which complement the theoretical analysis from Section~\ref{sec:skolemization}, and then go on to explore the computational benefits of leveraging Skolemization.
We begin with an application in a geospatial domain, showcasing the creation of new constants and the resulting scalability benefits. Next, we explore Knowledge Graph completion tasks on popular datasets to illustrate the utility and scalability of our approach in knowledge extraction tasks.
Finally, we extend our evaluation to Reinforcement Learning (RL) scenarios, where we demonstrate the scalability, portability, and explainability of our logic-based simulator in PyReason for complex game environments. These RL experiments also highlight the benefits of incorporating non-Markovian dynamics—which capture dependencies beyond the current state—and logic shielding—which provides safety constraints or policy guidance—to improve learning in challenging settings.

\subsection{Creation of Constants in a Geospatial Application}
\label{sec:geospatial}

This experimental setup demonstrates the efficacy of our Skolemization technique in dynamic geospatial environments, highlighting its computational and memory advantages.
The experiment involves a series of geospatial areas (maps) defined over spaces with varying granularity, ranging from resolution~2 (two levels of nested quadrants equaling~16 points) to resolution~9. Effectively, we may visualize a fully grounded out map as a grid. 
These resolutions define the set of possible constants.
The logic program in each experiment is designed to model a two-team game scenario, where agents navigate the map according to user-provided directions, constrained by grid space and agent type. Each team has two types of agents in this game, and each type has unique movement restrictions: 
(a)~{\em Border agents}: Limited to edges of the area, with one step per time unit. 
(b)~{\em Field agents}: Unrestricted movement, capable of two steps per time unit.
Leveraging the logic's capability of handling non-Markovian properties, two differing speeds are used, which is shown to be relevant in practical deployments.  Each team's agents begin at a corner, representing the ``minimum'' scenario for ground atoms in the Skolemization approach.

Leveraging PyReason as the inference engine, the experiment compares the two approaches. 
Key distinctions in initial graph construction include:
\begin{enumerate}
    \item {\em Non-Skolemization}: All spatial points and immediate neighbor edges are defined as constant symbols (nodes), resulting in a graph size proportional to map resolution.
    \item {\em Skolemization}: Initially grounds only points that are occupied by agents, plus their immediate neighbors, dynamically grounding new constants during reasoning based on movements.
\end{enumerate}
The Skolemization graph size correlates with agent positions rather than map resolution, potentially offering significant computational advantages in sparse environments.

\subsubsection{Reduction in Groundings with Skolemization: Creation of New Constants in a Geospatial Domain}

\begin{figure*}[t]
    \begin{center}
        \begin{subfigure}{0.45\linewidth}
            \begin{center}
                \includegraphics[width=\linewidth, trim=2 3 2 2, clip]{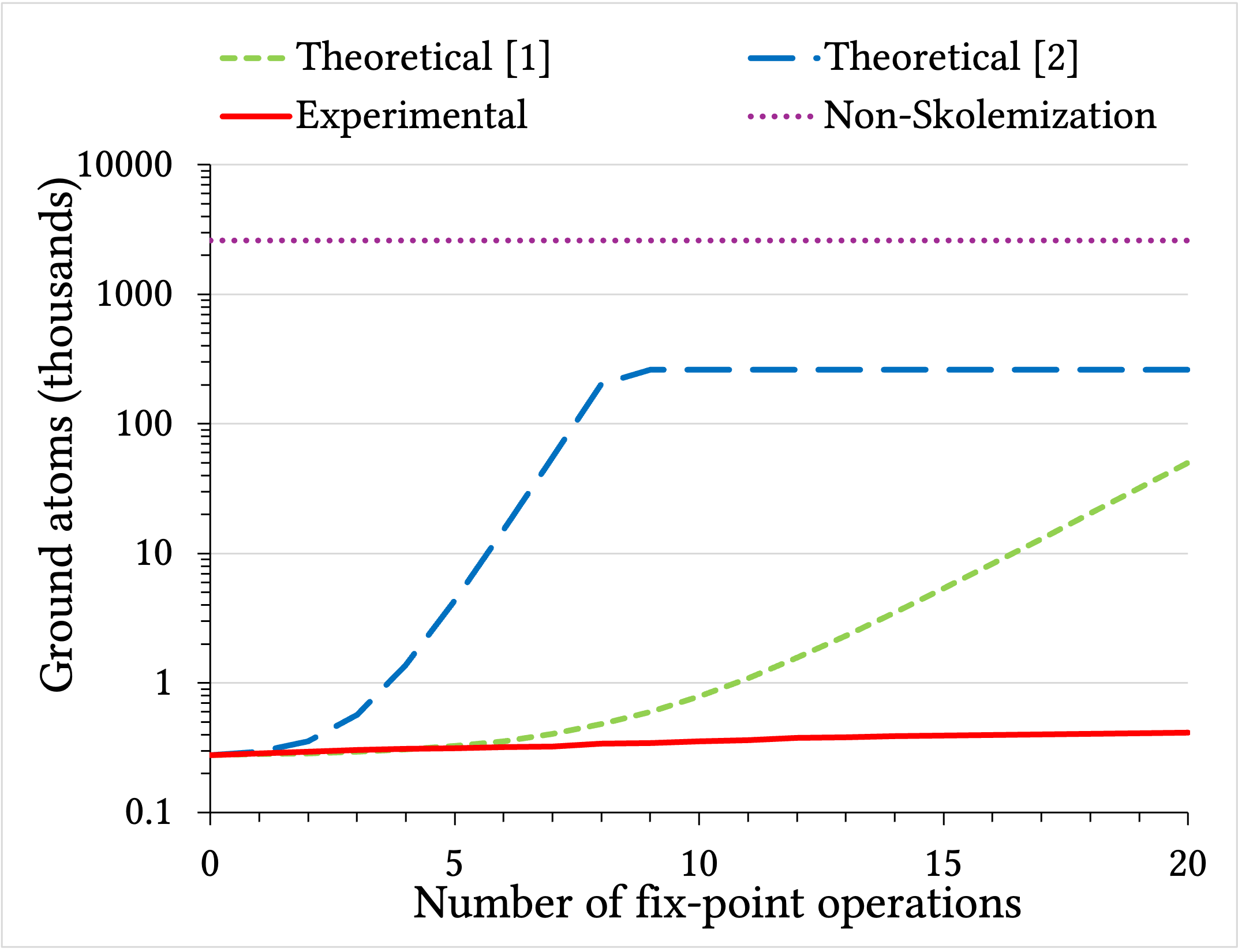}
            \end{center}
        \end{subfigure}
        \begin{subfigure}{0.45\linewidth}
            \begin{center}
                \includegraphics[width=\linewidth, trim=2 3 2 2, clip]{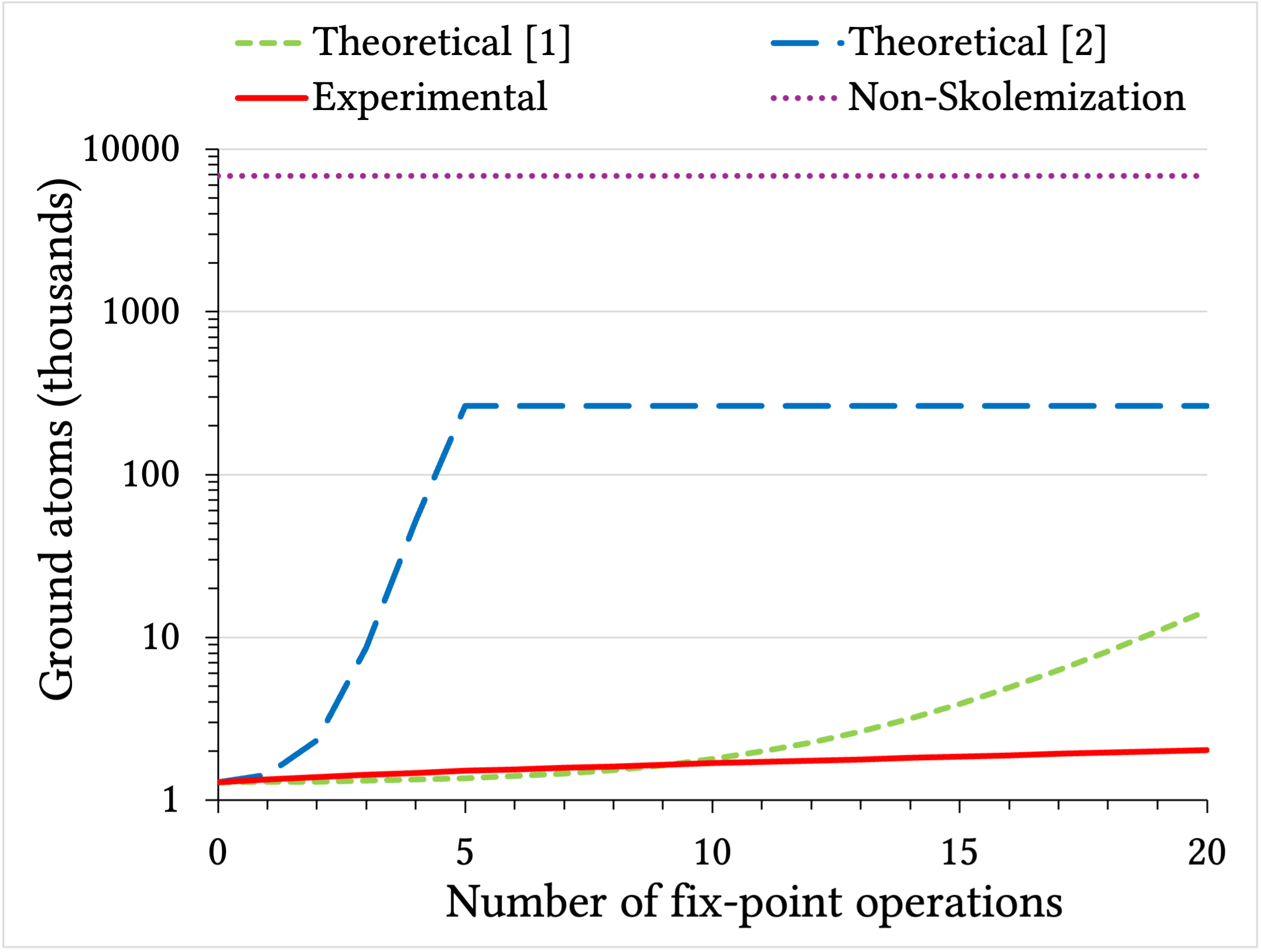}
            \end{center}
        \end{subfigure}
        \begin{subfigure}{0.45\linewidth}
            \begin{center}
                \includegraphics[width=\linewidth, trim=2 3 3 43, clip]{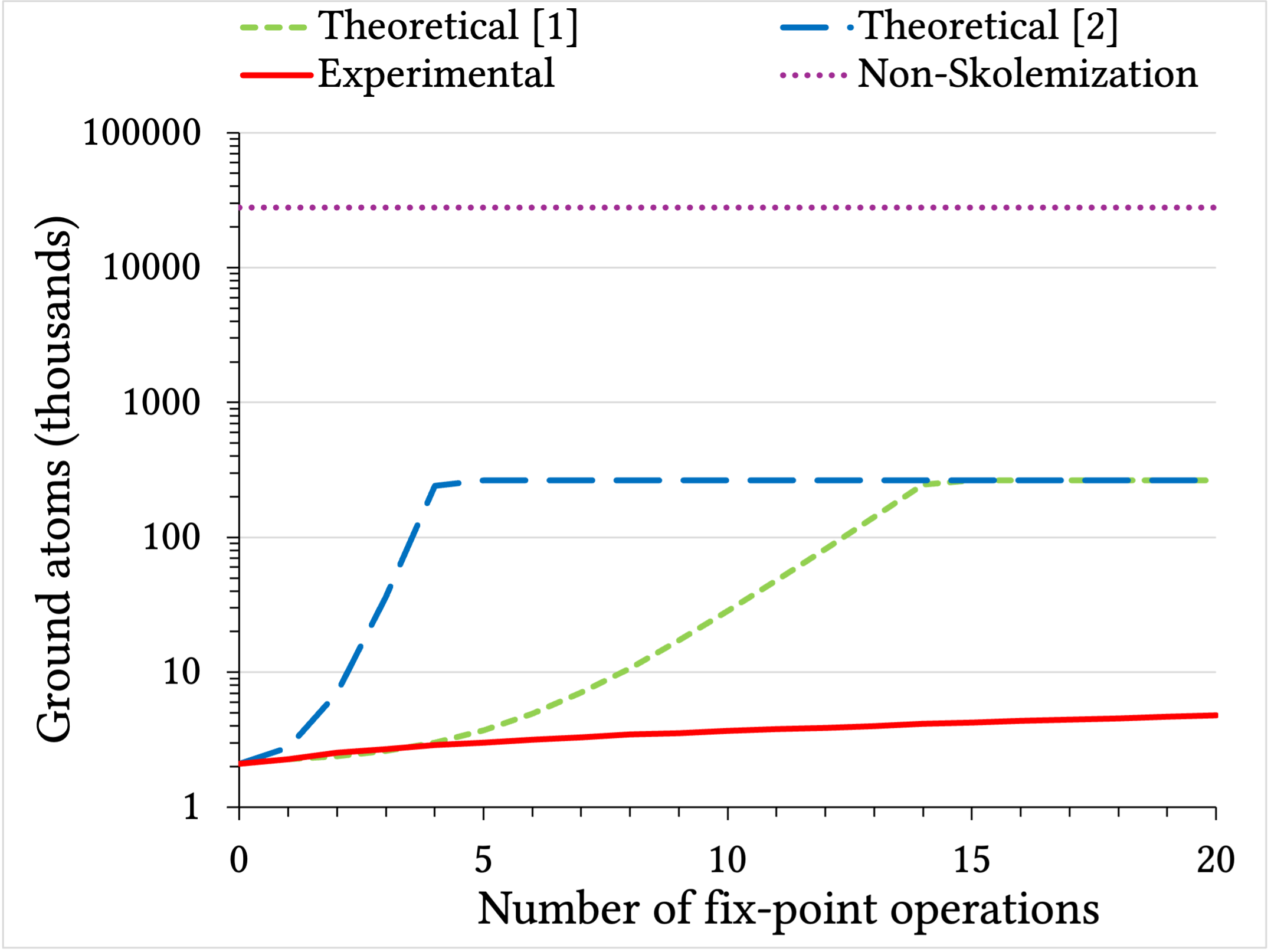}
            \end{center}
        \end{subfigure}
        \begin{subfigure}{0.45\linewidth}
            \begin{center}
                \includegraphics[width=\linewidth, trim=2 3 2 45, clip]{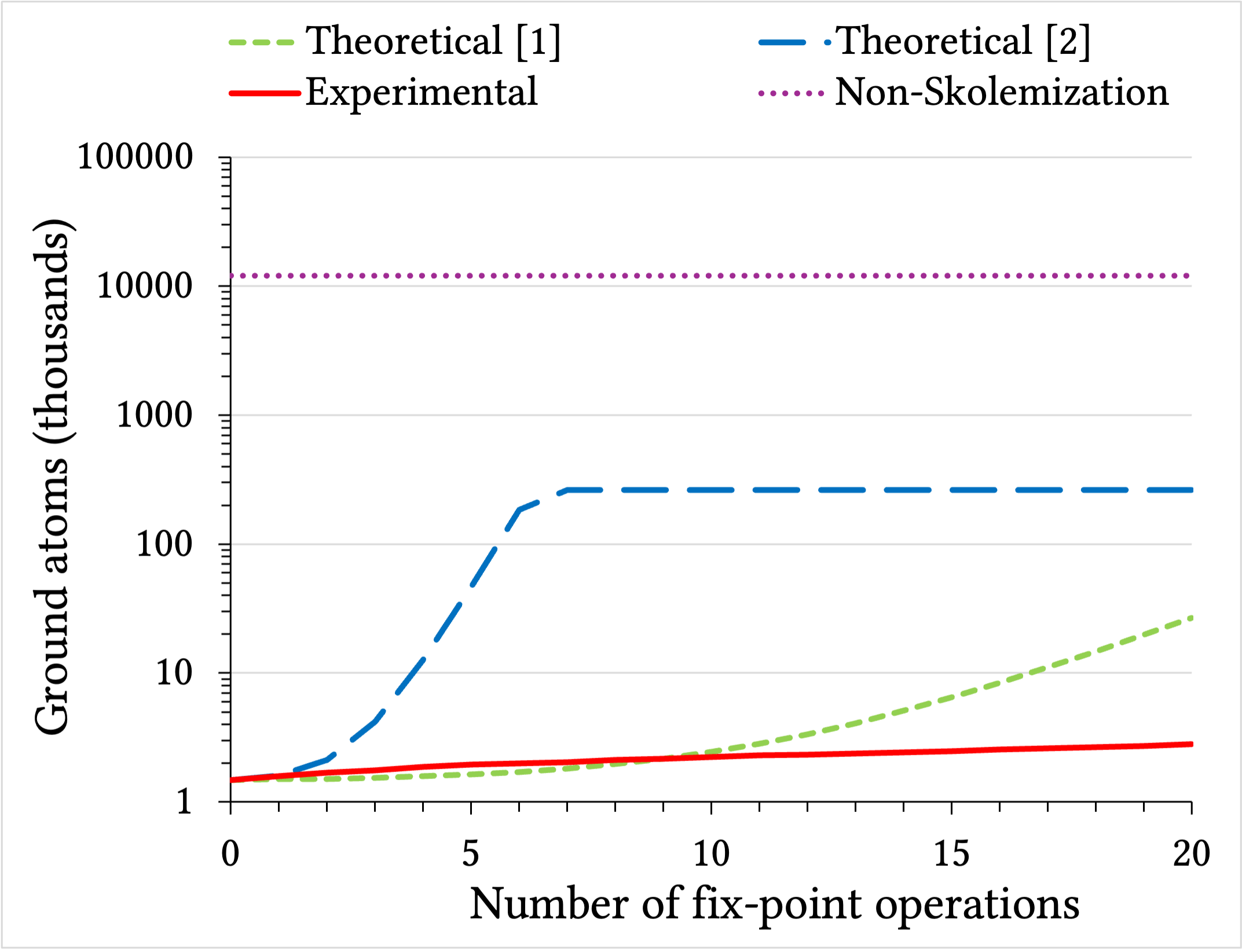}
            \end{center}
        \end{subfigure}
    \end{center}
    \caption{(Clockwise from top-left) Comparison of \#ground atoms with multiple applications of fixpoint operator ($\Pi_4, \Pi_{20}, \Pi_{40}, \Pi_{100}$).}
    \Description{Clockwise from top-left) Comparison of \#ground atoms with multiple applications of fixpoint operator ($\Pi_4, \Pi_{20}, \Pi_{40}, \Pi_{100}$).}
    \label{fig:ga-geo}
\end{figure*}

In Section~\ref{sec:skolemization}, we derived expressions for the maximum number of ground atoms with and without Skolemization (Eqs.~\ref{eq:ga-theorem-skol} and~\ref{eq:ga-non-skol}, respectively). 
Now, we experimentally aim to study how these expressions compare with empirical observations in our aforementioned use cases.
 
We create four programs, one each for when each of the two teams has~2~($\Pi_4$), 10~($\Pi_{20}$), 20~($\Pi_{40}$), and 50~($\Pi_{100}$) agents.  
We can think of minimal model computation via the fixpoint operator on these programs as essentially simulating agent movements. 
In our experiments, we allow for twenty time steps (note that time steps are the amount of time we represent, and are not necessarily related to the number of fixpoint operations).
The fully-grounded version of the graph contains 262,144 geographic constants (represented by nodes in the graph), and the largest graph (for 100 agents) is comprised of more than 27.7 million ground atoms. In comparison, the graph before simulation for the Skolemization approach only contains 2,084 ground atoms.
Results of the simulations are shown in Figure~\ref{fig:ga-geo} (note that the $y$-axes are log scale). 
We choose two parameter values for each setting to plot the theoretical bounds that closely mirror experimental values for the first reasoning step. 
We note that the theoretical bound is generally tight for lower numbers of inference steps (fixpoint operations), and the tightness of the bound varies based on the parameter.
Further, we observe that, even when theoretical results converge after certain number of fixpoint operations, it is up to a few orders of magnitude below the number of ground atoms for the Non-Skolemization case. All theoretical values converged before~50 fixpoint applications when run to convergence.

\subsubsection{Scalability: Scaling with Ground Atoms}

\begin{figure}[t]
    \begin{center}
        \begin{subfigure}{0.45\columnwidth}
            \begin{center}
                \includegraphics[width=\linewidth, trim=3 3 3 3, clip]{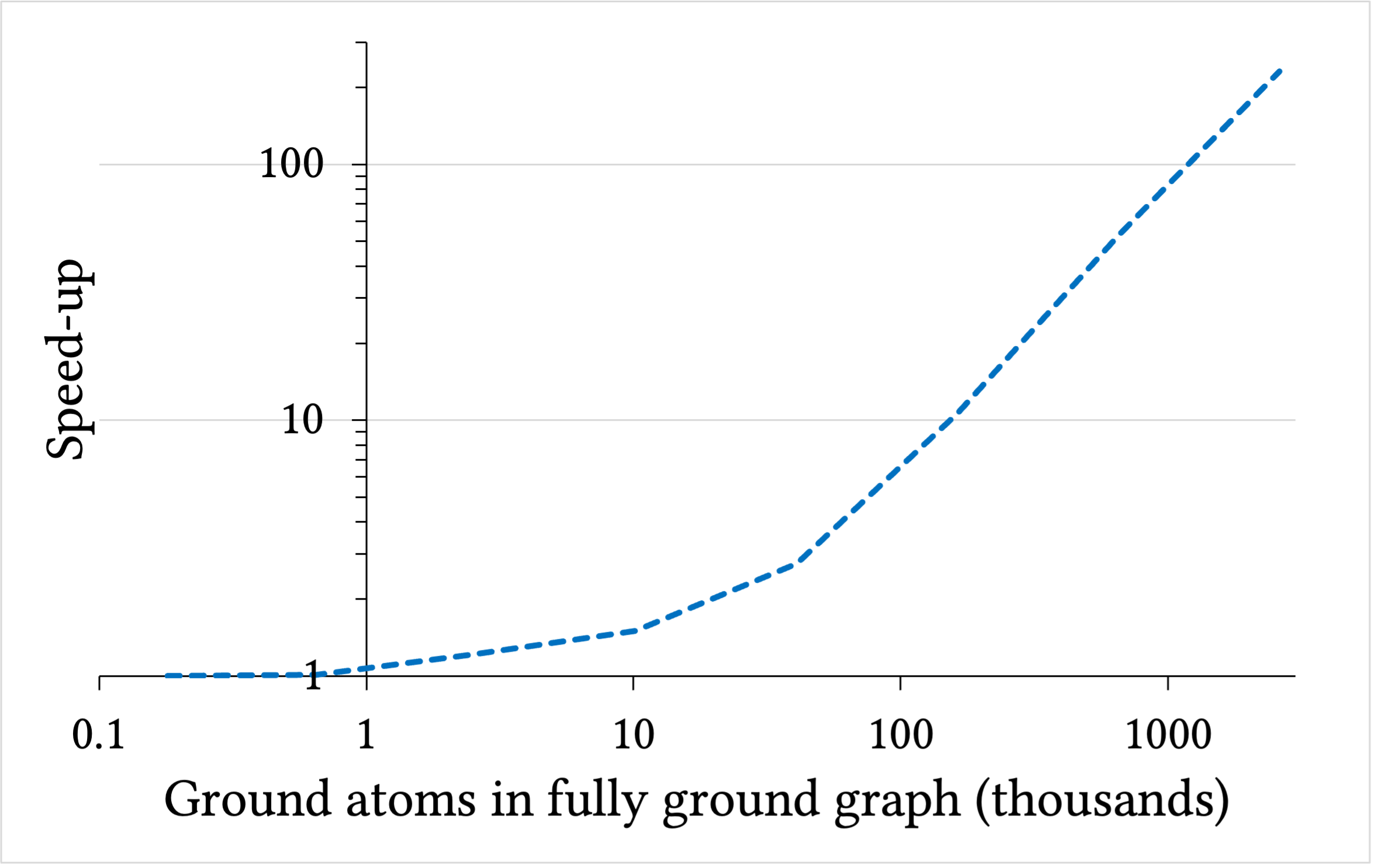}
            \end{center}
        \end{subfigure}
        \begin{subfigure}{0.45\columnwidth}
            \begin{center}
                \includegraphics[width=\linewidth, trim=3 3 3 3, clip]{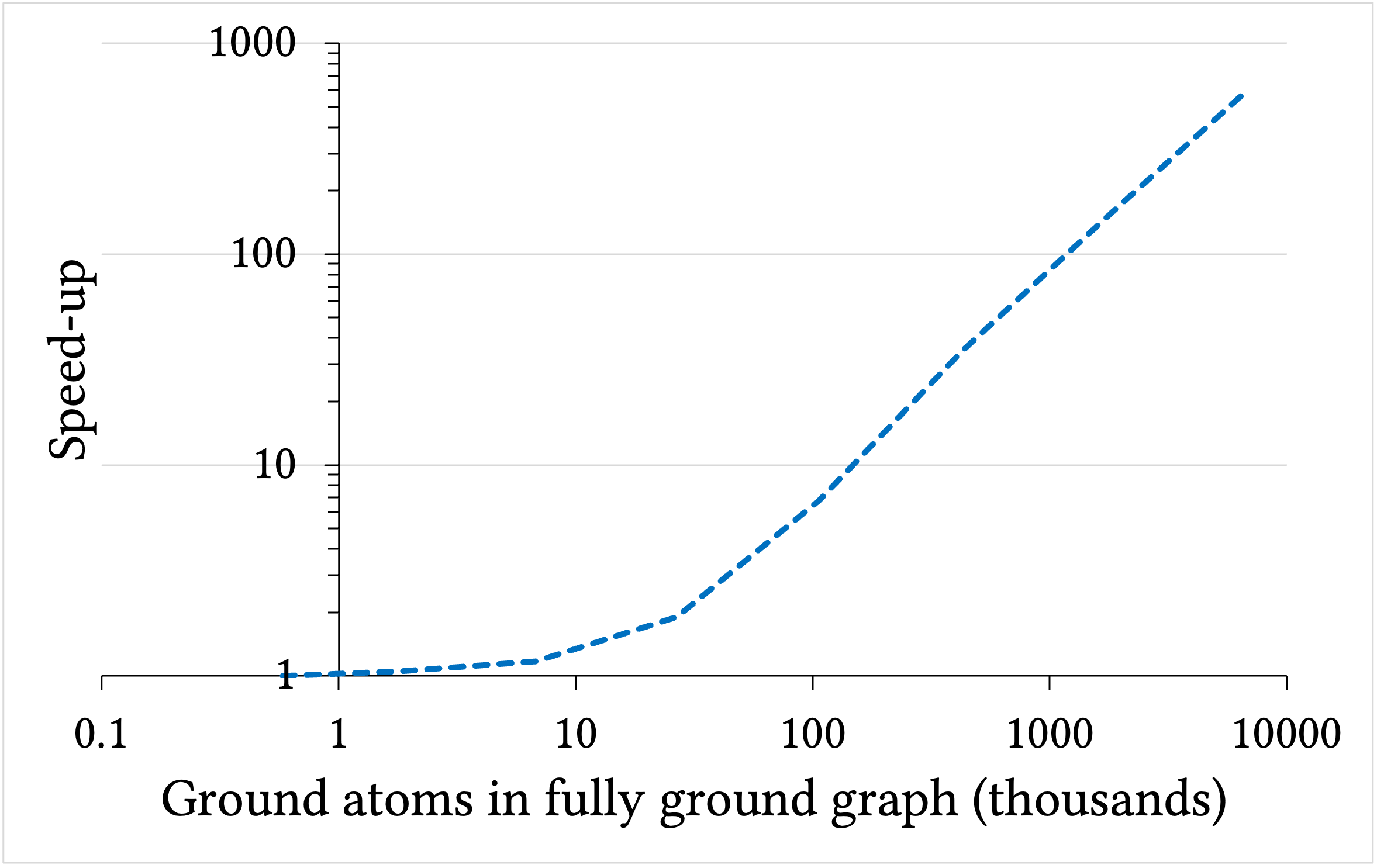}
            \end{center}
        \end{subfigure}
        \begin{subfigure}{0.45\columnwidth}
            \begin{center}
                \includegraphics[width=\linewidth, trim=3 3 3 3, clip]{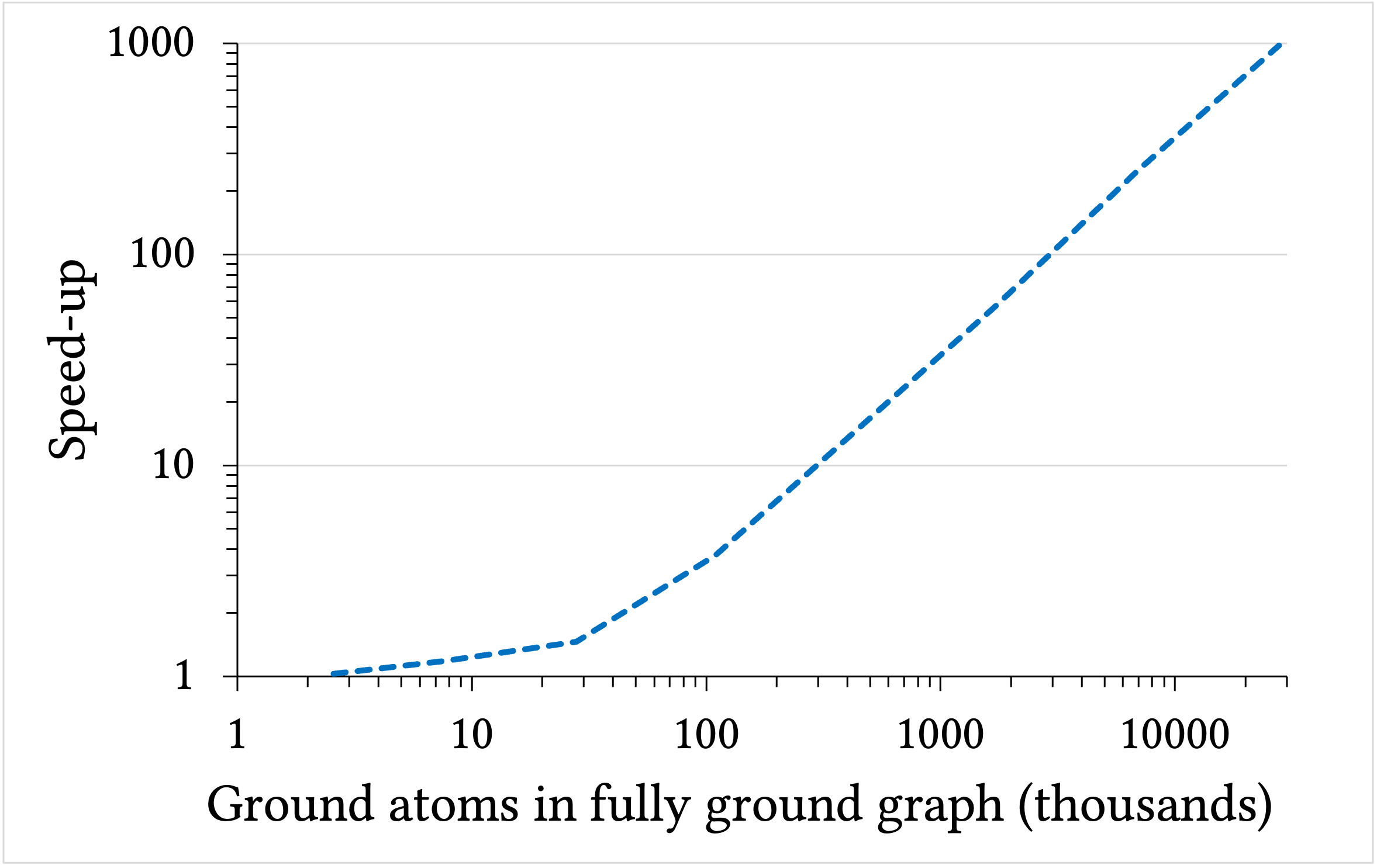}
            \end{center}
        \end{subfigure}
        \begin{subfigure}{0.45\columnwidth}
            \begin{center}
                \includegraphics[width=\linewidth, trim=3 3 3 3, clip]{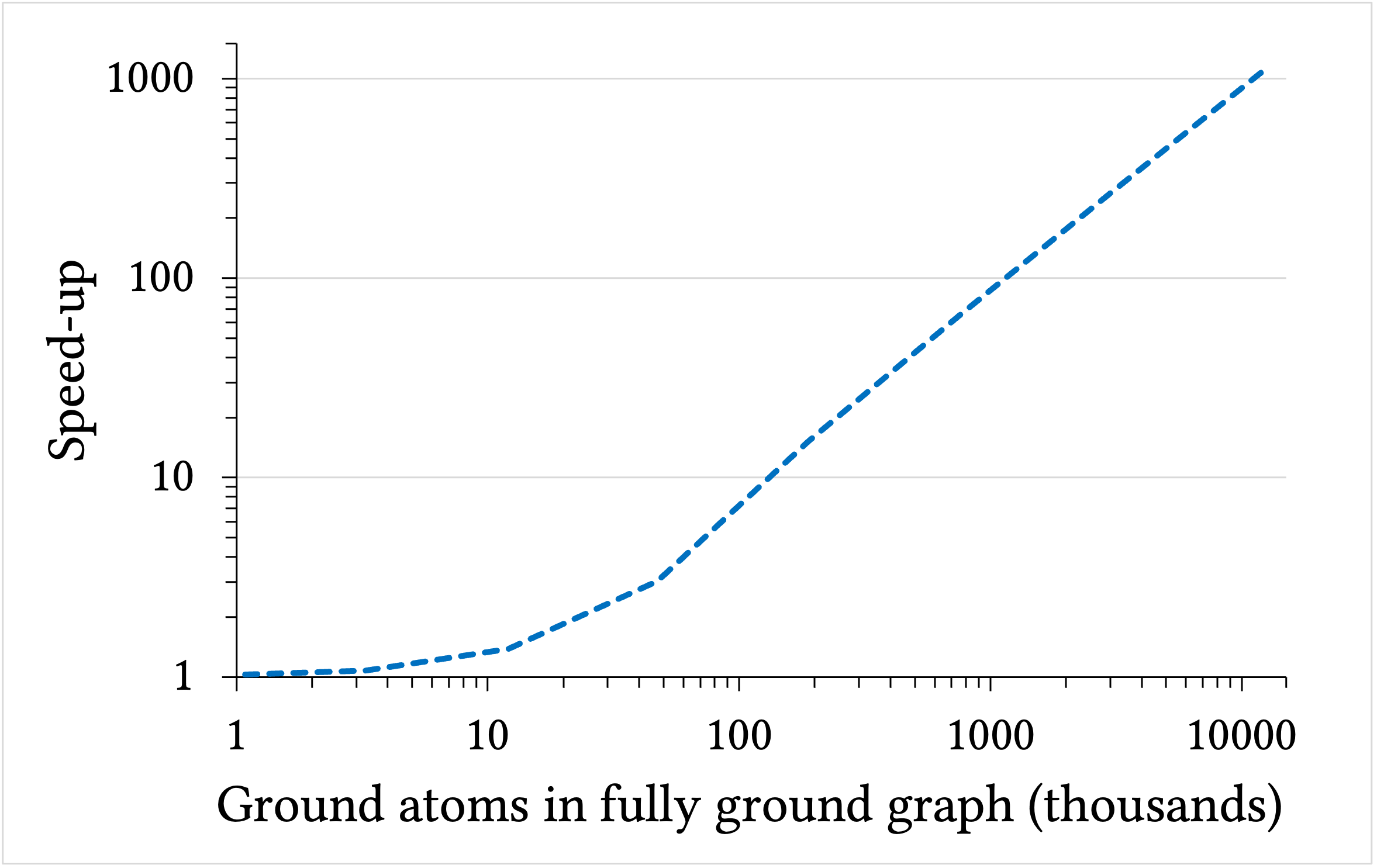}
            \end{center}
        \end{subfigure}
    \end{center}
    \caption{(Clockwise from top-left) Speed-up vs Map size for $\Pi_4$, $\Pi_{20}$, $\Pi_{40}$, $\Pi_{100}$.}
    \label{fig:speedup-geo}
\end{figure}

\begin{figure}[tb]
    \begin{center}
        \begin{subfigure}{0.45\columnwidth}
            \begin{center}
                \includegraphics[width=\linewidth, trim=3 3 3 3, clip]{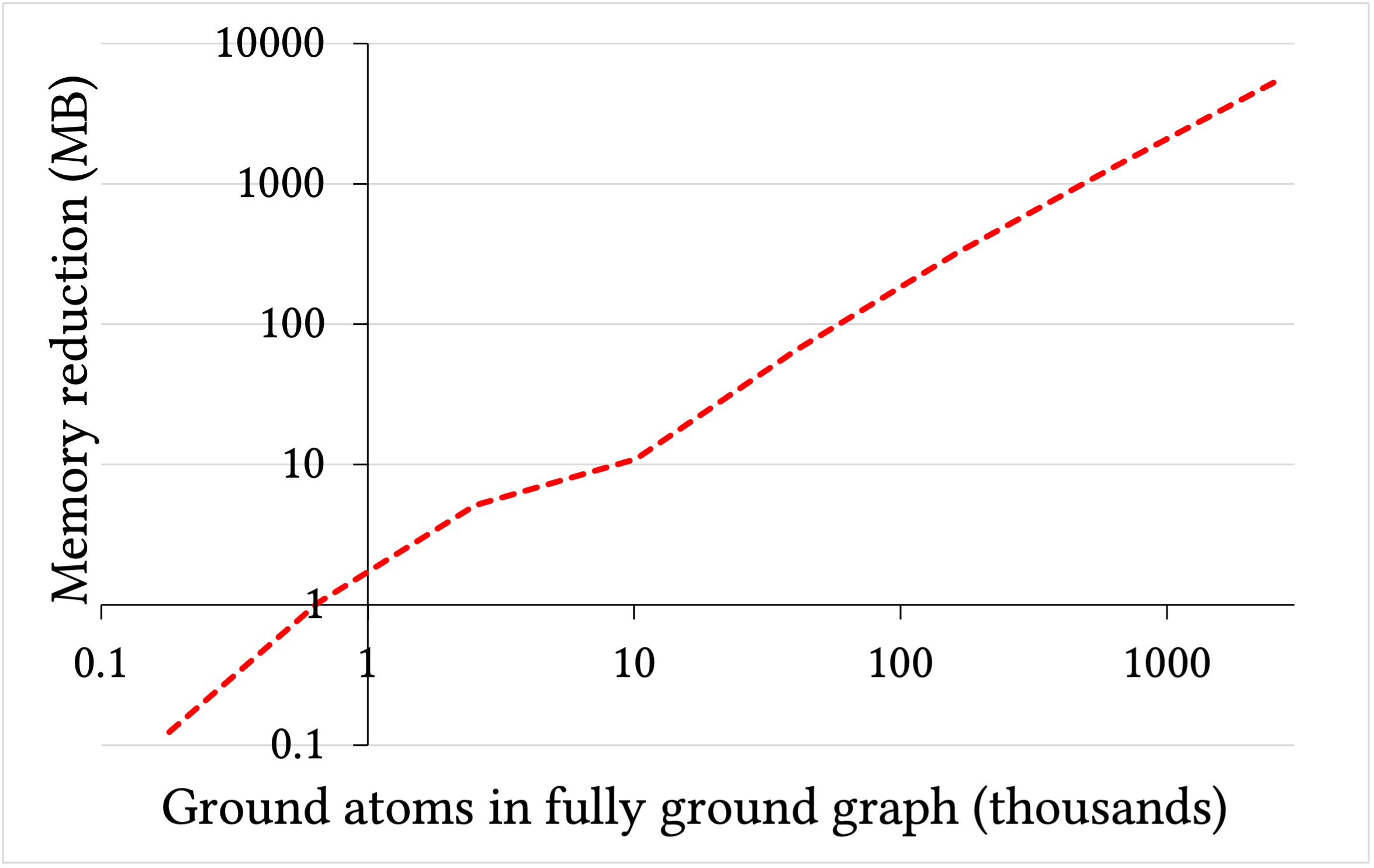}
            \end{center}
        \end{subfigure}
        \begin{subfigure}{0.45\columnwidth}
            \begin{center}
                \includegraphics[width=\linewidth, trim=3 3 4 3, clip]{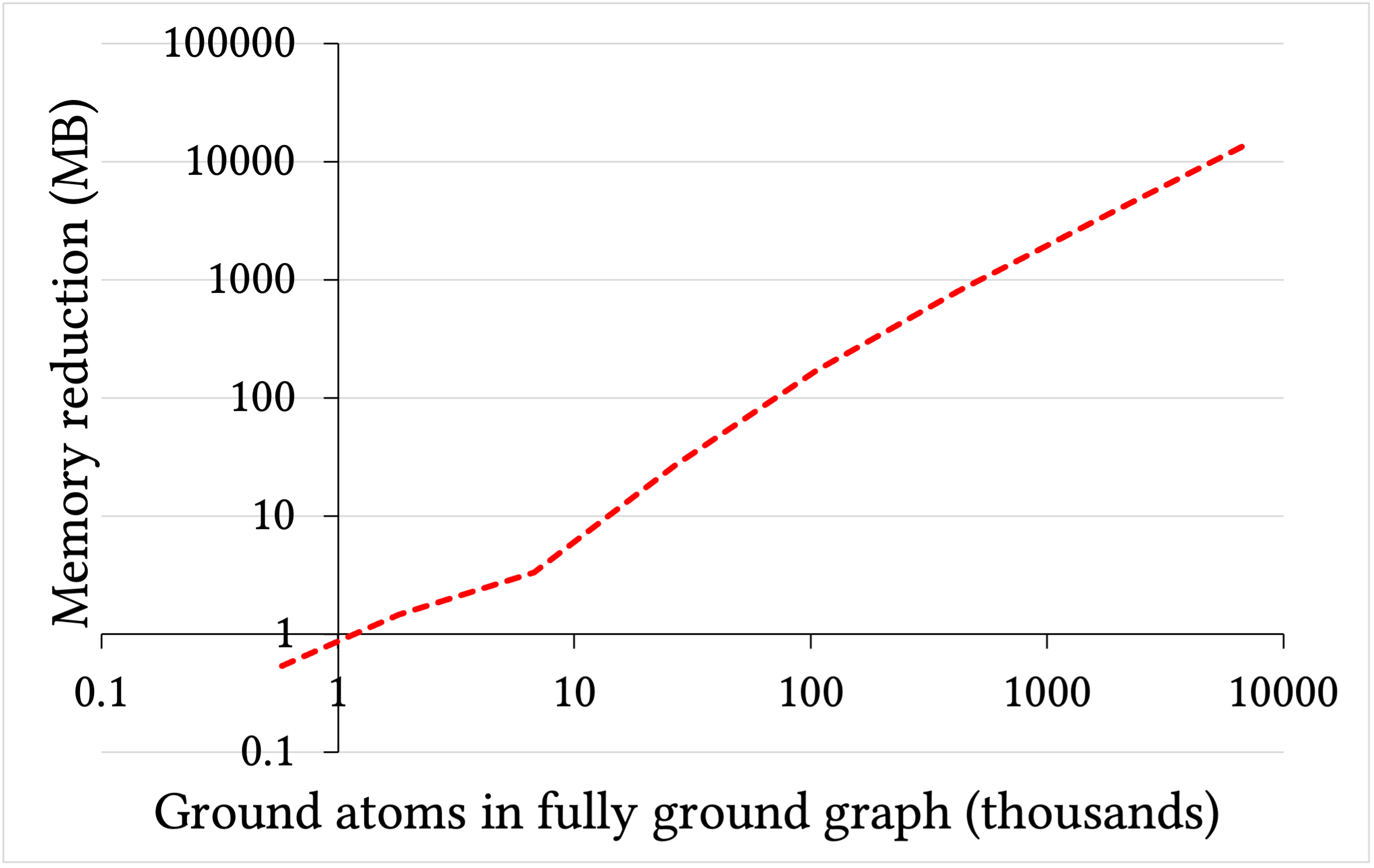}
            \end{center}
        \end{subfigure}
        \begin{subfigure}{0.45\columnwidth}
            \begin{center}
                \includegraphics[width=\linewidth, trim=3 3 3 3, clip]{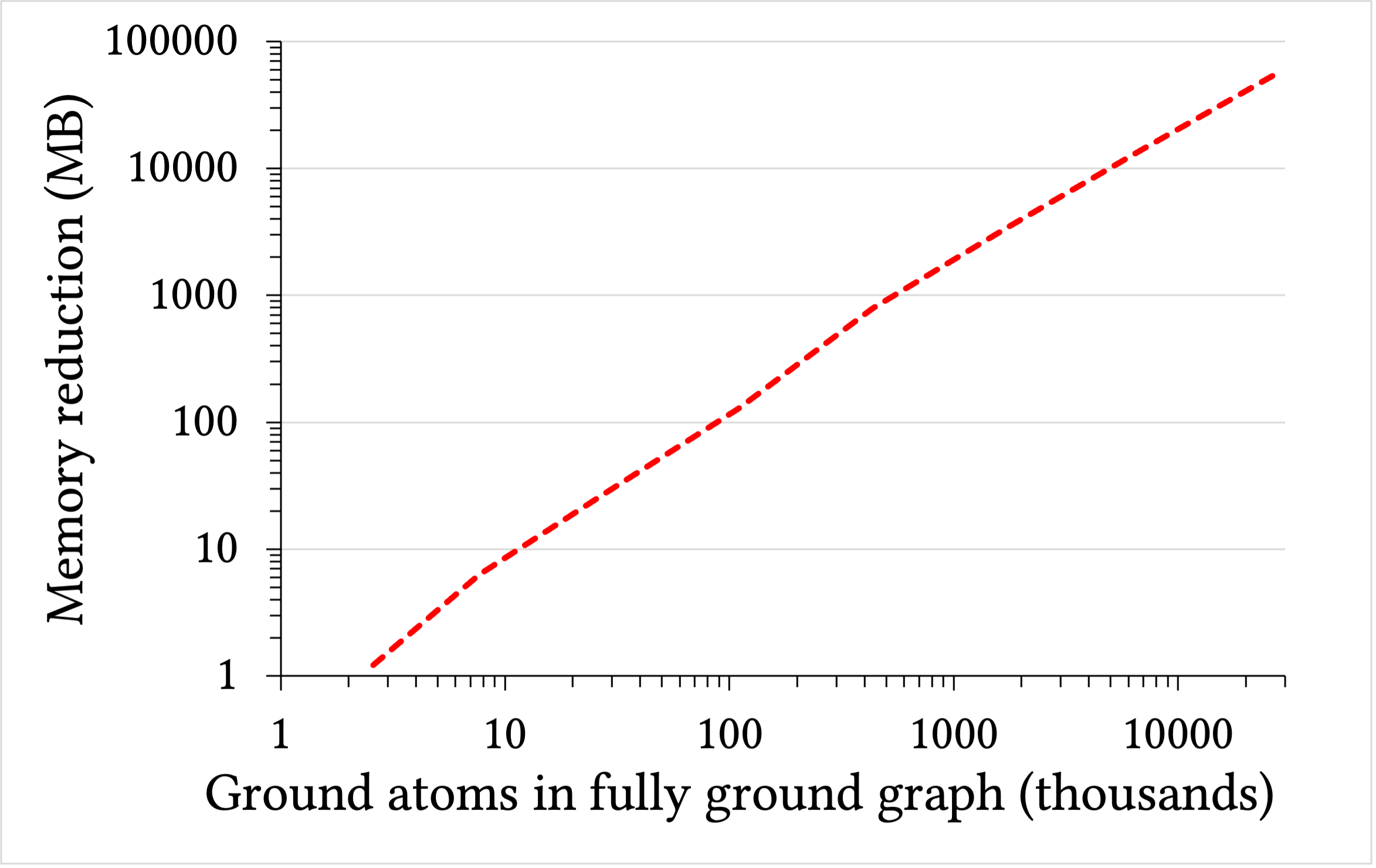}
            \end{center}
        \end{subfigure}
        \begin{subfigure}{0.45\columnwidth}
            \begin{center}
                \includegraphics[width=\linewidth, trim=3 3 3 3, clip]{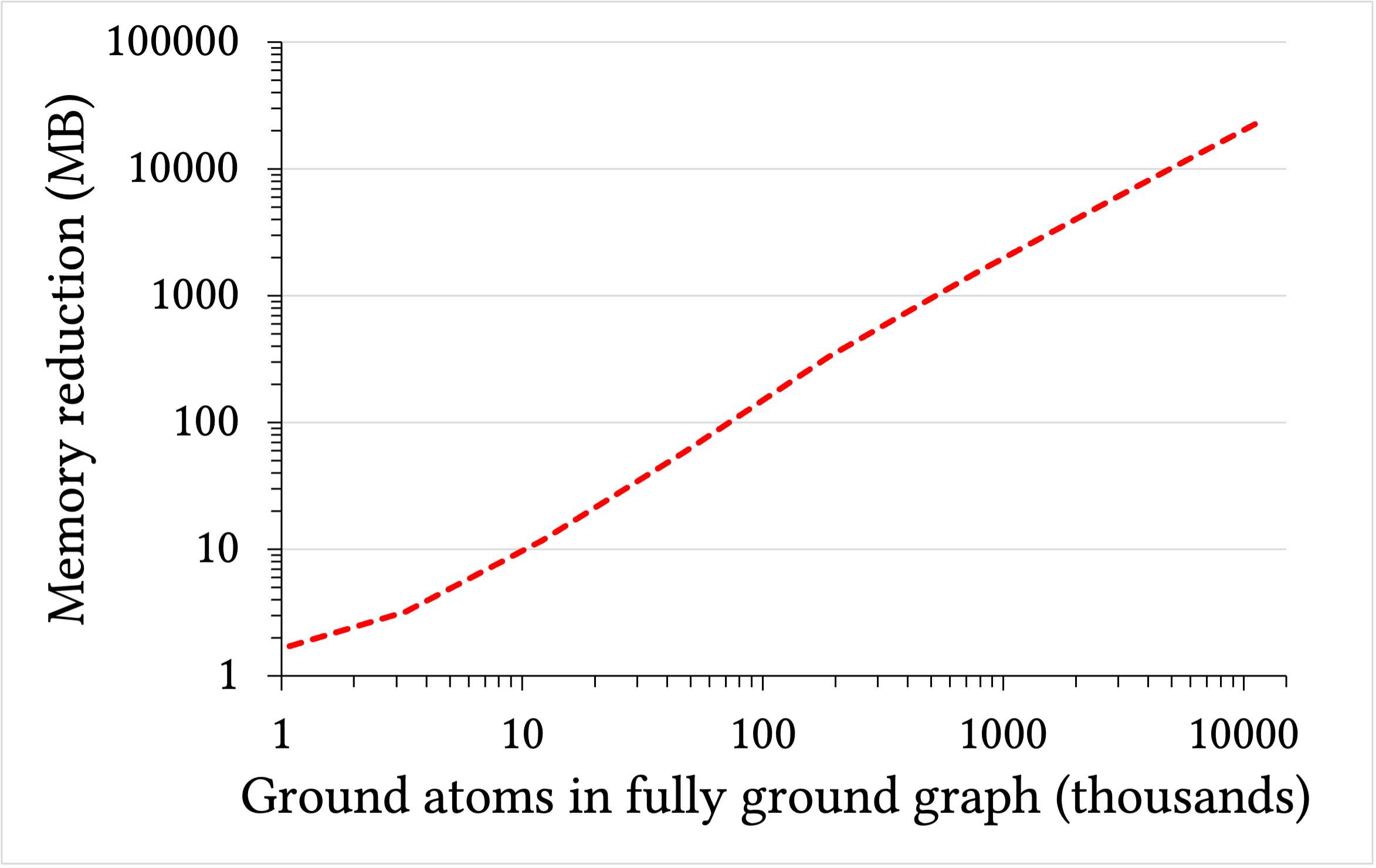}
            \end{center}
        \end{subfigure}
    \end{center}
    \caption{(Clockwise from top-left) Memory reduction (in MB) vs Map size for $\Pi_4$, $\Pi_{20}$, $\Pi_{40}$, $\Pi_{100}$.}
    \label{fig:mem-geo}
\end{figure}

In the aforementioned geospatial domain, we again make use of programs $\Pi_4$, $\Pi_{20}$, $\Pi_{40}$, and $\Pi_{100}$; for each case, we simulated agent movements for 100 actions while varying map size, thereby varying the number of ground atoms in the fully-grounded graph (Non-Skolemization case).
Figure~\ref{fig:speedup-geo} illustrates the observed speedup, defined as the ratio of running time for the non-Skolemization approach versus the Skolemization approach, across various experimental settings---note that both axes in these plots are log scale. 
A discernible pattern emerges wherein the magnitude of speedup increases substantially as the number of ground atoms escalates (corresponding to larger graph sizes). Moreover, the actual speedup demonstrates a positive correlation with the number of agents; this observation aligns with intuitive expectations, as a greater number of agents requires more groundings during the reasoning process.
Figure~\ref{fig:mem-geo} depicts the reduction in memory footprint achieved by our approach (again, both axes are log scale). 
Notably, we observe memory savings of up to~60GB for the largest graph with the highest number of agents.

\subsection{Knowledge Graph completion with multi-step reasoning}
\label{sec:kg-comp}

We conducted Knowledge Graph (KG) completion experiments on four standard datasets: UMLS~\cite{umls}, YAGO03-10~\cite{yago}, FB15k-237, and WN18RR~\cite{wn18rr_fb15k237}. 
To obtain extensive runtime and memory footprint data within a generous time limit, we created subsets of the latter three datasets using stratified sampling, preserving the underlying structure and relation types---Table~\ref{tab:kg_details} provides details of the datasets used.
The Skolemization approach graph was limited to training set triplets, while the Non-Skolemization graph included all possible edge relations, resulting in a direct relationship between graph size and the number of constants in the training set for the Non-Skolemization approach.
We used the publicly available AnyBURL rule learner~\cite{meilicke2024anytime} to generate rules, limiting the rule learning process to~20 minutes per dataset for consistency and efficiency. We filtered for rules with minimum 70\% confidence, and obtained between a few and several thousand rules for different datasets. 

KG completion was performed in PyReason using the created graph and generated rules, and evaluation results were computed using AnyBURL's evaluation script. 
To assess the utility of multi-step reasoning for KG completion tasks we use the metrics hits@k (the fraction of true predictions that appear in the top-$k$ predictions), precision (ratio of true positives to the total number of predictions), and recall (ratio of true positives to the total number of relevant instances)---we include the latter two in order to better understand KG completion performance, particularly to observe the impact of multi-step reasoning.

\begin{table}[t]
\caption{Knowledge graph datasets used \textit{(*subsets)}.}
\label{tab:kg_details}
\begin{tabular}{lccc}
\toprule
Dataset & Nodes & Edges & Unique Predicates \\
\midrule
UMLS & 135 &5,216 & 46 \\
FB15k-237* &945 & 1,108 & 237 \\
YAGO03-10* &3,029&5,020 & 37\\
WN18RR* & 8,809 & 10,007 & 11\\
\bottomrule
\end{tabular}
\end{table}

\subsubsection{Reduction in Groundings with Skolemization: Grounding in a Sparse Knowledge Graph}

\begin{figure*}[t]
    \begin{center}
        \begin{subfigure}{0.45\linewidth}
            \begin{center}
                \includegraphics[width=\linewidth, trim=2 3 3 2, clip]{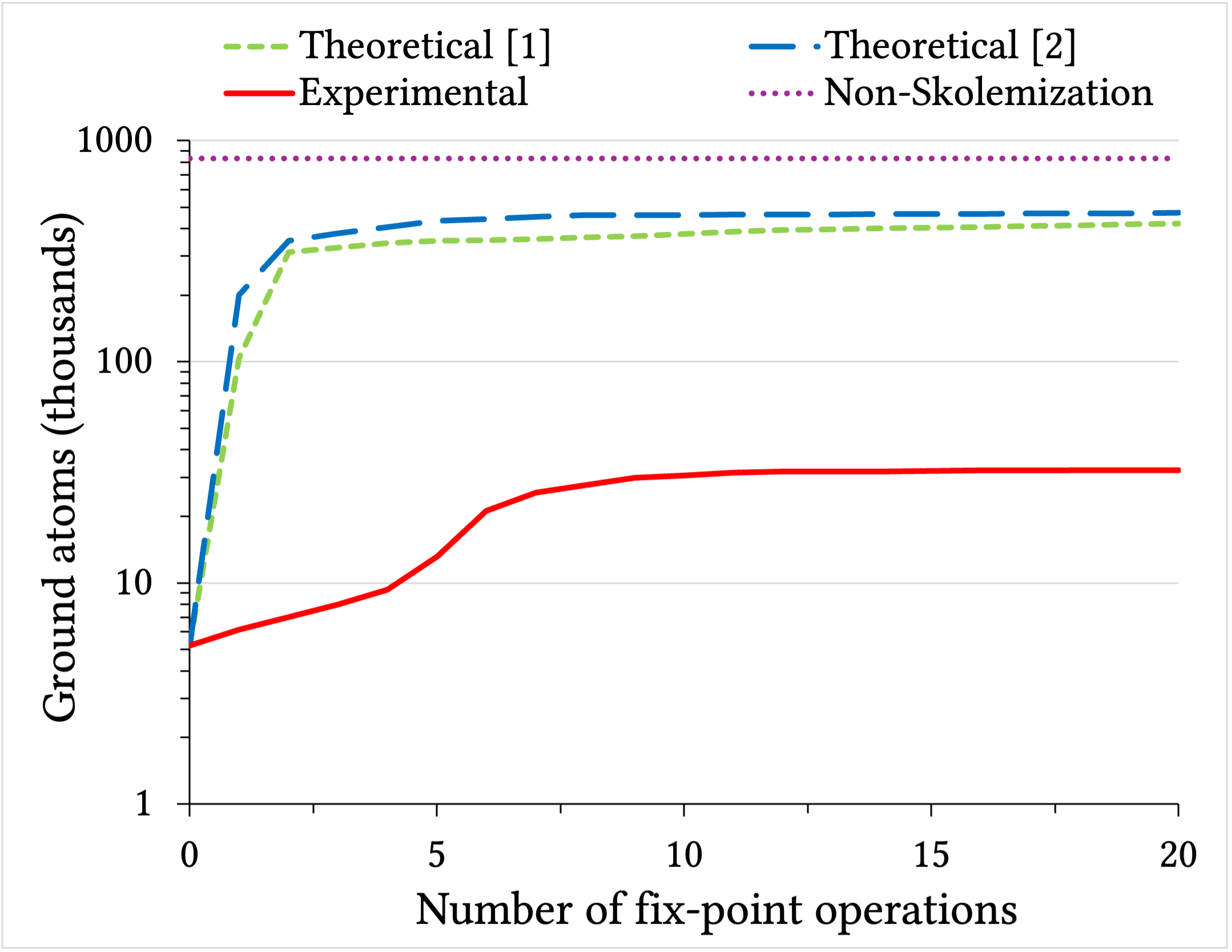}
            \end{center}
        \end{subfigure}
        \begin{subfigure}{0.45\linewidth}
            \begin{center}
                \includegraphics[width=\linewidth, trim=2 3 3 5, clip]{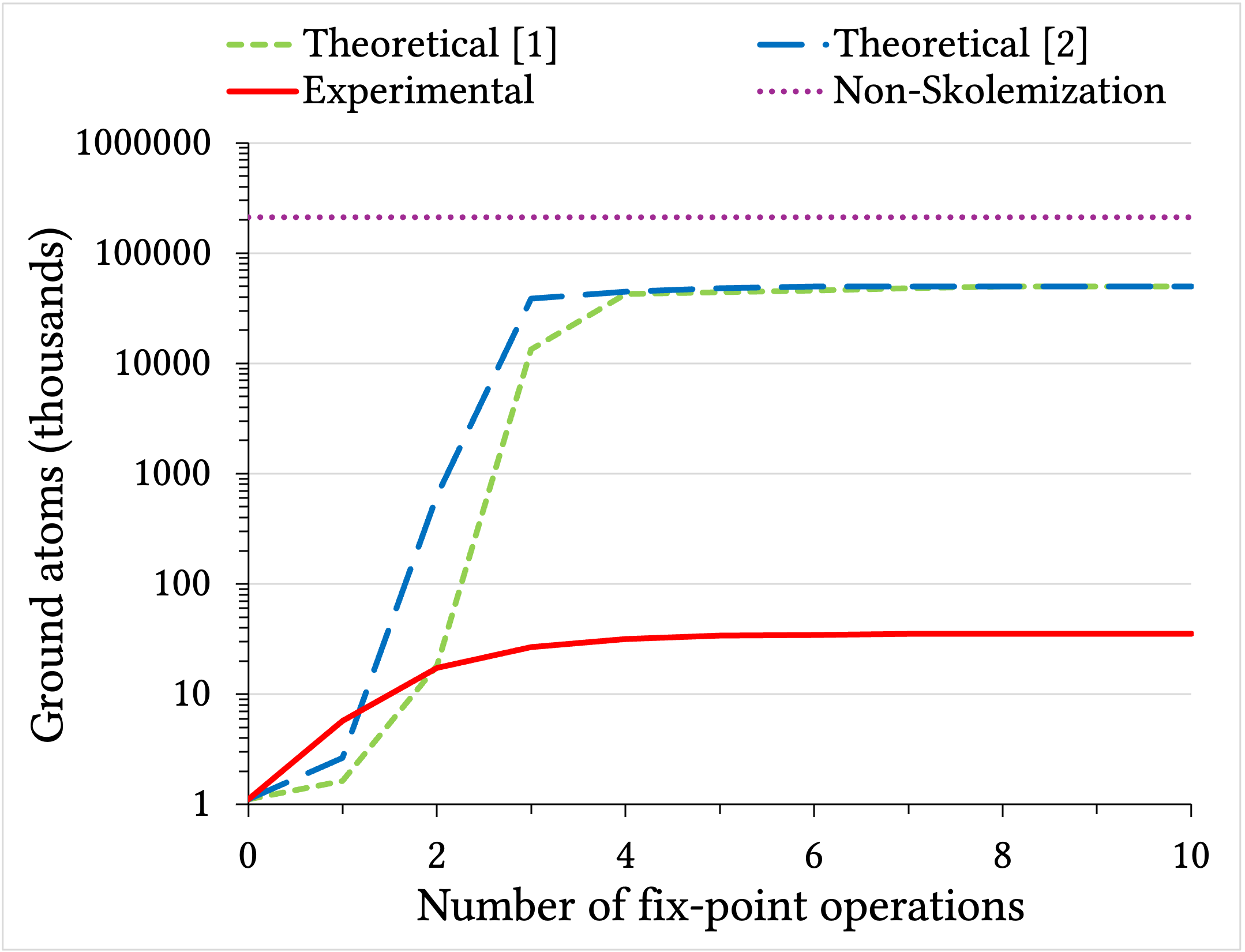}
            \end{center}
        \end{subfigure}
        \begin{subfigure}{0.45\linewidth}
            \begin{center}
                \includegraphics[width=\linewidth, trim=2 3 3 42, clip]{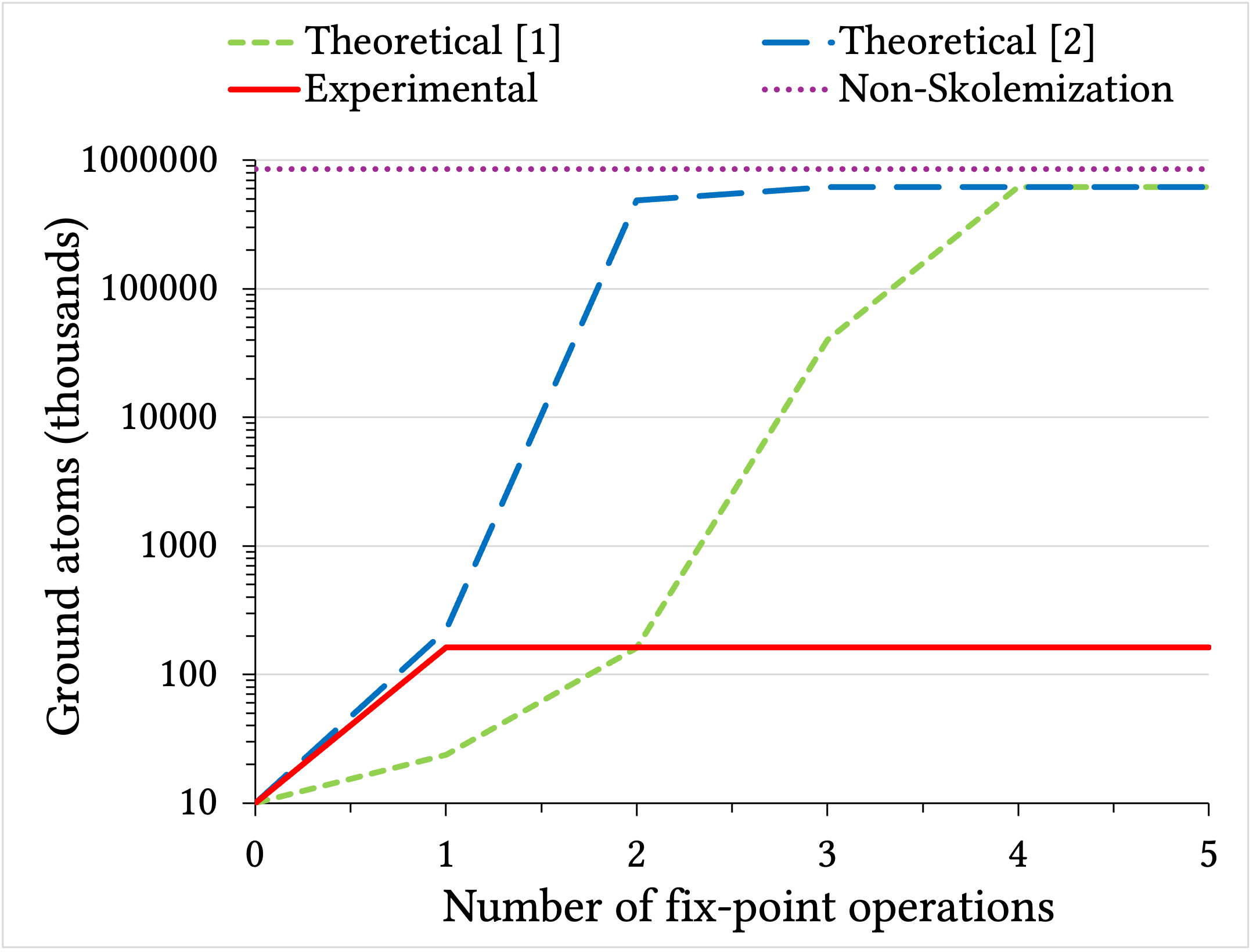}
            \end{center}
        \end{subfigure}
        \begin{subfigure}{0.45\linewidth}
            \begin{center}
                \includegraphics[width=\linewidth, trim=2 3 3 42, clip]{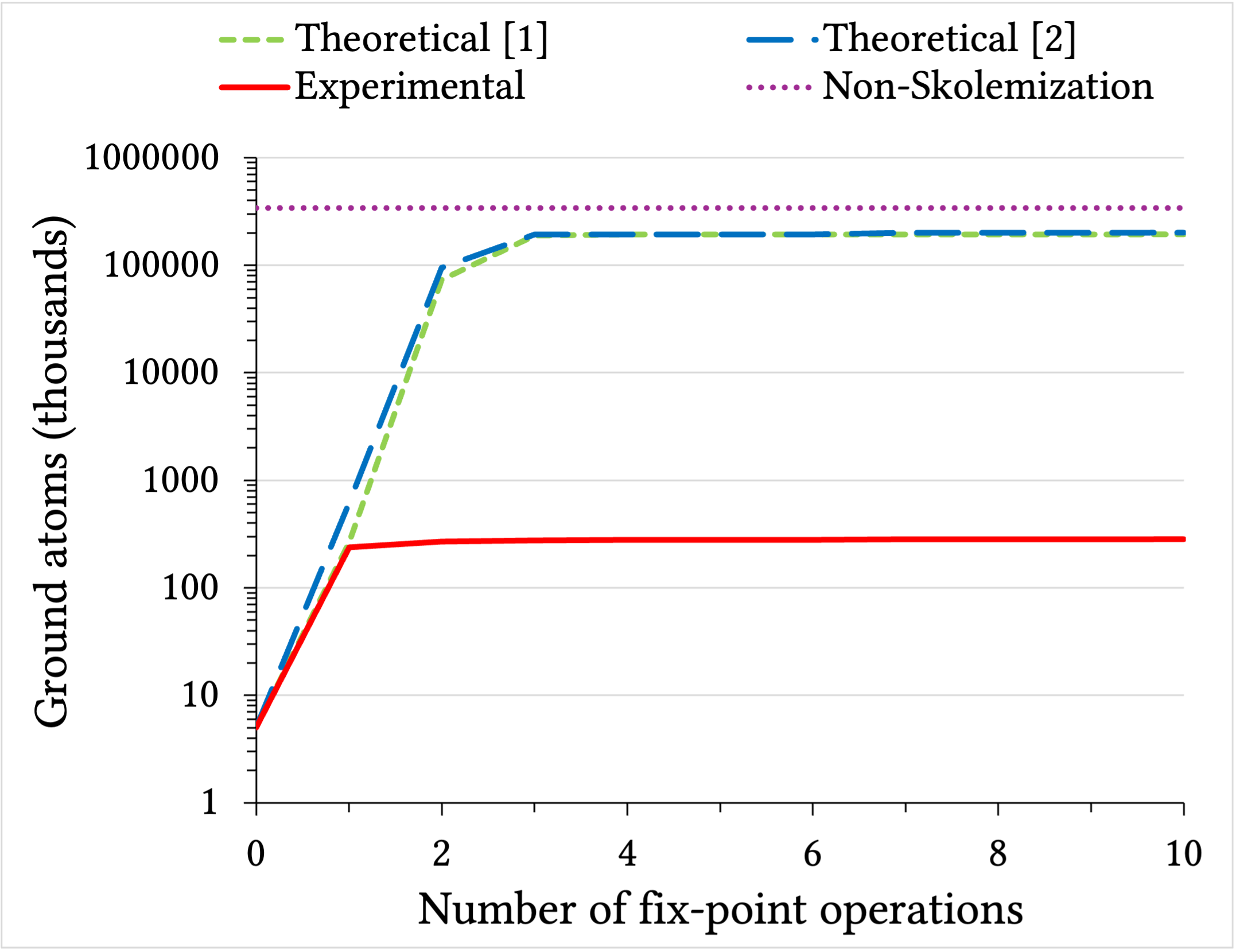}
            \end{center}
        \end{subfigure}
    \end{center}
    \caption{(Clockwise from top-left) Comparison of number of ground atoms with multiple applications of fixpoint operator for UMLS, FB15k-237, YAGO03-10, and WN18RR.}
    \label{fig:ga-kg}
\end{figure*}

For each of the four generated knowledge graphs, we perform multi-step reasoning and observe how many new groundings are made after each fixpoint operation. 
To ensure the feasibility of our comprehensive experimental suite within reasonable time constraints, we employed the following combinations of program size (quantified by ground rules) and fixpoint operations for each dataset: 
\begin{itemize}
    \item UMLS: 10,000 rules, 20 fixpoints; 
    \item FB15k-237: 1,565 rules, 10 fixpoints; 
    \item YAGO03-10: 1,000 rules, 10 fixpoints; and
    \item WN18RR: 100 rules, 5 fixpoints.
\end{itemize}
Results are plotted in Figure~\ref{fig:ga-kg} ($y$ axes on log scale), and show similar characteristics to those of the geospatial domain. 
The theoretical limit, while initially approximating experimental values at lower fixpoint operations, rapidly increases before stabilizing at a generous upper bound. We note that cases where the number of constants produced exceeds the theoretical bound (in lower inference steps) are due to the choice of parameters.  Despite this, the upper bound consistently remains significantly below the number of ground atoms for the non-Skolemization approach.
An intriguing observation emerged regarding the influence of unique predicates in the head of non-ground rules within a program. A higher diversity of predicates typically correlated with delayed convergence and a greater number of ground atoms at convergence. These findings support our hypothesis that practical applications require materializing only a small fraction of all possible groundings. 
The Skolemization approach thus facilitates efficient reasoning in logic programming by substantially reducing the number of required ground atoms.

\subsubsection{Scalability: Scaling with Rules and Fixpoint Applications}

\begin{figure}[t]
    \begin{center}
        \begin{subfigure}{0.4\columnwidth}
            \begin{center}
                \includegraphics[width=\linewidth, trim=3 3 3 3, clip]{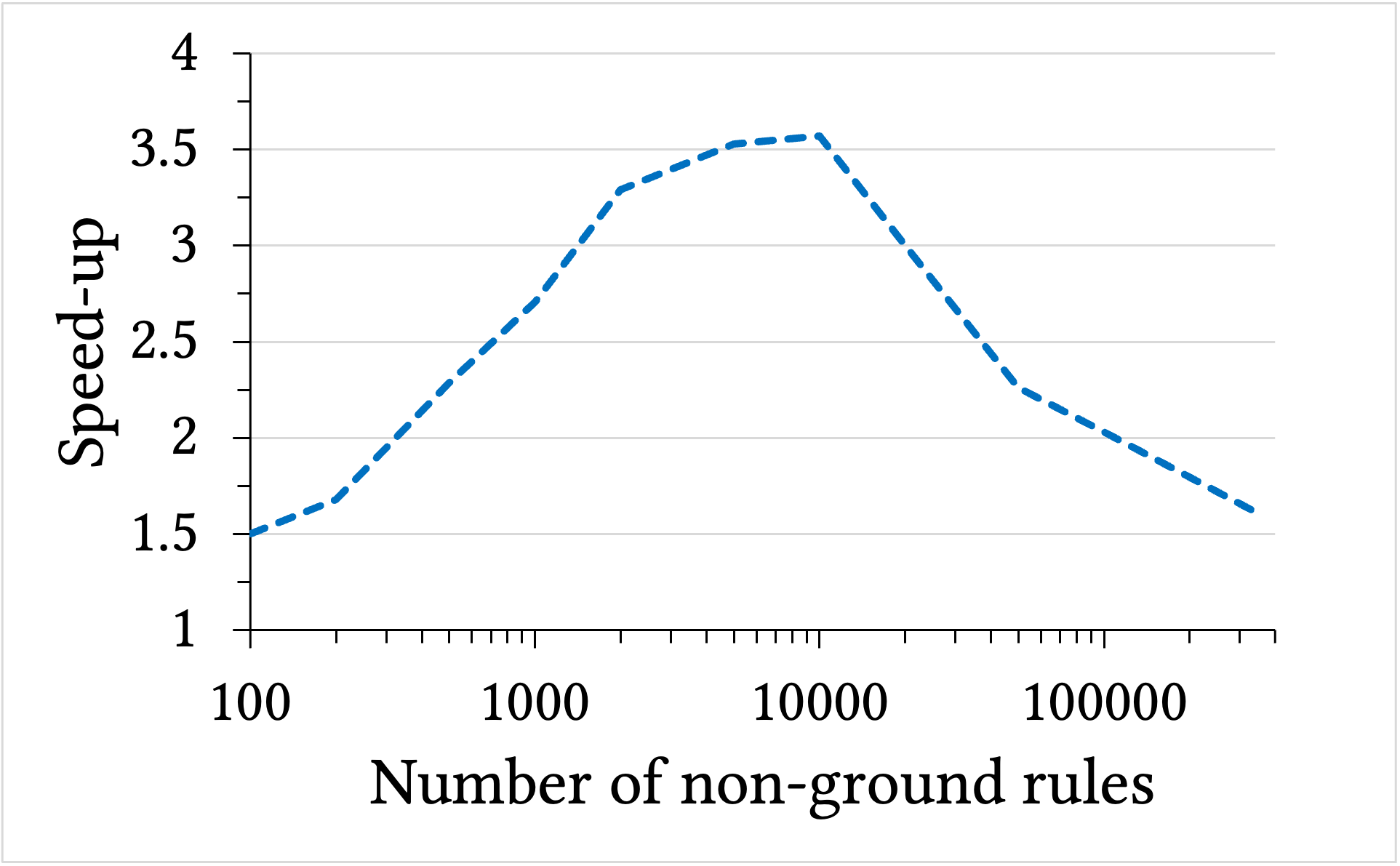}
            \end{center}
        \end{subfigure}
        \begin{subfigure}{0.4\columnwidth}
            \begin{center}
                \includegraphics[width=\linewidth, trim=3 3 3 3, clip]{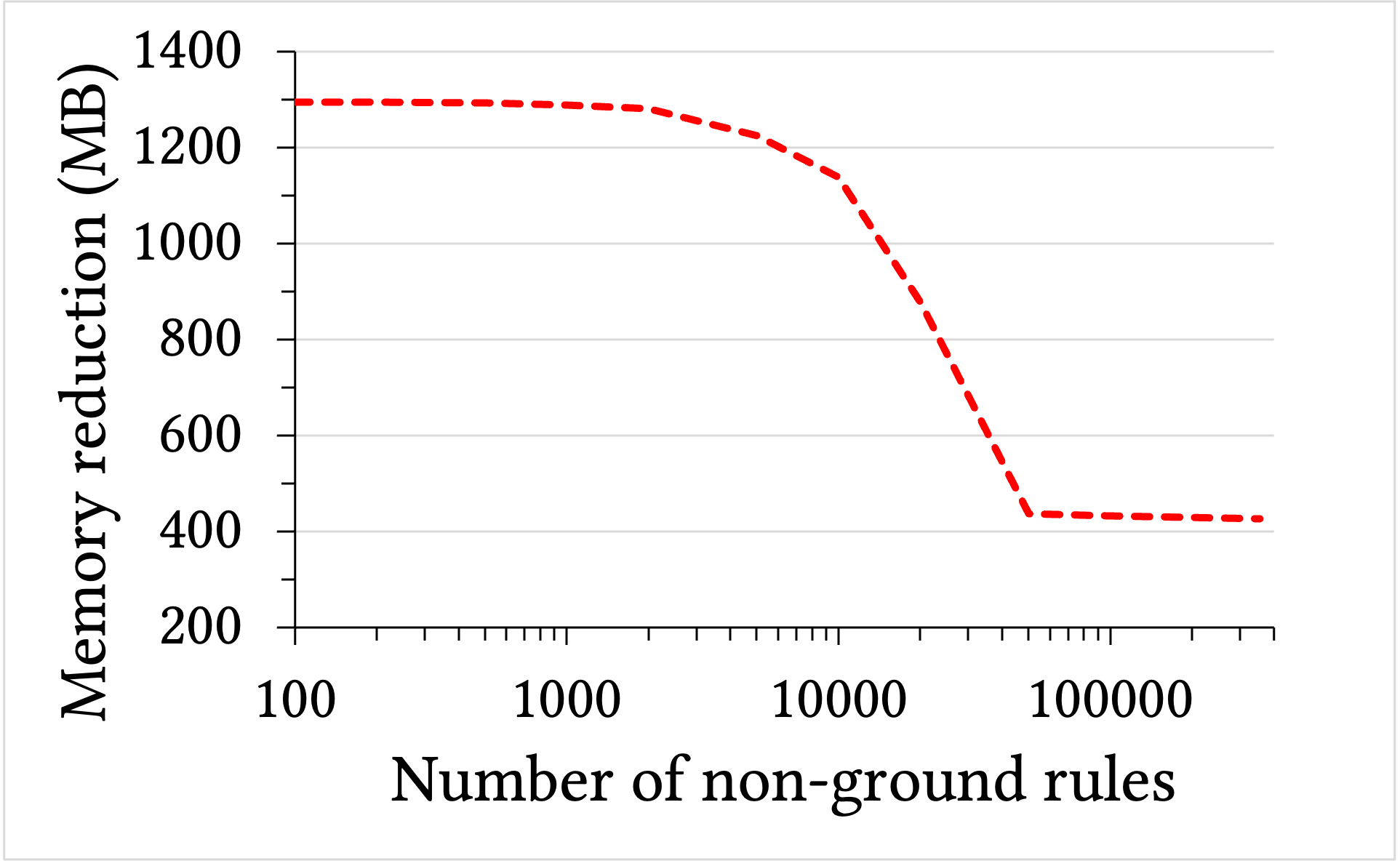}
            \end{center}
        \end{subfigure}
        \begin{subfigure}{0.4\columnwidth}
            \begin{center}
                \includegraphics[width=\linewidth, trim=3 3 3 3, clip]{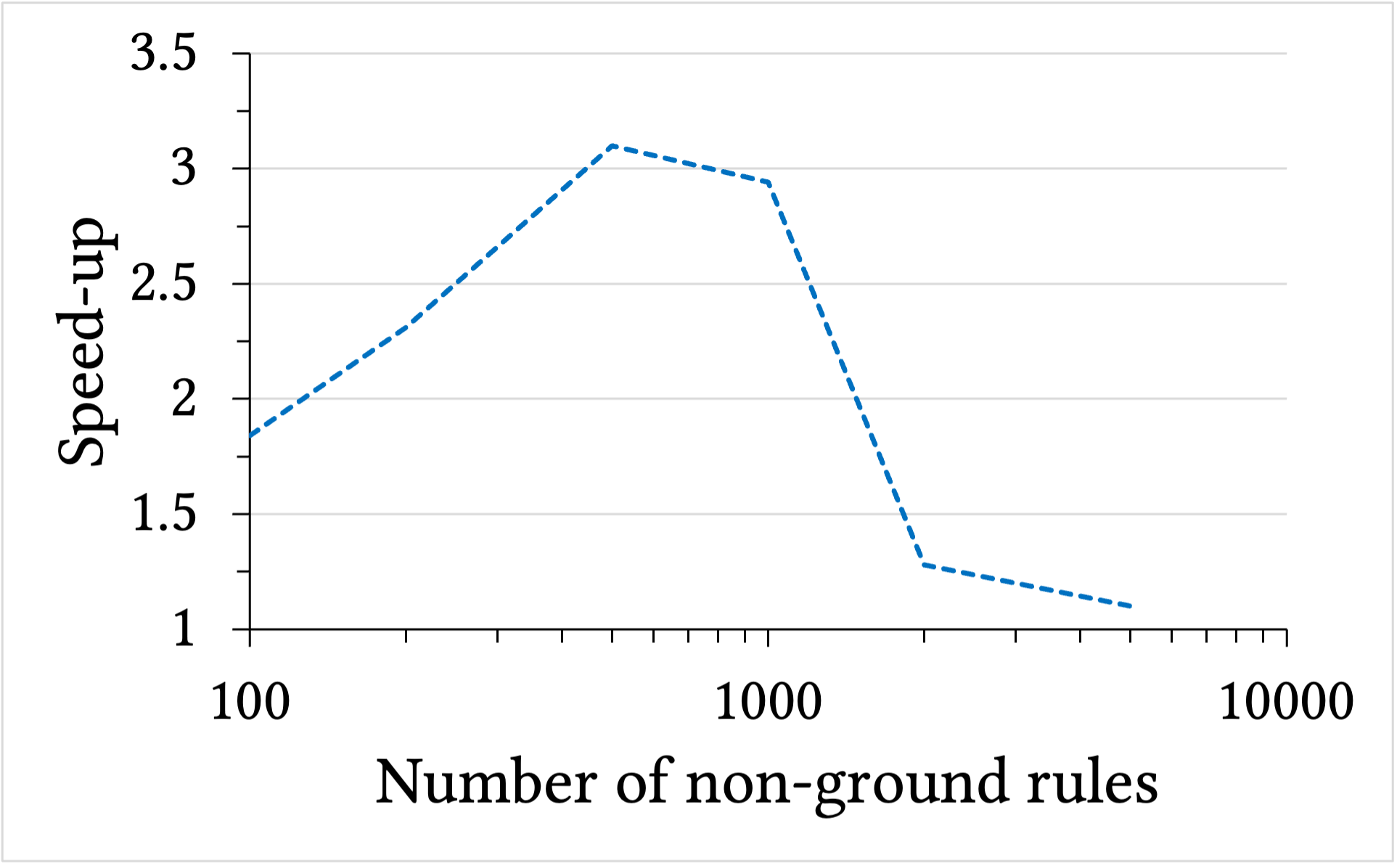}
            \end{center}
        \end{subfigure}
        \begin{subfigure}{0.4\columnwidth}
            \begin{center}
                \includegraphics[width=\linewidth, trim=3 3 3 3, clip]{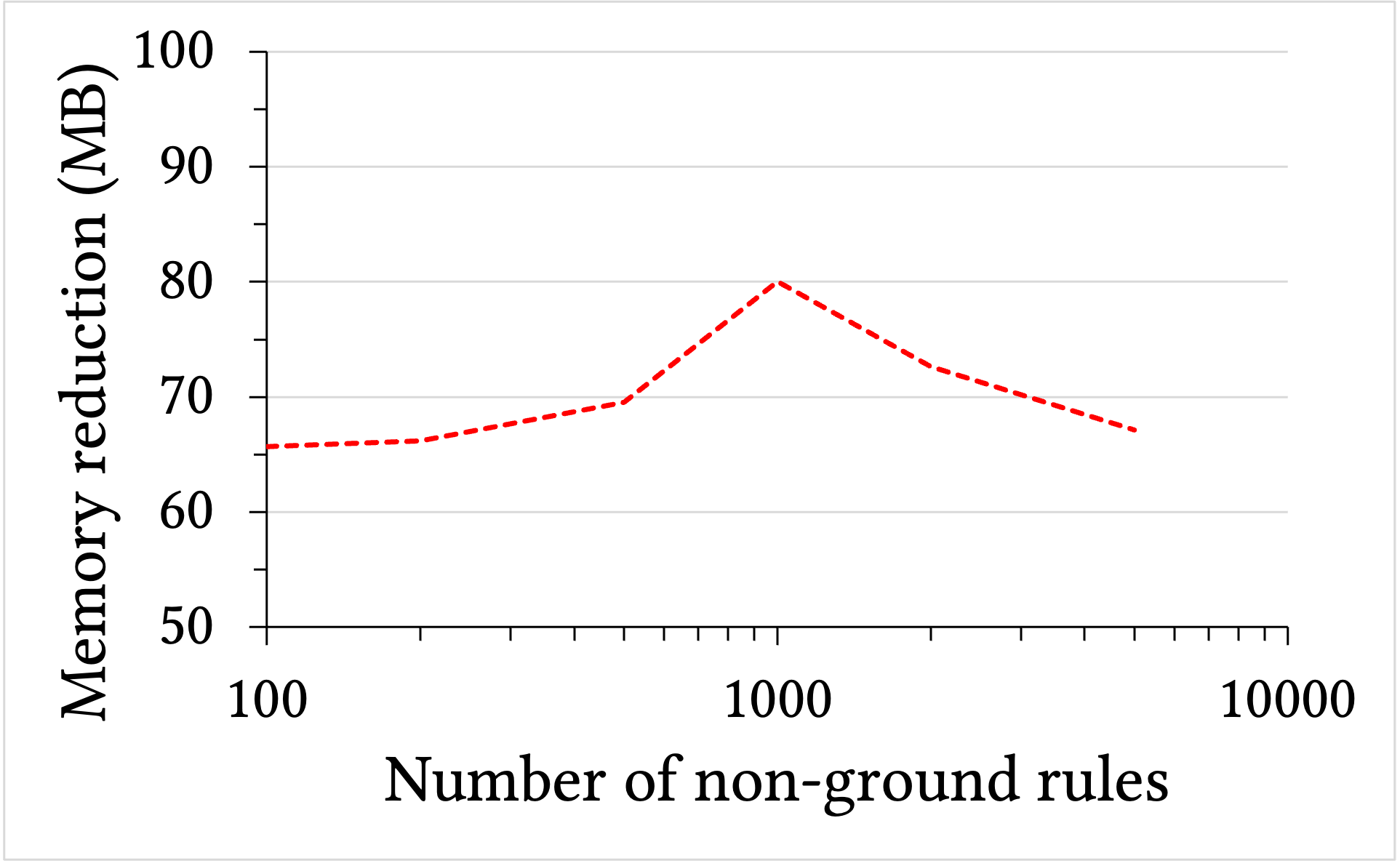}
            \end{center}
        \end{subfigure}
    \end{center}
    \caption{Speed-up and Memory reduction vs Program size for 2 (top) and 5 (bottom) fixpoints on UMLS dataset.}
    \label{fig:umls-sc}
\end{figure}

To investigate the impact of program size and reasoning steps on running time and memory footprint, we selected the UMLS knowledge graph; this choice was based on its manageable file size and reasonable experimental time requirements. 
For context, while the fully-grounded UMLS graph requires 1.2GB of memory, our subgraphs of FB15k-237, YAGO03-10, and WN18RR required 32GB, 66GB, and over 100GB of disk space, respectively. Moreover, these larger graphs needed up to 1TB of memory for reasoning operations.

Figure~\ref{fig:umls-sc} illustrates the speed-up and memory reduction (in MB) for~2 and~5 reasoning steps ($x$-axes on log scale). 
Maximum speedup was obtained between 1,000 and 10,000 rules, consistently outperforming the non-Skolemization approach. From two to five fixpoints, the peak efficiency shifted slightly towards a lower number of rules, corroborating our previous findings that increased fixpoint operations lead to more inferences and, consequently, increased running time, thus reducing speedup. 
With fewer rules, the growth of ground atoms is slower, resulting in faster running times. Similarly, memory efficiency decreases as the number of rules increases, due to the significant increase in inferences after each fixpoint.
We imposed a 12-hour time limit for each experiment, consequently obtaining results for up to 5K non-ground rules for five fixpoints, compared to 355K for two fixpoints. Experiments were conducted on 128 cores of AMD~EPYC~7413 with a maximum of 1TB allocated memory.

\subsubsection{Multi-step Inference}

\begin{table}[t]
\caption{Performance metrics with single and multi-step inference}
\label{tab:fixpoint_metrics}
\begin{tabular}{lcccccccccc}
\toprule
Dataset & $\Gamma$ & \multicolumn{3}{c}{Hits@k} & MRR & \# ground atoms & \# rules & \# nodes & \# edges & \# queries\\\cmidrule(l){3-5}
 & & 1 & 3 & 10 & & \\
\midrule
WN18RR & 1 & 0.043 & 0.065 & 0.092 & 0.058 & 725,666& 500 & 8,809 & 10,007 & 6,268 \\
& 2 & 0.043 & 0.087 & 0.127 & 0.07 & 763,636\\
\midrule
FB15k-237 & 1 & 0.101 & 0.121 & 0.202 & 0.1248 & 1,676 & 400 & 945 & 1,108 & 40,932 \\
& 2 & 0.136 & 0.161 & 0.237 & 0.162 & 4,131\\
\midrule
YAGO03-10 & 1 & 0.357 & 0.429 & 0.429 & 0.393 & 107,014 & 500 & 3,029 & 5,020 & 10,000 \\
& 2 & 0.357 & 0.464 & 0.464 & 0.413 & 115,078\\
\midrule
UMLS & 1 & 0.054 & 0.093 & 0.099 & 0.073 & 186 & 200 & 135 & 5,216 & 1,322 \\
& 2 & 0.055 & 0.097 & 0.104  & 0.076 & 267\\
\bottomrule
\end{tabular}
\end{table}

Multi-step reasoning employs sequential applications of logical inference to build upon previously established knowledge after each fixpoint application, distinguishing it from multi-hop reasoning, which refers to the process of aggregating and connecting information across multiple pieces of evidence or sources. In our approach, we perform multi-step reasoning while relaxing any closed world assumption. We hypothesize that multi-step reasoning can help uncover deeper knowledge in a variety of scenarios, and we present some of our early findings in this direction.
We conduct our experiments on subsets of all four datasets to test our hypotheses on graphs of significantly varying sizes, density, and with different number of rules. Table~\ref{tab:fixpoint_metrics} illustrates that multi-step inference, involving just two fixpoint applications, consistently enhances performance across the Hits@k and Mean Reciprocal Rank (MRR) metrics, compared to single-step inference. Notably, FB15k-237 demonstrates improvements of~7.4\%, 14.8\%, and 13.7\% on average for Hits@k. The MRR is also shown to always improve, with increases of over~25\% for the WN18RR and FB15k-237 datasets. 
These findings indicate that multi-step inference could significantly enhance result retrieval capacity with multiple applications of the fixpoint operator, representing preliminary evidence of the approach’s potential that warrants further investigation.

\subsection{Logic as a Simulator for RL Applications}
\label{sec:rl-expt}

In this section, we benchmark our approach against two popular simulators.
We begin by introdcing the simulators, then we outline the two game scenarios we use in our experiments, analyze the limitations of Markov assumptions, and discuss the RL training methodology adopted.
Then, we present experimental results comparing our approach's scalability in Starcraft~II and AFSIM, along with its ability to learn policies in PyReason and port them to other simulators. We then explore whether incorporating non-Markovian dynamics in the simulation can improve RL algorithms' ability to learn effective policies for complex games. Additionally, we demonstrate the explainability of our approach using rule traces, highlighting its potential in reward shaping during training.

\subsubsection{Benchmarks: Popular Simulators}
\label{sec:rl-benchmarks}

In order to position PyReason as an appropriate simulator, we first compare it to two established simulators in the field:
\begin{enumerate}
\item \em{Starcraft II} (SC2) is a popular real-time strategy (RTS) video game developed by Blizzard Entertainment, and has a competitive multiplayer aspect that involves managing resources, building armies, and engaging in tactical battles. 
Due to its complex gameplay and emphasis on strategic decision-making, it has been considered as a potential tool for military simulations. 
We extended Deepmind's PySC2~\cite{vinyals2017starcraft} to use the Starcraft~II environment in our experiments\footnote{Extensions to PySC2: \url{https://github.com/lab-v2/pysc2-labv2}}.

\item \em{Advanced Framework for Simulation, Integration, and Modeling software} (AFSIM)~\cite{clive2015advanced} is a powerful simulation tool used by the United States Department of Defense (DoD) for various purposes, including training, analysis, experimentation, and mission planning. 
AFSIM is developed by the Air Force Research Laboratory (AFRL) and is used primarily by the United States Air Force (USAF) as well as other branches of the military and defense organizations. AFSIM is a high-fidelity modeling and simulation software designed to provide realistic representations of aerial warfare scenarios and environments. It enables the USAF to assess and analyze the performance of various systems, strategies, and tactics in simulated combat situations.
\end{enumerate}
To compare PyReason with SC2 and AFSIM, we design the scenarios and game dynamics in all three simulators.

\subsubsection{Game Setup}
\label{sec:rl-setup}

We design a simple grid world war game as shown in Figure~\ref{fig:game-setup}. The basic scenario has two teams (red and blue) of one agent each. 
Each team has a base, and there are also a few obstacles (shown as mountains) in the environment that are impenetrable and impassable. 
For this base scenario, the objective of the game is to capture (reach) the rival base before the enemy can do the same. 
The red team follows our learned RL policy (the agent(s)), whereas the blue team follows a pre-defined base policy (the opponent(s)) described later in this section. 
Later on we build upon this basic scenario by adding more agents and then extending the action and observation spaces.

\begin{figure}[t]
    \begin{center}
        \includegraphics[width=0.45\columnwidth]{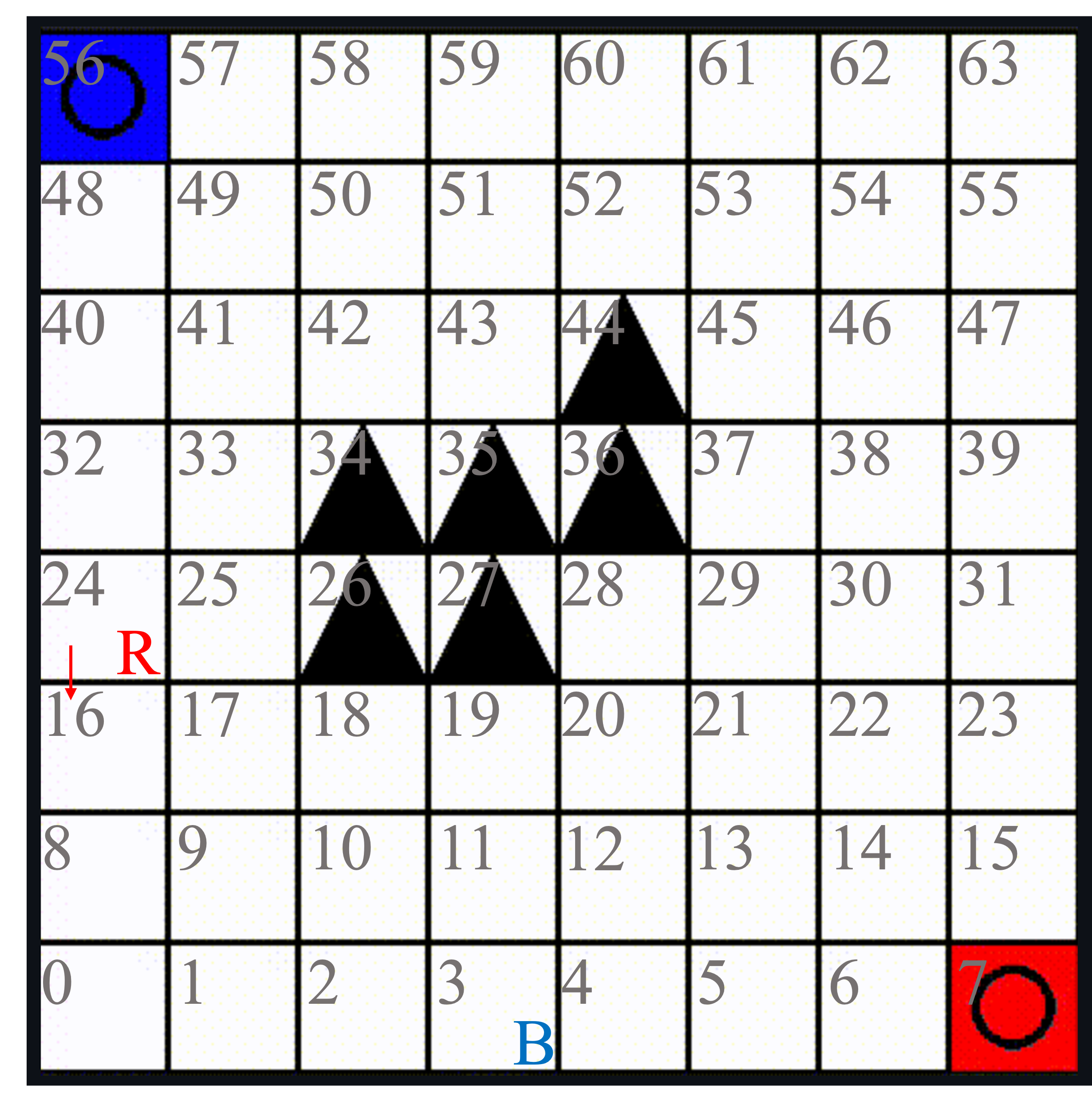}
    \end{center}
    \caption{Grid map for the scenario. Red (bottom-right) and Blue (top-left) squares are fixed base locations for each team. All agents start at their respective base locations. Obstacles (mountains) are shown with black triangles. Bottom left quadrant of the grid map is marked with indices to aid the understanding of the explainable trace in Table~\ref{tab:short_rule_trace}.}
    \label{fig:game-setup}
\end{figure}

\subsubsection{Scalability}

Allowing the agents to take random actions in the grid world, we compare the scaling capability of our software against other simulators by comparing the running time and memory footprint over a large number of actions for different number of agents per team.
The experiments were performed on an AWS EC2 container with 96~vCPUs (48~cores) and 384GB memory.

Figure~\ref{fig:runtime-mem-comp} show the scaling capability of the different simulators tested. We note that, among the two established simulation environments, AFSIM generally performed better with~5 agents per team; with~20 agents per team, AFSIM is overtaken by SC2 as the actions per agent increase.  This is expected, as AFSIM is designed as a high-fidelity simulation environment, so we can expect greater computational cost with more complex situations. PyReason consistently outperformed SC2, achieving anywhere from a one to nearly three orders of magnitude improvement. 
Though PyReason performs comparably to AFSIM for lower numbers of actions per agent (which are arguably the least important in practice), it also achieved comparable multiple order-of-magnitude improvements in terms of running time as the number of actions per agent increased. 
This suggests that PyReason will scale to large environments where the traditional use of simulators would otherwise prohibit model training.

Additionally, we examined memory consumption (Figure~\ref{fig:runtime-mem-comp}). PyReason uses considerably less memory compared with SC2 over all configurations while having sub-linear ($R^2 = .84$) growth with action and agent space. 
AFSIM's strength as a large-scale military simulator is shown here with little effect on memory consumption with change in agents or actions; however, it has a large base memory cost that was still significantly higher than that of PyReason for the largest case considered (40,000 actions in total).

\begin{figure}[t]
    \begin{center}
        \begin{subfigure}{0.45\columnwidth}
            \begin{center}
                \includegraphics[width=\linewidth, trim=3 3 3 2, clip]{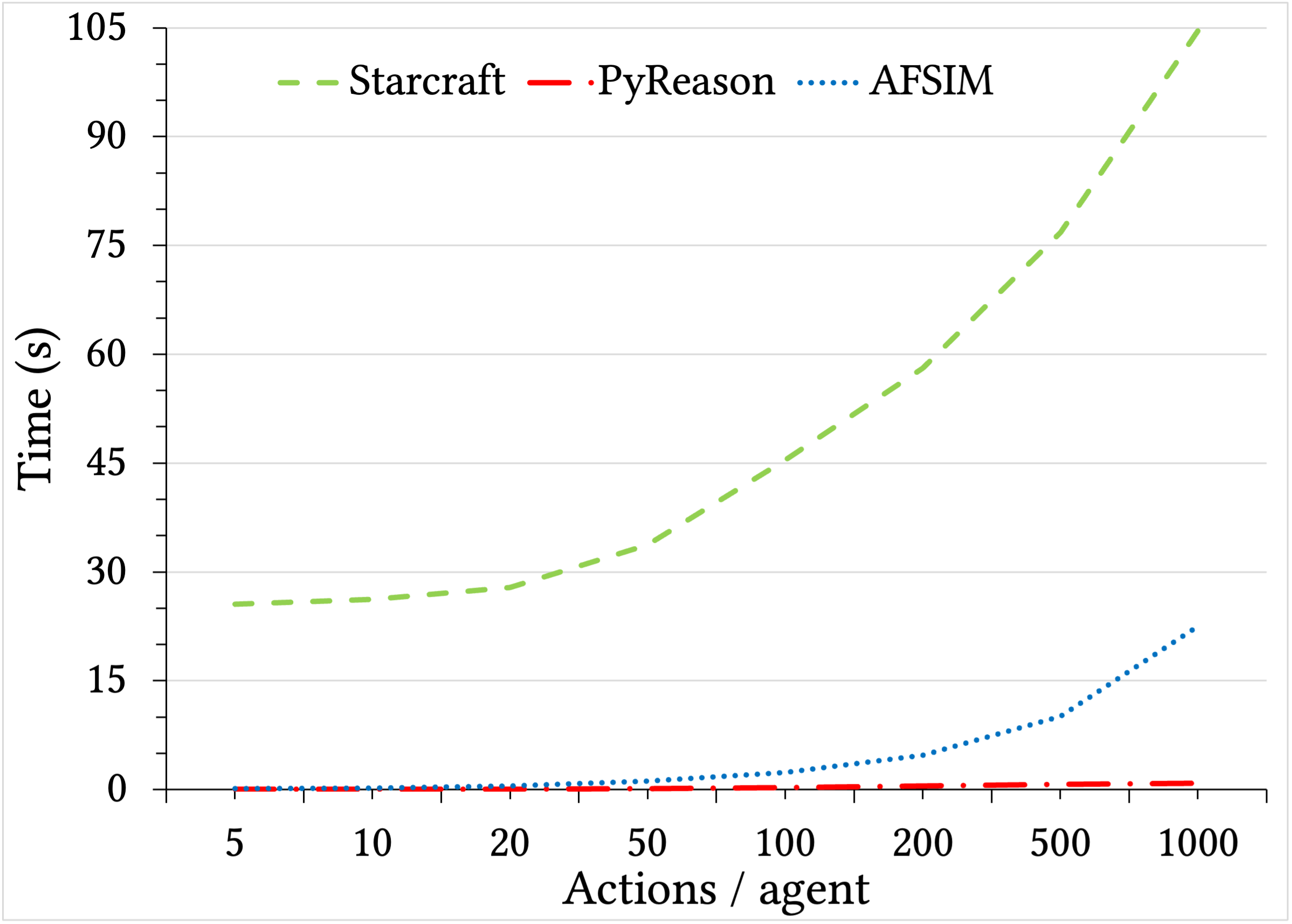}
            \end{center}
        \end{subfigure}
        \begin{subfigure}{0.45\columnwidth}
            \begin{center}
                \includegraphics[width=\linewidth, trim=2 3 3 2, clip]{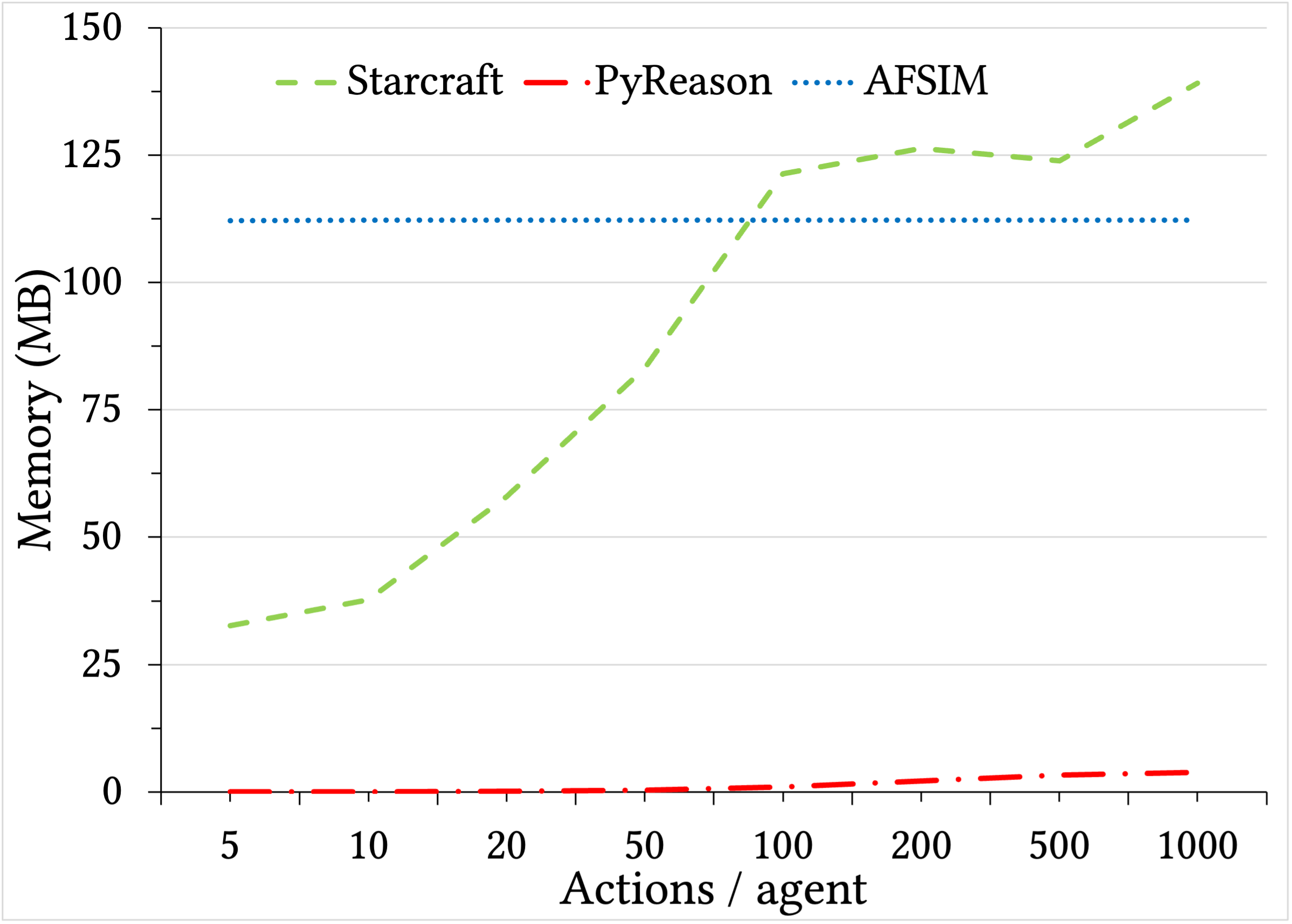}
            \end{center}
        \end{subfigure}
        \begin{subfigure}{0.45\columnwidth}
            \begin{center}
                \includegraphics[width=\linewidth, trim=2 2 2 2, clip]{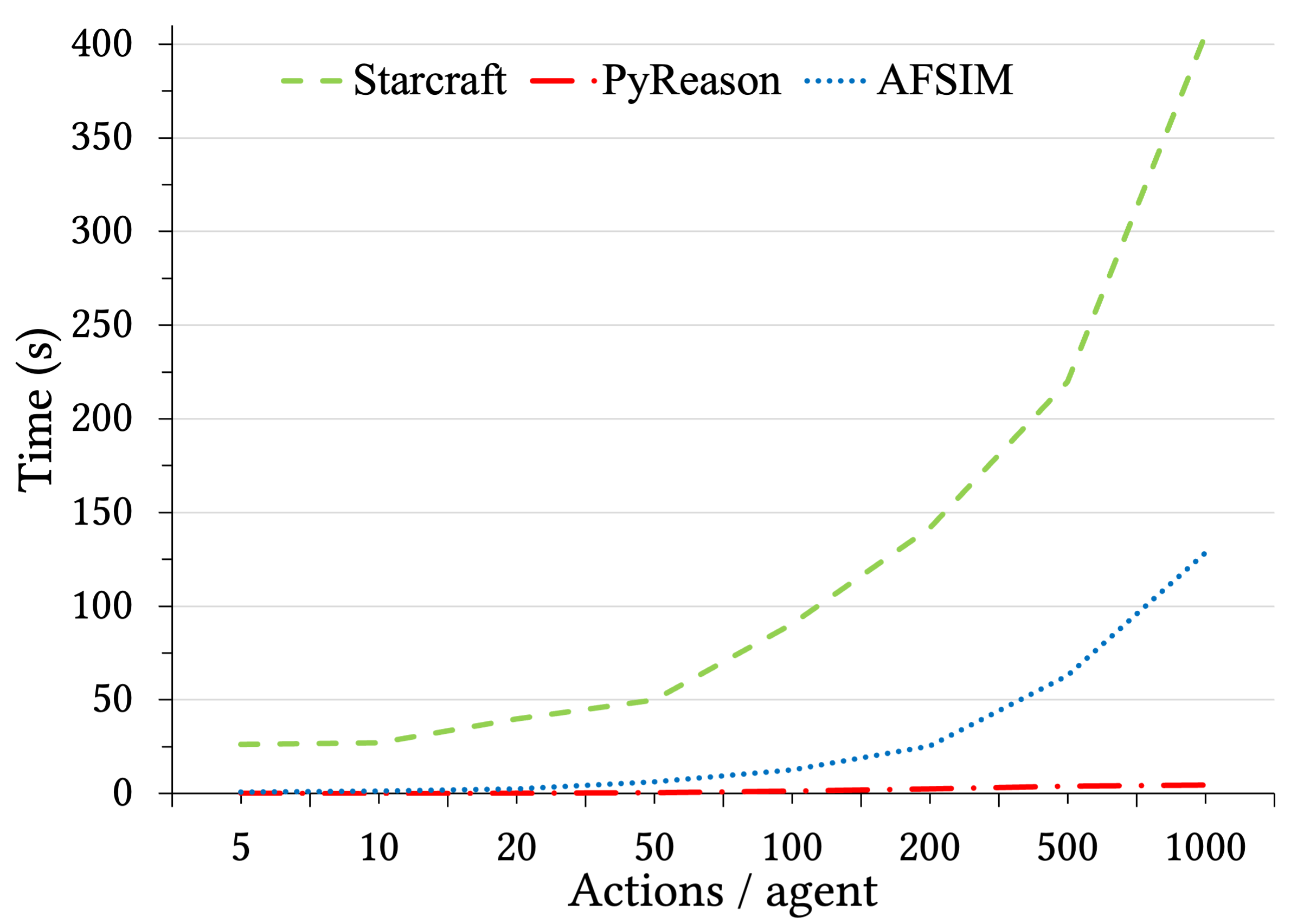}
            \end{center}
        \end{subfigure}
        \begin{subfigure}{0.45\columnwidth}
            \begin{center}
                \includegraphics[width=\linewidth, trim=2 2 2 2, clip]{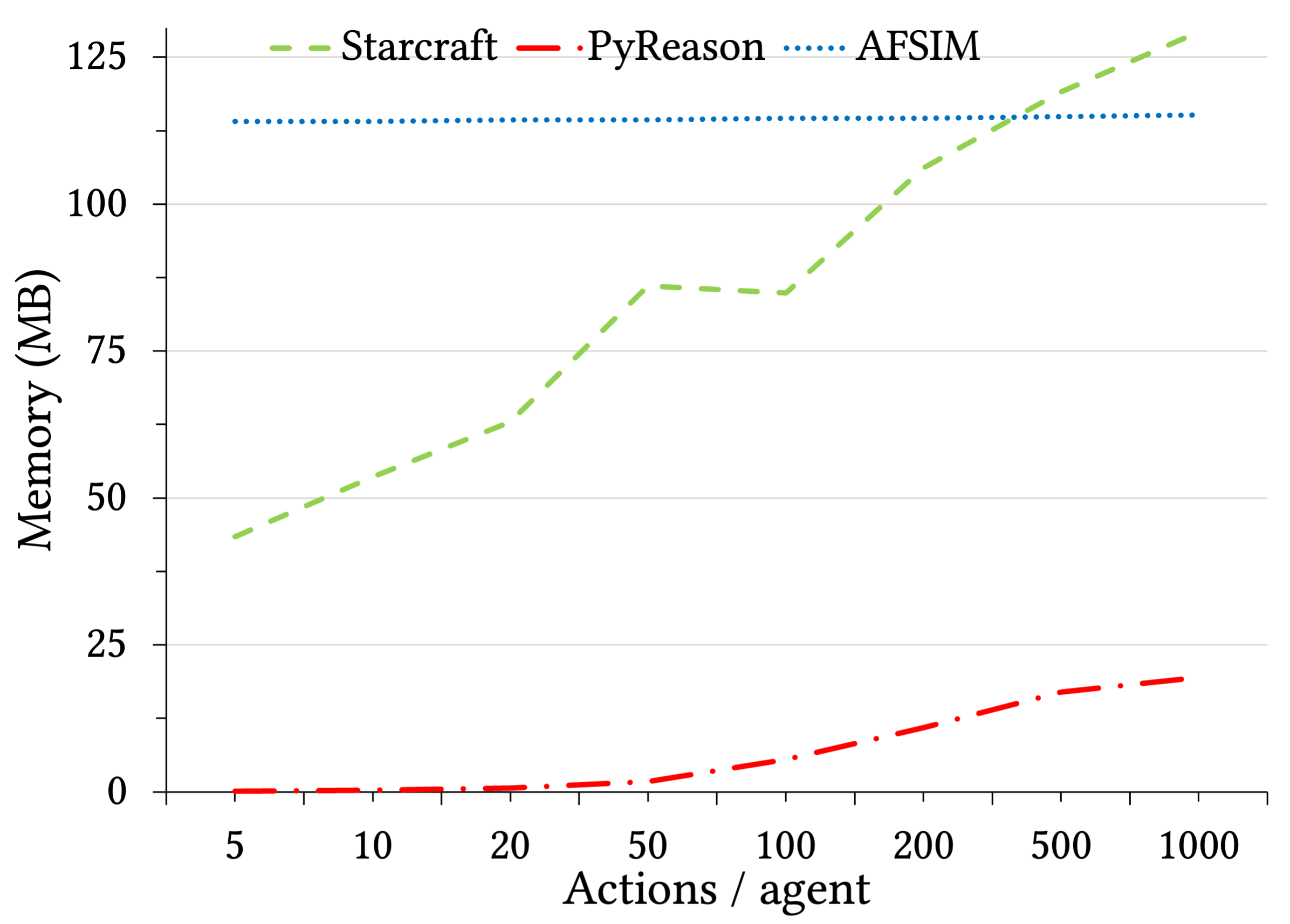}
            \end{center}
        \end{subfigure}
        \begin{subfigure}{0.45\columnwidth}
            \begin{center}
                \includegraphics[width=\linewidth, trim=2 2 2 2, clip]{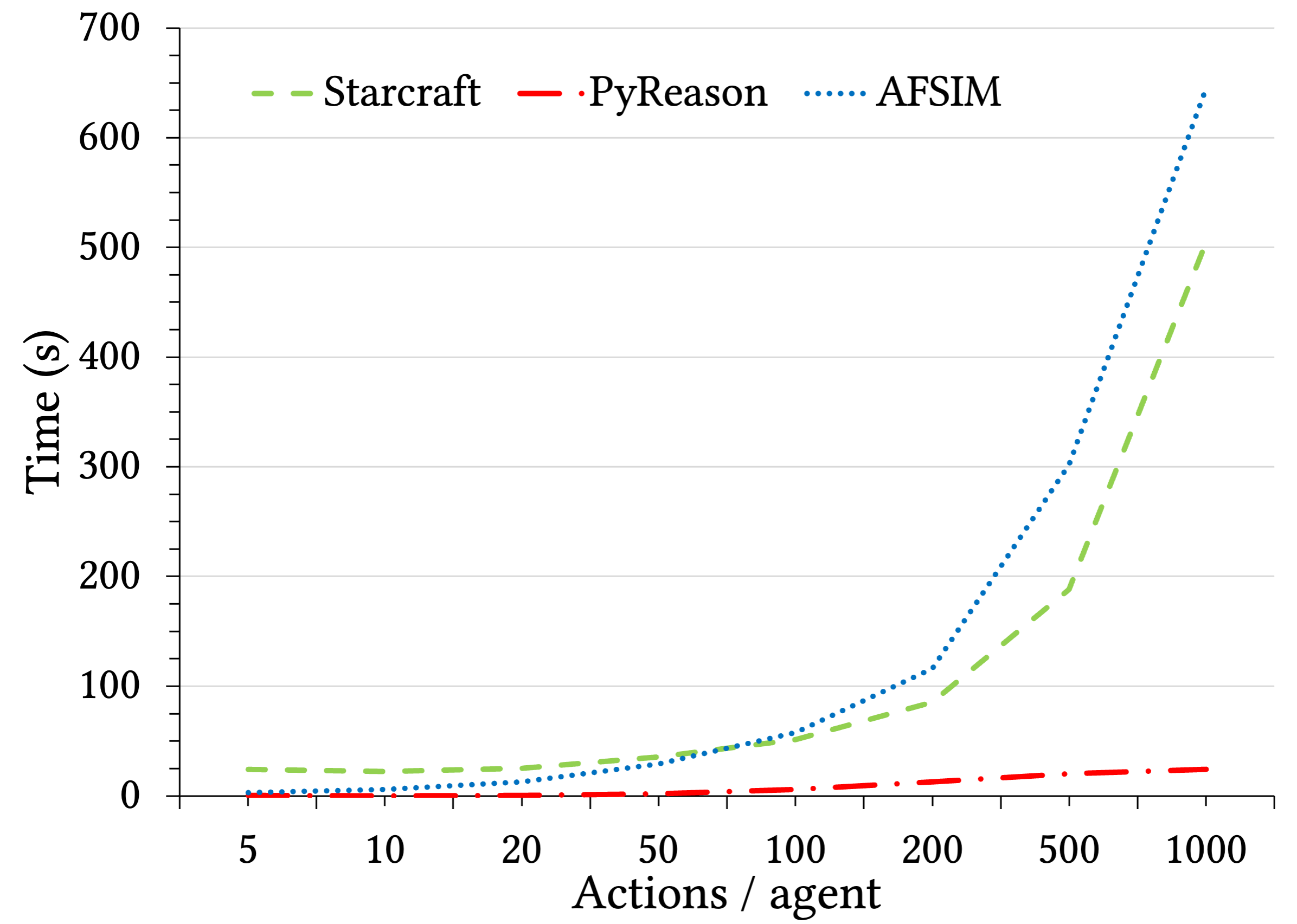}
            \end{center}
        \end{subfigure}
        \begin{subfigure}{0.45\columnwidth}
            \begin{center}
                \includegraphics[width=\linewidth, trim=2 2 2 2, clip]{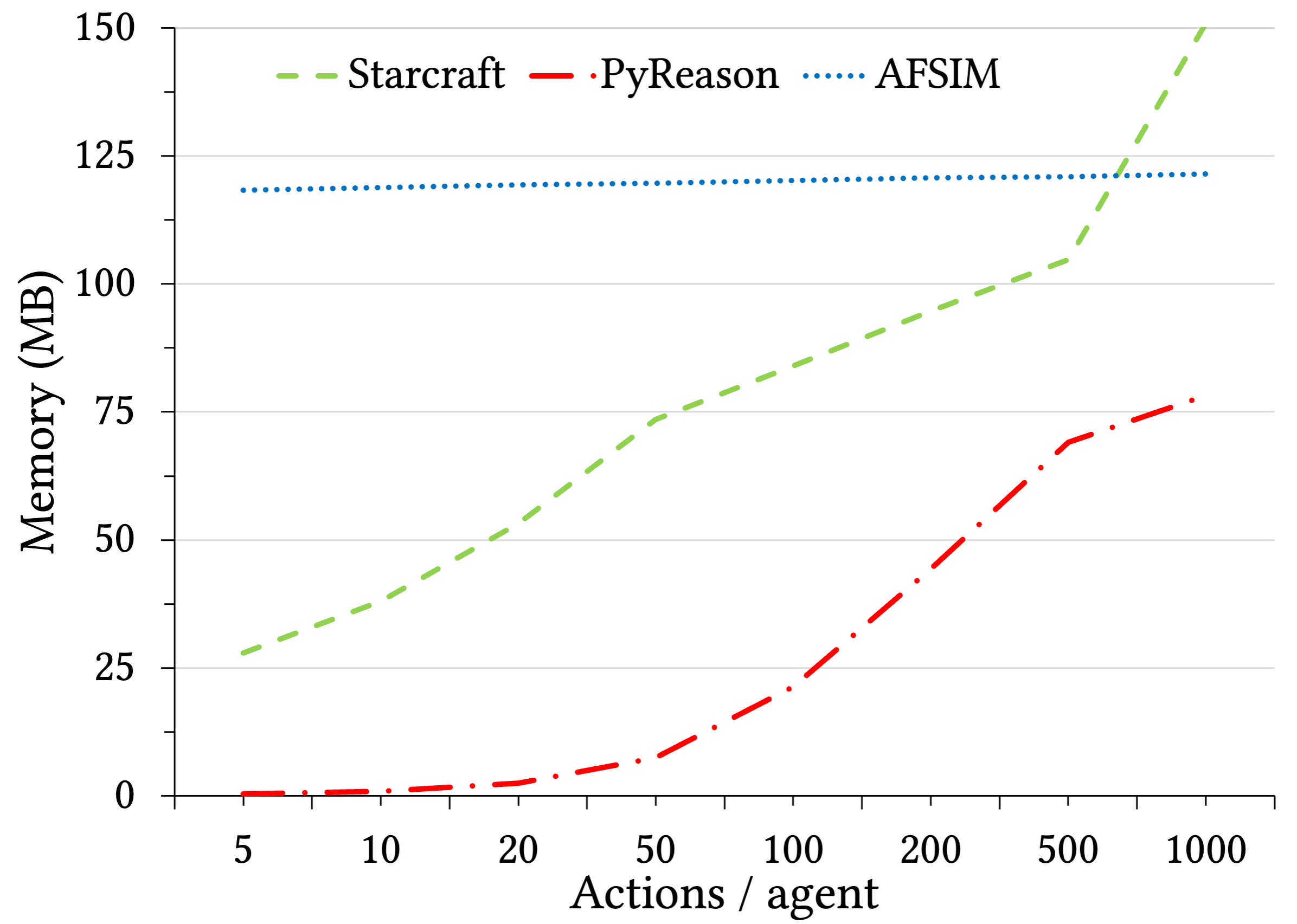}
            \end{center}
        \end{subfigure}
    \end{center}
    \caption{Runtime (left) and memory footprint (right) comparison when~1 (top), 5 (middle), 20 (bottom) agents/team take random actions in three simulation environments.}
    \label{fig:runtime-mem-comp}
\end{figure}

\subsubsection{Portability}
\label{sec:portability}

Next, we wanted to test whether a Reinforcement Learning (RL) agent trained in PyReason (PR) can provide performance comparable to AFSIM (AFS) and PySC2 (SC2). For this, we considered two cases: single agent and multi (five) agents per team. 
At certain intervals during the training process, policies were extracted and were used to play the base scenario described earlier~500 times in each of the three simulators (PyReason, AFSIM, and PySC2) and the outcomes were compared.

When policies learned in PyReason played the base scenario, comparable numbers were observed for all three simulators, as shown in Table~\ref{tab:portability}---variance can be attributed to inherent randomness in learned policies). 
These results suggest that the approach is generalizable, as an agent trained in PyReason can be ported to various simulation environments and achieve comparable reward and win percentage.

\begin{table}[t]
\caption{Performance metrics when PyReason trained policies were used to play the game on different simulators for single and multi~(5) agent scenarios---numbers in parentheses specify differences with respect to PyReason.}
\label{tab:portability}
\begin{tabular}{cccccccc}
\toprule
 \# & Epochs & \multicolumn{3}{c}{Avg. Reward} & \multicolumn{3}{c}{Win \%} \\ \cmidrule(l){3-5} \cmidrule(l){6-8} 
 & & PR & SC2 & AFS & PR & SC2 & AFS \\ \midrule
1 & 400K & -209.87 & -210.15 & -222.65 & 0.0 & 0.0 & 0.0 \\
 &  &  & (-0.13\%) & (-6.09\%) & & (0) & (0)\\
 & 544K & 162.51 & 165.64 & 168.04 & 43.0 & 42.8 & 44.0 \\
 &  &  & (+1.93\%) & (+3.40\%) &  & (-0.2) & (+1)\\
 & 760K & 482.50 & 487.00 & 473.50 & 97.6 & 100.0 & 100.0 \\
 &  &  & (+0.93\%) & (-1.87\%) &  & (+2.4) & (+2.4) \\ \midrule
5 & 112K & -913.27 & -986.88 & -880.16 & 0.0 & 0.0 & 0.0 \\
&  &  & (-8.06\%) & (+3.63\%) &  & (0) & (0) \\
& 352K & -5166.99 & -5548.18 & -5229.43 & 1.6 & 1.8 & 0.0 \\
 &  &  & (-7.38\%) & (-1.21\%) &  & (+0.2) & (-1.6) \\
 & 1536K & 1899.71 & 1860.05 & 1765.43 & 79.4 & 78.8 & 79.0 \\
 &  &  & (-2.09\%) & (-7.07\%) &  & (-0.6) & (-0.4) \\ \bottomrule
\end{tabular}
\end{table}

\subsubsection{Extending the Action Space with Shooting in PyReason}
Some simulations (e.g., Starcraft II) do not separate movement and shooting (i.e., the agent always shoots when in line of sight with an enemy). 
This, however, is clearly undesirable in any military simulator looking to emulate real battlefield scenarios. 
Strategies are often pragmatic, with shooting typically limited and highly tactical, given that practical issues such as limited ammunition and avoiding exposure are important considerations here. 
Hence, we build upon the basic scenario by integrating shooting into PyReason, independent from movement actions, allowing RL agents to learn varied and in-depth strategies, and in the process ensuring our implementation fits our eventual goal of a faithful miliary simulation. 
For this advanced scenario, each agent is provided with three bullets, and at each timepoint they may either choose to move, shoot, or to not take any action. Other than capturing the enemy base, a team can win by eliminating all enemy agents.

\begin{table*}[t]
\caption{Example rules in first order logic and descriptions in natural language.}
\label{tab:example_rules}
\begin{tabular}{p{0.14\textwidth}p{0.42\textwidth}p{0.35\textwidth}}
\toprule
{\bf Rule Identifier} & {\bf Rule} & {\bf English Description} \\
\midrule
m\_Down\_on & \(moveDown(A):[1,1] \xleftarrow[\Delta t=0]~~agent(A):[1,1] \wedge moveDir(A, down):[1,1] \wedge atLoc(A, X):[1,1] \wedge downLoc(Y, X):[1,1] \wedge blocked(Y):[0, 0]\) & If $A$ is an agent (annotated $[1,1]$) at location $X$, chooses to move in downward direction to $Y$ (which is not blocked), then  $moveDown(A)$'s label is updated to $[1,1]$.\\[4pt]
s\_Left\_on & \(shootLeftB(A):[1,1] \xleftarrow[\Delta t=0]~~agent(A):[1,1] \wedge team(A, blue):[1,1] \wedge health(A):[0.1,1] \wedge ammo(A):[0.1,1]  \wedge shootLeft(A):[1,1] \) & If $A$ is an agent on the blue team and chooses to shoot left, then $shootLeftB(A)$'s label is updated to $[1,1]$ iff $A$ has non-zero health and remaining ammo.\\
\bottomrule
\end{tabular}
\end{table*}

\subsubsection{Learning policies with RL} 

Since our approach is agnostic to any specific RL algorithm, for this work we chose to use the widely popular and versatile Deep Q learning (DQN) algorithm~\cite{mnih2015human} for all of our experiments. 
Based on a specific application or domain, a suitable algorithm can be seamlessly used in place of DQN. 
In our implementation, we combine a shallow Q-Net architecture with techniques discussed in~\cite{mnih2015human} such as experience replay, stable learning, and hard updates for the target network. In our architecture, we use one hidden layer between the input and output layers, 64~state variables (one for each grid cell), and an action space of~5 (for the base scenario) or~9 (for the advanced scenario).
The observation state space available to the agent is symbolic in nature, and its size varied between experimental setups as follows:
\begin{enumerate}
    \item \em{Four} for single agent in the base scenario: two each for the current positions of the agent and the opponent.
    \item \em{Seven} for single agent in the advanced scenario: one for the number of opponent bullets in the environment, two for the nearest bullet position, and two each for the current positions of the agent and the opponent.
\end{enumerate}
For multi-agent setups, the observation space is multiplied by the number of agents in each team. For the special non-Markovian setup described later, the observation space is doubled as observations from previous the timestep are considered. For experiments in multi-agent environments, we learn non cooperative single agent policies using multi-agent sampling. We use the widely adopted Smooth L1 loss function, instead of gradient clipping as described in the seminal DQN work.

We use the following reward function (rewards related to shooting actions are only applicable to the advanced scenario):
\label{sec:reward_fn}
\begin{enumerate}
    \item Terminal state rewards: +250 for a win, -250 for a loss, +400 for shooting an opponent, -200 for getting shot.
    \item Non-terminal state rewards: -2 for a valid action, -200 for an unsafe or illegal action, -10 for an invalid action (such as trying to shoot after exhausting ammunition).
\end{enumerate}

We define the behavior of the opponent using a stochastic base policy, which at each timestep tries to move closer to the enemy base by reducing the manhattan distance with a probability of~0.7, or chooses a random action from the action space with a probability of~0.3. In the advanced scenario, shooting is prioritized over movement until ammo is exhausted.
All RL policies described in this paper were learned on an NVIDIA A100 GPU with 80GB memory. and 40 cores of AMD EPYC 7413 with 378GB memory.

\subsubsection{Shielding in RL} 
As discussed in Section~\ref{sec:intro}, we incorporate logic shielding within the reward function, as well as the simulation environment itself. 
In the reward function, the agent is heavily penalized for taking an unsafe action, such as trying to move through the mountains or choosing an action that takes it out of bounds. While this approach encourages the agent to learn policies that avoid unsafe actions, it provides no guarantees. Adding shielding in the simulator itself ensures that even if the agent was to choose an unsafe action, our rule-based environment dynamics can detect and stop the execution of such actions in runtime. Furthermore, we can leverage these dynamics to prevent illegal actions, such as shooting when ammo has already been exhausted.

\subsubsection{Exploring Limitations of the Markov Assumption}

The Markov assumption in RL is the assumption that the next state of an agent only depends on its current state and action, and not on the history of states and actions that led to the current state. As this simplifies the problem and enables the use of techniques like Markov Decision Processes (MDPs) and the Bellman equation, many well-established simulators make this assumption. However, many real-world environments are not truly Markovian, since in some cases the current state may not contain all the relevant information for decision-making. This is especially important for simulators replicating realistic military combat environments where various key factors like logistical support, conflict history, long-term intelligence data, and patterns in surveillance reports, which go into tactical decision making, are non-Markovian in nature.

PyReason does not make a Markov assumption, and we now showcase this capability by creating a simple experiment with non-Markovian dynamics. 
We consider a two-agents per team advanced scenario as described earlier. We introduce a modification to one agent within each team, constraining its ability to execute actions to once every two timesteps, with the added stipulation that each of its movement actions require two timesteps to complete. We learn to play the game in two different ways. In the initial approach, the player adheres to a Markov assumption, leveraging solely the current state's information. Conversely, in the second approach, the player gains access not only to the present state data but also to observations from the preceding time step. We compare the success of the two methods by evaluating learned policies over~500 games after every 32,000 training epochs.

Evolution of the performance of policies learned with and without the Markov assumption is shown in Figure~\ref{fig:shoot-markov-non-markov-comp}. 
Both agents underwent training for a duration of up to~1.6 million epochs, with policy evaluations conducted at intervals of 32,000 epochs. 
Each policy was used to play the advanced scenario~500 times in order to obtain a win percentage. Evaluations were carried out on~48 cores of AMD EPYC 7413 with~378GB memory.
Markovian policies obtained a peak performance of 59\%, significantly lower than the 85\% achieved by the non-Markovian policies. 
However, we observe that policies learned in the Markovian assumption setting attained decent performance with noticeably less training, which is unsurprising given the doubling of the observation space in the non-Markovian case. When examining the most effective policy within each category, the removal of the Markov assumption resulted in an increase in the average number of actions per agent required to secure a single victory, rising from~15.51 to~18.01. 
This observation suggests the acquisition of a policy characterized by greater complexity, yet one that exhibits enhanced reliability.
Despite the relative simplicity of our experiment, a noteworthy performance enhancement was observed. This underscores the essentiality and significance of accommodating non-Markovian dynamics within simulation environments.

\begin{figure}[t]
    \begin{center}
        \includegraphics[width=0.5\columnwidth, trim=2 2 2 2, clip]{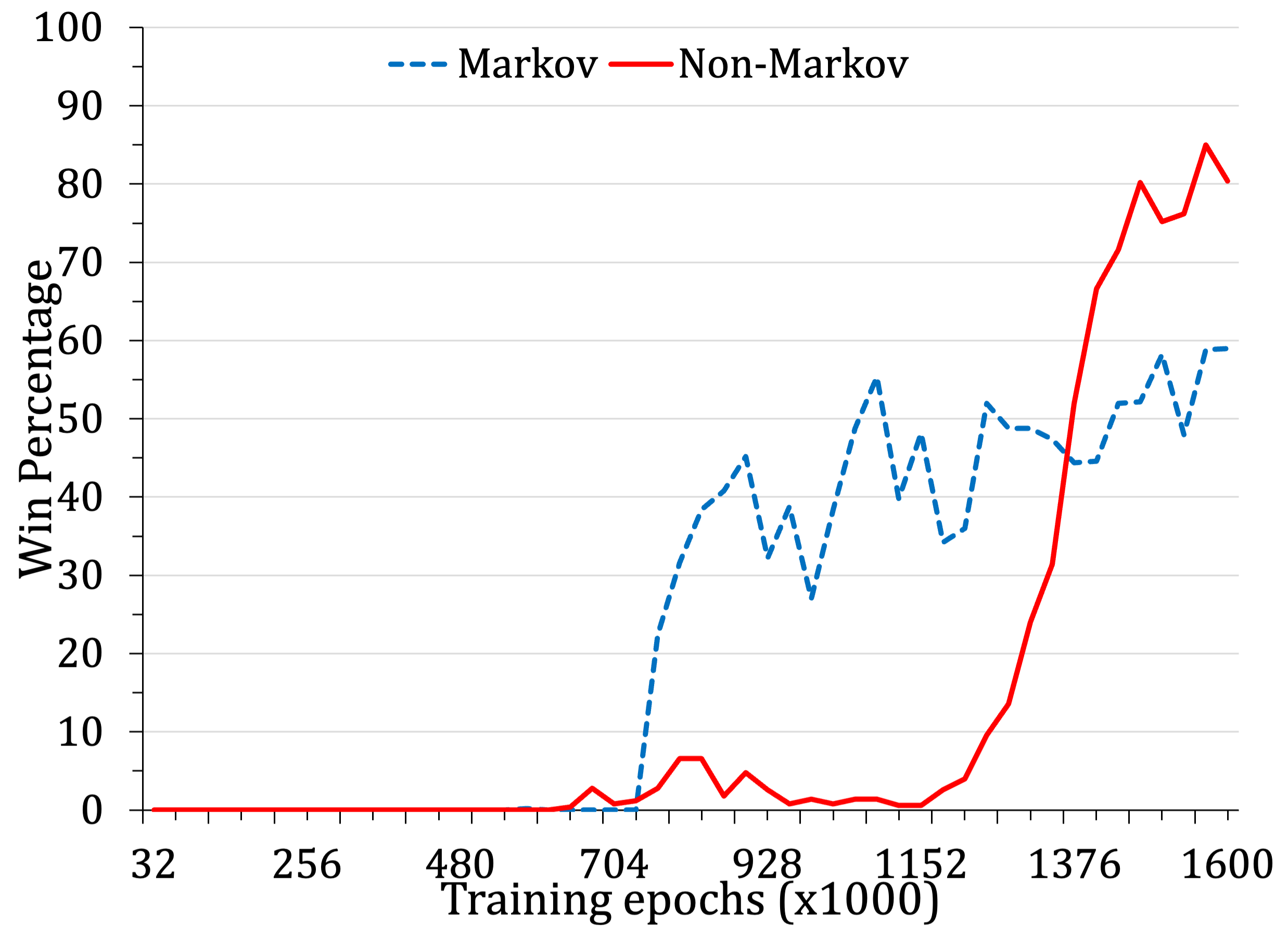}
    \end{center}
    \caption{Win percentage for policies learned with Markovian and non-Markovian dynamics.}
    \label{fig:shoot-markov-non-markov-comp}
\end{figure}

Win percentage over 500 trials, for policies learned with non-Markovian dynamics is shown in Figure~\ref{fig:shoot-markov-non-markov-comp}. Each team is made up of one fast-moving and one slow-moving agent. The action space is extended to include two timesteps.

\subsubsection{Explainability} 
One the major drawbacks of deep learning-based systems is the difficulty associated with understanding the output in terms of how it was computed. 
On the other hand, logic programs inherently support this kind of understanding. PyReason works with graphs using first order logical rules and produces an explainable trace detailing rules fired at different timesteps, constants used for grounding, and interpretation changes---an example is shown in Table~\ref{tab:example_rules}. 
The explainable trace is a direct result of the structures leveraged in computational logic; this makes our approach explainable, allowing the user to understand system behavior and debug errors. 

Two examples of how we leveraged this to improve our reward function given in Section~\ref{sec:reward_fn} are:
\begin{enumerate}
    \item Initially, we had set the penalty for getting shot at~400. However, from rule traces we observed that the agent was learning to prioritize hiding behind impenetrable mountains and take a safety first approach, instead of trying to win the game. Halving the penalty to~200 produced a more balanced policy.
    \item The penalty for trying to shoot after exhausting ammunition was set to a lower value of 10 after observing that higher values led to the agent avoiding shooting altogether.
\end{enumerate}

An excerpt of a rule trace is shown in Table~\ref{tab:short_rule_trace}; it begins at timestep~16 of one of our experiments. 
Initial conditions are as depicted in Figure~\ref{fig:game-setup}. ``R'' and ``B'' respectively show the location of the red and blue agents at the beginning of this example. As the red agent moves downward from its starting location (from ``24'' to ``0'' through ``16'' and ``8''), the blue agent decides to shoot to the left so as to intercept red (at ``0''). However, red has seemingly learned to predict the bullet path and evade it. So it backtracks (to ``16'').
Rule \textbf{m\_Down\_on} presented in Table~\ref{tab:example_rules} is fired at timestep~16 (as well as 17 and 18), and is pictorially shown with a red arrow in Figure~\ref{fig:game-setup}, and in bold in Table~\ref{tab:short_rule_trace}.

\begin{table}[tb]
\caption{An extract of a rule trace produced by the PyReason software.}
\label{tab:short_rule_trace}
\begin{tabular}{cllccl}
\toprule
t & Constant Symbols & Predicate & Old Annotation & New Annotation & Rule fired \\ \midrule
0 & 26 & blocked & {[}0.0,1.0{]} & {[}1.0,1.0{]} & -- \\
0 & 27 & blocked & {[}0.0,1.0{]} & {[}1.0,1.0{]} & -- \\ \midrule
\textbf{16} & \textbf{red-agent-1} & \textbf{moveDown} & \textbf{{[}0.0,0.0{]}} & \textbf{{[}1.0,1.0{]}} & \textbf{m\_Down\_on} \\ \midrule
17 & red-agent-1 & moveDown & {[}1.0,1.0{]} & {[}0.0,0.0{]} & m\_Down\_off \\
17 & (red-agent-1,16) & atLoc & {[}0.0,1.0{]} & {[}1.0,1.0{]} & m\_Set\_location \\
17 & (red-agent-1,24) & atLoc & {[}1.0,1.0{]} & {[}0.0,0.0{]} & m\_Rem\_location \\
17 & red-agent-1 & moveDown & {[}0.0,0.0{]} & {[}1.0,1.0{]} & m\_Down\_on \\ \midrule
18 & red-agent-1 & moveDown & {[}1.0,1.0{]} & {[}0.0,0.0{]} & m\_Down\_off \\
18 & (red-agent-1,8) & atLoc & {[}0.0,1.0{]} & {[}1.0,1.0{]} & m\_Set\_location \\
18 & (red-agent-1,16) & atLoc & {[}1.0,1.0{]} & {[}0.0,0.0{]} & m\_Rem\_location \\
18 & blue-agent-1 & shootLeftB & {[}0.0,1.0{]} & {[}1.0,1.0{]} & s\_Left\_on \\
18 & (blue-bullet-1,3) & atLoc & {[}0.0,1.0{]} & {[}1.0,1.0{]} & s\_Set\_location \\
18 & (blue-bullet-1,left) & direction & {[}0.0,1.0{]} & {[}1.0,1.0{]} & s\_Set\_dir \\
18 & red-agent-1 & moveDown & {[}0.0,0.0{]} & {[}1.0,1.0{]} & m\_Down\_on \\ \midrule
19 & red-agent-1 & moveDown & {[}1.0,1.0{]} & {[}0.0,0.0{]} & m\_Down\_off \\
19 & (red-agent-1,0) & atLoc & {[}0.0,1.0{]} & {[}1.0,1.0{]} & m\_Set\_location \\
19 & (red-agent-1,8) & atLoc & {[}1.0,1.0{]} & {[}0.0,0.0{]} & m\_Rem\_location \\
19 & blue-agent-1 & shootLeftB & {[}1.0,1.0{]} & {[}0.0,0.0{]} & s\_Left\_off \\
19 & (blue-bullet-1,3) & atLoc & {[}1.0,1.0{]} & {[}0.0,0.0{]} & s\_Rem\_location \\
19 & (blue-bullet-1,2) & atLoc & {[}0.0,1.0{]} & {[}1.0,1.0{]} & s\_Set\_location \\
19 & red-agent-1 & moveUp & {[}0.0,0.0{]} & {[}1.0,1.0{]} & m\_Up\_on \\ \midrule
20 & red-agent-1 & moveUp & {[}1.0,1.0{]} & {[}0.0,0.0{]} & m\_Up\_off \\
20 & (red-agent-1,8) & atLoc & {[}0.0,0.0{]} & {[}1.0,1.0{]} & m\_Set\_location \\
20 & (red-agent-1,0) & atLoc & {[}1.0,1.0{]} & {[}0.0,0.0{]} & m\_Rem\_location \\
20 & (blue-bullet-1,2) & atLoc & {[}1.0,1.0{]} & {[}0.0,0.0{]} & s\_Rem\_location \\
20 & (blue-bullet-1,1) & atLoc & {[}0.0,1.0{]} & {[}1.0,1.0{]} & s\_Set\_location \\
20 & red-agent-1 & moveUp & {[}0.0,0.0{]} & {[}1.0,1.0{]} & m\_Up\_on \\ \midrule
21 & red-agent-1 & moveUp & {[}1.0,1.0{]} & {[}0.0,0.0{]} & m\_Up\_off \\
21 & (red-agent-1,16) & atLoc & {[}0.0,0.0{]} & {[}1.0,1.0{]} & m\_Set\_location \\
21 & (red-agent-1,8) & atLoc & {[}1.0,1.0{]} & {[}0.0,0.0{]} & m\_Rem\_location \\
21 & (blue-bullet-1,1) & atLoc & {[}1.0,1.0{]} & {[}0.0,0.0{]} & s\_Rem\_location \\
21 & (blue-bullet-1,0) & atLoc & {[}0.0,1.0{]} & {[}1.0,1.0{]} & s\_Set\_location \\
\bottomrule
\end{tabular}
\end{table}

\section{Conclusions and Future Work}
\label{sec:conclusion}

This work introduces \logic, a logic programming framework that integrates temporal extensions with a lower lattice annotation structure to model non-Markovian temporal relationships while ensuring tractable and scalable exact reasoning. By departing from the standard Markov assumption, \logic~enables reasoning about dependencies spanning multiple past time steps, capturing complex dynamic behaviors that traditional approaches like Markov Decision Processes cannot represent. Through rigorous theoretical analysis, we prove the correctness, convergence, and inconsistency detection capability of a fixpoint operator for this logic, and demonstrate how the use of a lower lattice facilitates Skolemization that significantly reduces the amount of grounding required. 
Our implementation, called PyReason, leverages these properties to efficiently perform reasoning over large-scale, sparse domains common in real-world applications such as multi-agent geospatial simulations, knowledge graph completion, and reinforcement learning. 
Empirically, we showed that \logic~achieves multiple orders of magnitude improvements in grounding size, computational speed, and memory efficiency, making previously intractable reasoning tasks feasible. Moreover, by integrating non-Markovian dynamics into simulation environments, we show notable gains in reinforcement learning performance, highlighting the practical importance of tractable non-Markovian reasoning. This work thus bridges a critical gap by providing both a theoretically sound and practically scalable approach for modeling and reasoning about non-Markovian temporal dynamics in intelligent systems.

Looking ahead, several promising directions emerge for extending this framework. Incorporating probabilistic reasoning stands out as a natural next step, as APT logic~\cite{APTL}, which combined temporal logic with probabilistic semantics, suffers from intractability. Leveraging recent advances in tractable probabilistic circuit learning, as developed by Choi et al.~\cite{choi2020probabilistic}, could enable learning probability distributions within \logic’s efficient and tractable semantics, allowing a rich yet computationally feasible representation of uncertainty beyond deterministic intervals. 
Another potential avenue is constructing logic programs using large language models, as introduced in Logic LM~\cite{logiclm}, which demonstrated the use of LLMs with symbolic solvers for reasoning tasks---extensions to temporal logic remain unexamined in this front. 
Finally, applying Inductive Logic Programming approaches~\cite{deepMinIlp2018} to automatically learn temporal and non-Markovian rules within \logic~offers a compelling route to scale and adapt the framework to data-driven scenarios where expert knowledge is limited or unavailable, providing a pathway to fully automated, explainable temporal reasoning systems.
PyReason has already been successfully applied in several domains, including reasoning about medical triage optimization~\cite{patil2025triage}, integrating machine learning models with temporal logic for process automation~\cite{aditya2025processautomation}, as well as abductive reasoning in vision~\cite{leiva2025consistency} and geospatial applications~\cite{bavikadi2025humanmove}. A promising direction for future work is to explore abductive queries in a more general way, further enhancing the framework’s reasoning capabilities and applicability across diverse problems.

\begin{acks}
    Some of the authors were funded by Scientific Systems Company, Inc. (SSCI).
\end{acks}

\bibliographystyle{ACM-Reference-Format}
\bibliography{rl-py}

\appendix
\section{Complete Proof for Theorem~\ref{theorem:ga-theorem} \label{app:proofs}}

\begin{proof}
Let $P_{\program} \subseteq P$ be the set of predicates containing only predicates present in the head of at least one rule in $\program_{\textit{Rules}}$.
\begin{eqnarray}
    |g_i| &=& |\bigcup_{p \in P} g_i(p)| \nonumber \\
    &=& \sum_{p \in P} |g_i(p)| \nonumber \\
    &=& \sum_{p \in P_{\Pi}} |g_i(p)| + \sum_{p \notin P_{\Pi}} |g_i(p)| \nonumber \\
    &=& \sum_{p \in P_{\Pi}} |g_i(p)| + \sum_{p \notin P_{\Pi}} |g_0(p)| \label{eq:gi_main_a}
\end{eqnarray}
Let, $\fixpointOperator_{r}(g)$ denote the set of ground atoms produced when a single fixpoint operator is applied to a single rule $r$ with the set of ground atoms $g$. 
\begin{flalign}
    |g_i(p)| = & |g_{i-1}(p) \cup \bigcup_{r \in \Pi_{\textit{rules}}~\wedge~ \textit{pred}(\textit{head}(r))= p} \fixpointOperator_{r}(g_{i-1})| \notag \\
    = & |g_{i-1}(p)| + \textit{newF}_{p,i} \times |\bigcup_{r \in \Pi_{\textit{rules}}~\wedge~ \textit{pred}(\textit{head}(r))= p} \fixpointOperator_{r}(g_{i-1})| \notag\\
    = & |g_{i-1}(p)| + \textit{newF}_{p,i} \times \textit{uniqueF}_{p,i} \times \sum_{r \in \Pi_{\textit{rules}}~\wedge~ \textit{pred}(\textit{head}(r))= p} |\fixpointOperator_{r}(g_{i-1})| \label{eq:gip_main_a}
\end{flalign}
Here, $\textit{newF}_{p,i} \in [0,1]$ denotes the fraction of ground atoms produced, with predicate $p$ and at the $i^{\textit{th}}$ $\Gamma$ application, which did not exist after the $(i-1)^{\textit{th}}$ application. 
Similarly, $\textit{uniqueF}_{p,i} \in [0,1]$ is the fraction of ground atoms produced across rules, with predicate $p$ in the head, which are unique.
\begin{align}
    |\fixpointOperator_{r}(g_{i-1})| &\leq&  \prod_j |g_{i-1}(\textit{pred}(\textit{body}(r), j))|  \nonumber \\
    \label{eq:gamma_r_main_a}
    |\fixpointOperator_{r}(g_{i-1})| &=& \textit{validF}_{r,i} \times \prod_j |g_{i-1}(\textit{pred}(\textit{body}(r), j))|
\end{align}
Here, $\textit{validF}_{r,i} \in [0,1]$ denotes the fraction of valid groundings that leads to firing of non-ground rule $r$, within the cross-product of possible groundings for each body clause.

\noindent
From Eqs.~\eqref{eq:gip_main_a} and~\eqref{eq:gamma_r_main_a} we get:
\begin{align}
\label{eq:gip_second_a}
|g_i(p)| = |g_{i-1}(p)| + \textit{newF}_{p,i} \times \textit{uniqueF}_{p,i} \sum_{\substack{r \in \Pi_{\textit{rules}} \\
    \textit{pred}(\textit{head}(r))= p}} \textit{validF}_{r,i} 
    \times \prod_j |g_{i-1}(\textit{pred}(\textit{body}(r), j))|
\end{align}

\begin{align}
    |g_i(p)| - |g_{i-1}(p)| = & \textit{newF}_{p,i} \times \textit{uniqueF}_{p,i} \sum_{\substack{r \in \Pi_{\textit{rules}} \\
    \textit{pred}(\textit{head}(r))= p}} \textit{validF}_{r,i} \times \prod_j |g_{i-1}(\textit{pred}(\textit{body}(r), j))|
\end{align}
\begin{align}
\label{eq:delta_gip_a}
    \Delta |g_i(p)| = \textit{newF}_{p,i} \times \textit{uniqueF}_{p,i} \sum_{\substack{r \in \Pi_{\textit{rules}} \\ \textit{pred}(\textit{head}(r))= p}} \textit{validF}_{r,i} 
    \times \prod_j |g_{i-1}(\textit{pred}(\textit{body}(r), j))| \nonumber
\end{align}

\noindent
Considering the maximum value (= 1) for all three fractions:
\begin{equation*}
    \Delta |g_i(p)| \leq \sum_{\substack{r \in \Pi_{\textit{rules}} \\ \textit{pred}(\textit{head}(r))= p}} ~ \prod_j |g_{i-1}(\textit{pred}(\textit{body}(r), j))|
\end{equation*}

\noindent
Substituting Equation~\eqref{eq:gip_second_a} into Equation~\eqref{eq:gi_main_a} we obtain:
\begin{align}
    |g_i| =& \sum_{p \in P_{\Pi}} \Bigg[ |g_{i-1}(p)| + \textit{newF}_{p,i} \times \textit{uniqueF}_{p,i} \sum_{\substack{r \in \Pi_{\textit{rules}} \\ \textit{pred}(\textit{head}(r))= p}} \textit{validF}_{r,i} \prod_j |g_{i-1}(\textit{pred}(\textit{body}(r), j))| \Bigg] + \sum_{p \notin P_{\Pi}} |g_0(p)| \nonumber
\end{align}
\begin{align}
    |g_i| =& \sum_{p \in P_{\Pi}} |g_{i-1}(p)| + \sum_{p \notin P_{\Pi}} |g_0(p)| \nonumber \\
    &+ \sum_{p \in P_{\Pi}} \Bigg[ \textit{newF}_{p,i} \times \textit{uniqueF}_{p,i} \sum_{\substack{r \in \Pi_{\textit{rules}} \\ \textit{pred}(\textit{head}(r))= p}} \textit{validF}_{r,i} \times \prod_j |g_{i-1}(\textit{pred}(\textit{body}(r), j))| \Bigg]
\end{align}
\begin{align}
    & |g_i| = |g_{i-1}| + \sum_{p \in P_{\Pi}} \textit{newF}_{p,i} \times \textit{uniqueF}_{p,i} \sum_{\substack{r \in \Pi_{\textit{rules}} \\ \textit{pred}(\textit{head}(r))= p}} \textit{validF}_{r,i} \times \prod_j |g_{i-1}(\textit{pred}(\textit{body}(r), j))| \nonumber
\end{align}
\begin{align}
\label{eq:delta_gi_final_a}
    \Delta |g_i| =& \sum_{p \in P_{\Pi}} \textit{newF}_{p,i} \times \textit{uniqueF}_{p,i} \sum_{\substack{r \in \Pi_{\textit{rules}} \\ \textit{pred}(\textit{head}(r))= p}} \textit{validF}_{r,i} \prod_j |g_{i-1}(\textit{pred}(\textit{body}(r), j))|
\end{align}

\noindent
Considering the maximum value for all three fractions:
\begin{equation*}
    \Delta |g_i| \leq \sum_{p \in P_{\Pi}}  ~ \sum_{\substack{r \in \Pi_{\textit{rules}} \\ \textit{pred}(\textit{head}(r))= p}}~ \prod_j |g_{i-1}(\textit{pred}(\textit{body}(r), j))|
\end{equation*}
which further simplifies to:
\begin{equation}
    \label{eq:delta_gi_bound_a}
    \Delta |g_i| \leq \sum_{r \in \Pi_{\textit{rules}}} ~ \prod_j |g_{i-1}(\textit{pred}(\textit{body}(r), j))|
\end{equation}
\end{proof}

\section{Reproducibility guide \label{app:reproducibility}}

All experiments carried out to obtain the results shown in Section \ref{sec:experiments} use the PyReason framework with specific configurations. Geospatial experiments use the PyReason configuration settings shown in Table~\ref{tab:pyreason_config}. The parameter $ad\_hoc\_grounding$ acts as the key change between the Skolemization-enabled approach and the traditional full grounding approach. For knowledge graph completion experiments, we use the same PyReason configuration settings as presented in Table~\ref{tab:pyreason_config} with the exception that we only use $ad\_hoc\_grounding = False$ since this refers to node-level Skolemization which we do not need for knowledge graph completion experiments. Thus, the $resolution\_levels$ parameter is not applicable for Knowledge graph completion tasks. Additionally, knowledge graph completion experiments used AnyBURL for learning rules with rule snapshot at twenty minutes for all datasets. The following are the commands to execute Python scripts for geospatial and knowledge graph completion experiments.

\begin{table}[tb]
\caption{PyReason configuration settings for geospatial experiments.}
\label{tab:pyreason_config}
\begin{tabular}{lll}
\toprule
Setting & Value & Description\\
\midrule
\textit{verbose} & True & Print all info to screen during reasoning.\\
\textit{atom\_trace} & False & Groundings untracked, reducing overhead for large graphs.\\
\textit{persistent} & False & Interpretations are not reset to bottom of the lattice after every timestep.\\
\textit{static\_graph\_facts} & False & Interpretations in the input graph are allowed to change during reasoning.\\
\textit{parallel\_computing} & True & Use parallel processing.\\
\textit{ad\_hoc\_grounding} & True & Use skolemization.\\
\textit{resolution\_levels} & {2,3,4,5,6,7,8,9} & Grid Size $= (2^{resolution\_levels})^2$.\\

\bottomrule
\end{tabular}
\end{table}

\subsection{Skolemization approach}
\begin{verbatim}
    python3 play_random_game.py --resolution 5 --field_soldiers_per_team 10 
  --border_soldiers_per_team 10 --actions_per_soldier 100 --ad_hoc
\end{verbatim}

\subsection{Full grounding approach}
\begin{verbatim}
    python3 play_random_game.py --resolution 5 --field_soldiers_per_team 10 
  --border_soldiers_per_team 10 --actions_per_soldier 100
\end{verbatim}

\subsection{Knowledge Graph completion}
\begin{verbatim}
    python3 anyBurl_multistep_multirule.py -rf yago_1200_99_100_ann 
    -s 1 -e 1000 -ts 10 -g anyBurl_graphs/YAGO3-10/knowledge_graph_train.graphml
\end{verbatim}

\section{Example pipeline using PyReason}

\subsection{Knowledge Graph completion} 
Both the train and test set of any knowledge graph are set of triples. We show the example triple from the train set of the YAGO03-10 dataset that we need for our example as follows.

\begin{verbatim}
Chelsy_Davy	playsFor	Panathinaikos_F.C.
\end{verbatim}

The next step is to convert the training triples into graphs that can be inputted into PyReason. An example of such a graph representation is shown in Figure~\ref{fig:ex_kgc}.

\begin{figure}[bt]
    \begin{center}
        \includegraphics[width=0.6\linewidth]{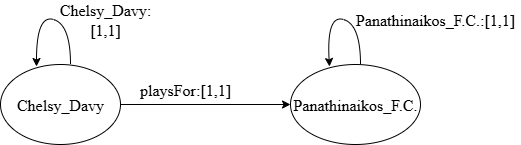}
    \end{center}
    \caption{Graph representation of example triple to work with PyReason.}
    \label{fig:ex_kgc}
\end{figure}

We then convert the learned AnyBURL rules into PyReason rules. The sample AnyBURL rule used in our example is as follows:

\begin{verbatim}
    0.934	isAffiliatedTo(X,Panathinaikos_F.C.) <= playsFor(X,Panathinaikos_F.C.)
\end{verbatim}

The first number is the confidence value of the rule, which becomes the lower bound in the converted rule. Further, partially ground rules are made fully non-grounded to make grounding faster. The converted non-ground PyReason rule is then,

\begin{verbatim}
    isAffiliatedTo(X,X_0):[0.934,1] <-1 playsFor(X,X_0):[0.1,1],
                                        Panathinaikos_F.C.(X_0):[1,1]
\end{verbatim}

After inference, an additional edge is added to the graph. The updated graph after inference is shown in Figure~\ref{fig:ex_kgc_updated}.
\begin{figure}[bt]
    \begin{center}
        \includegraphics[width=0.6\linewidth]{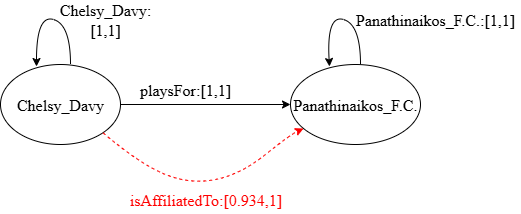}
    \end{center}
    \caption{Graph representation of example triple after Inference.}
    \label{fig:ex_kgc_updated}
\end{figure}

\subsection{Geospatial Application}

\begin{figure}[tb]
    \begin{center}
        \includegraphics[width=0.35\linewidth]{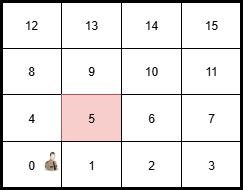}
    \end{center}
    \caption{Example grid for agent movement in Geospatial application. In this case, note that cell \#5  has an obstacle, hence the agent is not allowed to move there.}
    \label{fig:ex_geo_grid}
\end{figure}

For the geospatial skolemization experiment, we convert a grid map shown in Figure~\ref{fig:ex_geo_grid} into a graph structure shown in Figure~\ref{fig:geo_graphs} (a). Note that here we consider a grid map with the least grid size, i.e., 16, with the initial graph passed to PyReason in the case of Skolemization. For the non-Skolemization case, the complete graph with all grid points needs to be passed to PyReason. The following two non-ground rules are fired when the policy wants the agent to move in the right direction on the border of the grid:

\begin{verbatim}    
atLoc(AGENT, NEWLOC):[1,1] <-2 moveRight(AGENT):[1,1], borderAgent(AGENT):[1,1],
                        atLoc(AGENT, OLDLOC):[1,1], right(OLDLOC, NEWLOC):[1,1],
                        borderLoc(NEWLOC):[1,1], blocked(NEWLOC):[0,0]

atLoc(AGENT, OLDLOC):[0,0] <-2 moveRight(AGENT):[1,1], borderAgent(AGENT):[1,1],
                        atLoc(AGENT, OLDLOC):[1,1], right(OLDLOC, NEWLOC):[1,1],
                        borderLoc(NEWLOC):[1,1], blocked(NEWLOC):[0,0]
\end{verbatim}

Figure~\ref{fig:geo_graphs} (b) shows the resulting graph after two timesteps of reasoning.

\begin{figure}[hbt]
    \begin{center}
        \begin{subfigure}[t]{0.45\columnwidth}
            \vskip 0pt
            \includegraphics[width=\linewidth]{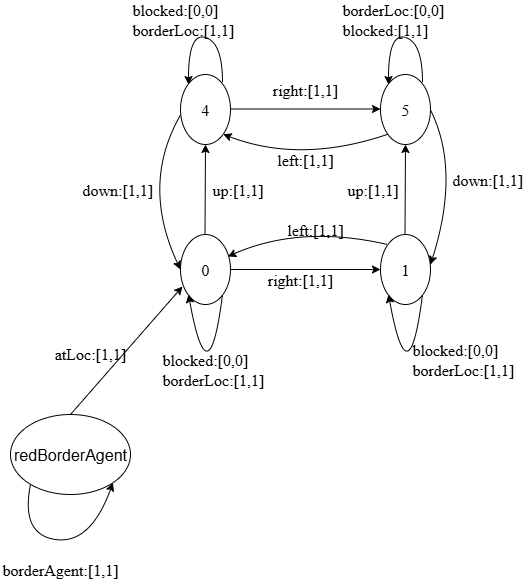}
            \caption{}
        \end{subfigure}
        \begin{subfigure}[t]{0.45\columnwidth}
            \vskip 0pt
            \includegraphics[width=\linewidth]{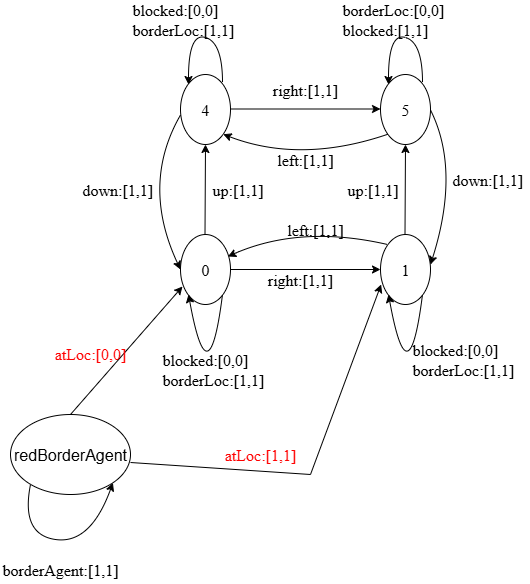}
            \caption{}
        \end{subfigure}
    \end{center}
    \caption{(a) Initial PyReason graph representation of grid and agent location. (b) Updated PyReason graph after two timesteps of inference.}
    \label{fig:geo_graphs}
\end{figure}

\end{document}